\setlist[enumerate]{itemsep=0mm}
\newtheoremstyle{theoremsansserif} 
    {\topsep}                    
    {\topsep}                    
    {\itshape}                   
    {}                           
    {\sffamily\bfseries }        
    {.}                          
    {.5em}                       
    {}  
\theoremstyle{theoremsansserif}
\newtheorem{lemma}{Lemma}
\newtheorem{corollary}{Corollary}
\newtheorem{proposition}{Proposition}
\newtheorem{remark}{Remark}
\newtheorem{definition}{Definition}
\newtheorem*{example*}{Example}
\newtheorem{theorem}{Theorem}
\newcommand{\R}{\mathbb{R}}
\newcommand{\D}{\tilde{D}}
\newcommand{\PP}{\mathcal{P}}
\renewcommand{\S}{\mathcal{S}}
\newcommand{\X}{\mathcal{X}}
\newcommand{\E}{\mathbb{E}}
\newcommand{\tA}{\zeta_A}
\newcommand{\T}{\mathscr{T}}
\DeclareMathOperator{\argmin}{argmin}
\DeclareMathOperator{\argmax}{argmax}
\DeclareMathOperator{\opt}{OPT}
\begin{document}
\title{\sf\textbf{Online Resource Allocation under Horizon Uncertainty}}
\author{
\sf Santiago R. Balseiro\\
\sf Columbia University \\
\small\texttt{srb2155@columbia.edu}
\and
\sf Christian Kroer \\
\sf Columbia University\\
\small\texttt{christian.kroer@columbia.edu}
\and
\sf Rachitesh Kumar\\
\sf Columbia University \\
\small\texttt{rk3068@columbia.edu}}

\date{\vspace{1em}
\sf This version: \today
}

\maketitle

\allowdisplaybreaks

\begin{abstract}
	We study stochastic online resource allocation: a decision maker needs to allocate limited resources to stochastically-generated sequentially-arriving requests in order to maximize reward. At each time step, requests are drawn independently from a distribution that is unknown to the decision maker. Online resource allocation and its special cases have been studied extensively in the past, but prior results crucially and universally rely on the strong assumption that the total number of requests (the horizon) is known to the decision maker in advance. In many applications, such as revenue management and online advertising, the number of requests can vary widely because of fluctuations in demand or user traffic intensity. In this work, we develop online algorithms that are robust to horizon uncertainty. In sharp contrast to the known-horizon setting, no algorithm can achieve even a constant asymptotic competitive ratio that is independent of the horizon uncertainty. We introduce a novel generalization of dual mirror descent which allows the decision maker to specify a schedule of time-varying target consumption rates, and prove corresponding performance guarantees. We go on to give a fast algorithm for computing a schedule of target consumption rates that leads to near-optimal performance in the unknown-horizon setting.
	In particular, our competitive ratio attains the optimal rate of growth (up to logarithmic factors) as the horizon uncertainty grows large. Finally, we also provide a way to incorporate machine-learned predictions about the horizon which interpolates between the known and unknown horizon settings.
\end{abstract}


\setstretch{1.5}

\section{Introduction}

Online resource allocation is a general framework that includes as special cases various fundamental problems like network revenue management \citep{talluri2004theory}, online advertising \citep{mehta2013online}, online linear/convex programming \citep{agrawal2014fast,agrawal2014dynamic, devanur2011near, kesselheim2014primal}, bidding in repeated auctions \citep{balseiro2019learning}, and assortment optimization under inventory constraints~\citep{golrezaei2014real}. It captures any setting in which a decision-maker endowed with a limited amount of resources faces sequentially-arriving requests, each of which consumes a certain amount of resources and generates a reward. At each time step, the decision maker observes the request and then takes an action, with the overarching aim of maximizing cumulative reward subject to resource constraints. In this work, we focus on the setting in which requests are generated by some stationary distribution unknown to the decision maker. We impose very mild assumptions on this distribution, and in particular do not require the requests to satisfy common convexity assumptions.


To the best of our knowledge, all of the previous works on stochastic online resource allocation (with the exception of the concurrent work of \citealt{brubach2019online}, \citealt{bai2022fluid}, and \citealt{aouad2022nonparametric}) assume that the total number of requests (horizon) is known to the decision maker. Importantly, this assumption is vital for previous algorithms and performance guarantees because it allows them to compute a \emph{per-period resource budget} (total amount of resources divided by total number of requests) and use it as the target consumption in each time step. However, if the total number of requests is not known to the decision maker, one can no longer compute this quantity and these previous works fail to offer any guidance.

Juxtapose this with a world in which viral trends are becoming ever more common, causing online advertising platforms, retailers and service providers to routinely experience traffic spikes. These spikes inject uncertainty into the system and make it difficult to accurately predict the total number of requests that will arrive. In fact, these spikes present lucrative opportunities for the advertiser/retailer, which makes addressing the uncertainty even more pertinent~\citep{esfandiari2015online}. Moreover, it is usually difficult to predict these spikes, e.g. a news story breaks about COVID-travel bans being lifted, which results in a sudden and large uptick in the number of advertising opportunities for an airline. In fact, search-traffic spikes might be so large that they cause websites to crash\footnote{\url{https://developers.google.com/search/blog/2012/02/preparing-your-site-for-traffic-spike}}. This motivates us to relax the previously-ubiquitous known-horizon assumption and address this omission in the literature by developing algorithms which are robust to horizon uncertainty. We use as our benchmark the hindsight optimal allocation that can be computed with full knowledge of all requests and the time horizon.

\subsection{Main Contributions}

\textbf{Impossibility Results.} If no assumptions are made about the horizon, it has been shown in the context of matching \citep{brubach2019online} and prophet inequalities \citep{alijani2020predict} that no algorithm can guarantee a positive constant fraction of the hindsight optimal reward. Contrast this with the online convex optimization literature, where one can easily obtain a low-regret algorithm (i.e., asymptotic competitive ratio of one) for the unknown horizon setting  by applying the doubling trick to a low-regret algorithm for the known-horizon setting \citep{shalev2012online, hazan2016introduction}\footnote{Specifically, the doubling trick involves repeatedly running the low-regret algorithm for the known-horizon setting with increasing lengths of horizons. This is possible because the online convex optimization problem decouples across time into independent subproblems. Such decomposition is not possible in our problem because of the resource constraints: resources that have been consumed in the past restrict future actions of the algorithm.}. In this paper, we prove a new impossibility result for the setting when the horizon $T$ is constrained to lie in an uncertainty window $[\tau_1,\tau_2]$ that is known to the decision maker, which is the mildest possible assumption that renders the problem interesting. In the uncertainty-window setting, we show that no online algorithm can achieve a greater than $\tilde O\left(\ln(\tau_2/\tau_1)^{-1} \right)$ fraction of the hindsight optimal reward (Theorem~\ref{thm:upper-bound}). This upper bound holds even when (i) there is only 1 type of resource, (ii) the decision maker receives the same request at each time step, (iii) this request is known to the decision maker ahead of time, (iv) the request has a smooth concave reward function and linear resource consumption, (v) $\tau_1$ is arbitrarily large, and (vi) the initial resource endowment $B = \Theta(\tau_1)$ scales with the horizon. In particular, unlike the known-horizon setting, vanishing regret is impossible to achieve under horizon uncertainty, leading us to focus on developing algorithms with a good asymptotic competitive ratio (fraction of the hindsight optimal reward).

\begin{figure}[t!]
    \centering
    \includegraphics[width = 0.4\textwidth]{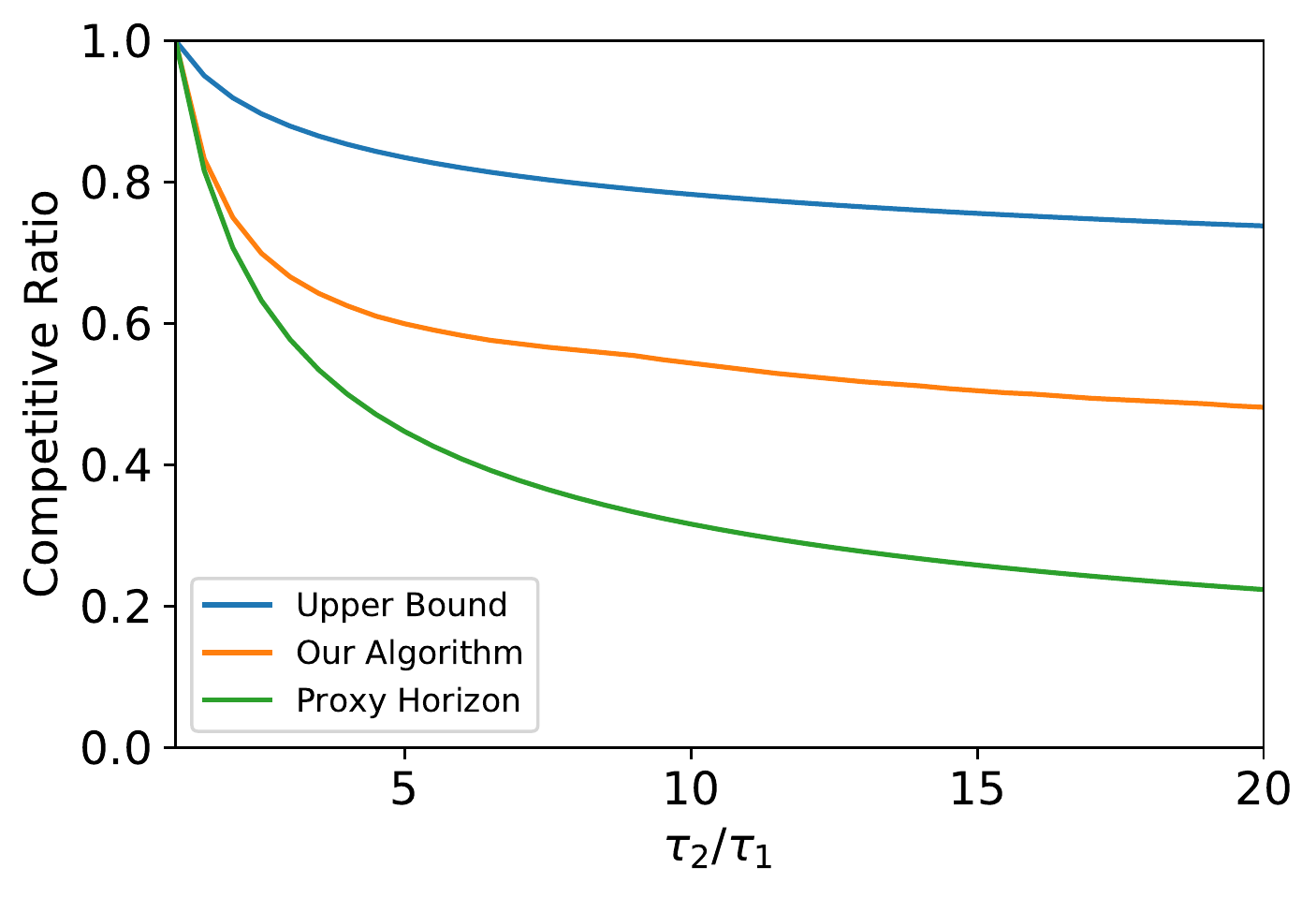}
    \caption{Plot with (i) our upper bound on the best-possible competitive ratio (Theorem~\ref{thm:upper-bound}) which scales as $\tilde{O}\left(\ln(\tau_2/\tau_1)^{-1} \right)$, (ii) the (asymptotic) competitive ratio of our algorithm which scales as $\Omega\left(\ln(\tau_2/\tau_1)^{-1} \right)$ (Algorithm~\ref{alg:dual-descent} with target sequence from Algorithm~\ref{alg:opt-target-sequence}), and (iii) an upper bound on the competitive ratio of algorithms that are optimal for the known-horizon setting when used with some proxy horizon $T^* \in [\tau_1,\tau_2]$ (Section~\ref{sec:previous-alg}), which scales as $\left(\sqrt{\tau_2/\tau_1} \right)^{-1}$. Even for small values of $\tau_2/\tau_1$, our algorithm significantly outperforms previous ones.}
    \label{fig:us-vs-previous}
\end{figure}

\textbf{Variable Target Dual Mirror Descent.} Dual mirror descent is a natural algorithm for the known-horizon setting introduced by \citet{balseiro2020best}, who build on a long line of primal-dual algorithms for online allocation problems \citep{agrawal2014fast, devanur2011near, gupta2016experts}. It maintains a price (i.e., dual variable) for each resource and then dynamically updates them with the goal of consuming the per-period resource budget at each step---if the resource is being over-consumed, increase its price; and vice-versa. As stated earlier, this approach fails if the horizon is not known because the per-period budget cannot be computed ahead of time. A natural approach to handle horizon uncertainty is to use dual mirror descent with some proxy horizon $T^* \in [\tau_1,\tau_2]$ in the hopes of getting good performance for all $T \in [\tau_1,\tau_2]$. Unfortunately, as we show in Section~\ref{sec:previous-alg}, this approach can be extremely suboptimal, not just for dual mirror descent but for any algorithm which is optimal for the known-horizon setting. Thus, the unknown-horizon setting calls for new algorithms. Our main insight is that, even though one cannot compute the per-period resource budget and target its consumption, it is possible to compute a time-varying sequence of target consumptions which, if consumed at those rates, perform well no matter what the horizon turns out to be. To achieve this, we develop \emph{Variable Target Dual Mirror Descent} (Algorithm~\ref{alg:dual-descent}), which takes a sequence of target consumptions as input and dynamically updates the prices to hit those targets. One of our primary technical contributions is generalizing the analysis of dual mirror descent to develop a fundamental bound that allows for general target consumption sequences. We leverage this bound to show that there exists a simple time-varying target consumption sequence which can be described in closed form and achieves a near-optimal $\Omega \left(\ln(\tau_2/\tau_1)^{-1} \right)$ asymptotic competitive ratio when deployed with Algorithm~\ref{alg:dual-descent}, matching the upper bound up to logarithmic factors.

\textbf{Optimizing the Target Sequence and Incorporating Predictions.} Variable Target Dual Mirror Descent reduces the complex problem of finding an algorithm which maximizes the competitive ratio to the much simpler problem of finding the optimal target consumption sequence. We develop an algorithm to solve the latter efficiently (Algorithm~\ref{alg:opt-target-sequence}), leading to substantial gains over previous algorithms even for small values of $\tau_2/\tau_1$ (see Figure~\ref{fig:us-vs-previous}). Importantly, Algorithm~\ref{alg:opt-target-sequence} does not require one to solve computationally-expensive linear programs (LPs), which can be desirable in time-sensitive applications.  We then use the Algorithms-with-Predictions framework~\citep{mitzenmacher2020algorithms} to study incorporating (potentially inaccurate) predictions about the horizon with the goal of performing well if the prediction comes true, while also ensuring a good competitive ratio no matter what the horizon turns out to be. We show that the problem of computing the optimal target consumption sequence for the goal of optimally incorporating predictions can also be solved efficiently using Algorithm~\ref{alg:opt-target-sequence}. Our algorithm allows the decision maker to account for the level of confidence she has in the predictions, and smoothly interpolate between the known-horizon and uncertainty-window settings.

\subsection{Additional Related Work}\label{related-work}

Online resource allocation is an extremely general framework that captures a wide range of problems, which together have received significant attention in the past. Here, we focus on works that present results which hold at the level of generality of online linear packing or higher, and refer the reader to \citet{mehta2007adwords} and \citet{balseiro2021survey} for a discussion of other special sub-problems like Adwords, network revenue management, repeated bidding in auctions, online assortment optimization etc.

Online linear packing is a special case of online resource allocation in which the reward and consumption functions are linear. It has been studied in the random permutation model, which assumes that the requests arrive in a random order and is slightly more general than the i.i.d. stochastic arrival assumption. Most of the results on online linear packing described here hold for this more general model. \citet{devanur2009adwords} and \citet{feldman2010online} present an algorithm that uses the initial requests to learn the dual variables and then subsequently uses these to make accept/reject decisions. It can be shown to have a regret of $O(T^{2/3})$. \citet{agrawal2014dynamic} improved this to $O(\sqrt{T})$ regret by repeatedly re-learning the dual variables by solving LPs at intervals of geometrically-increasing lengths. \citet{devanur2011near} and \citet{kesselheim2014primal} provide algorithms with $O(\sqrt{T})$ regret but better dependence of the constants on the number of resources. \citet{li2020simple} give a dual-stochastic-gradient-based algorithm which achieves $O(\sqrt{T})$ regret without solving computationally-expensive LPs, albeit with a slightly worse dependence on the number of resources. For the special case of network revenue management with finitely many types where the arrival probability of each type and the per-period budget are apriori known to the decision maker, \citet{freund2019good} provided a primal resolving algorithm which attains constant regret when the exact horizon $T$ is revealed $T^{0.5 + \epsilon}$ time periods before the final request. Importantly, if only the total budget is known, then their algorithm needs to know the horizon to compute the per-period budget.

\cite{agrawal2014fast} study online resource allocation with concave rewards and a convex constraint set. They describe a dual-based algorithm which achieves $O(\sqrt{T})$ regret. The work that is most closely related to ours is \citet{balseiro2020best}. It studies the fully-general online resource allocation under a variety of assumptions on the request-generation process including stochastic i.i.d. They present a dual mirror descent algorithm and show that it is optimal for a variety of assumptions on the requests. In particular, their algorithm achieves a $O(\sqrt{T})$ regret when requests are stochastic. In light of the lower bound of $\Omega(\sqrt{T})$ proven by \citet{arlotto2019uniformly} for the multi-secretary problem, which is a special case of online resource allocation, their result is tight in its dependence on $\sqrt{T}$. \citet{jiang2020online} study a more general model that allows for time-varying request distributions. They propose a dual gradient descent algorithm which works with target consumptions computed using prior information about future request distributions. When the prior information is close to being accurate (measured using Wasserstein distance), their algorithm attains $O(\sqrt{T})$ regret against the hindsight optimal. Their algorithm reduces to that of \citet{balseiro2020best} with the Euclidean regularizer when the request distributions and priors are identical. Our algorithm (Algorithm~\ref{alg:dual-descent}) generalizes their algorithm to allow for arbitrary regularizers. Importantly, their guarantees for dual gradient descent with time-varying target consumptions only hold when the prior is close to being accurate, and not for general target consumption sequences like the ones we develop (Theorem~\ref{thm:regret}), which lie at the heart of our results. Crucially, all of the aforementioned results rely on knowledge of the horizon (total number of requests) $T$, and no longer hold in its absence (see Theorem~\ref{thm:upper-bound}). Moreover, naive attempts to extend them to the unknown horizon setting lead to sub-optimal algorithms (see Section~\ref{sec:previous-alg}).


There is also a line of work studying online allocation problems when requests are adversarially chosen. Naturally, the fully-adversarial model subsumes our input model, in which requests are drawn i.i.d.~from an unknown distribution and the horizon is uncertain. Therefore, guarantees for adversarial algorithms carry over to our setting. We remark, however, that it is not possible to obtain bounded competitive ratio for the general online allocation problems we study in this paper (see, e.g., \citealt{feldman2009free}). Notable exceptions are online matching~\citep{karp1990optimal}, the AdWords problem~\citep{mehta2007adwords}, or personalized assortment optimization~\citep{golrezaei2014real}, which are linear problems in which rewards are proportional to resource consumption. When rewards are not proportional to resource consumption, there is a stream of literature studying algorithms with parametric competitive ratios. These competitive ratios either depend on the range of rewards (see, e.g., \citealt{ball2009toward,ma2020algorithms}) or the ratio of budget to resource consumption (see, e.g., \citealt{balseiro2020best}). Our work leverages the fact that requests are drawn i.i.d.~from an unknown distribution to derive stronger competitive ratios that only depend on horizon uncertainty (and that are independent of all other problem parameters). In particular, our competitive ratios are bounded, while the parametric competitive ratios of the aforementioned papers can be made arbitrarily large if an adversary can choose the distribution of requests.

Finally, a few very recent papers warrant attention, all of which assume that the distribution of the horizon is known in advance. \citet{brubach2019online} study a generalization of online bipartite matching which accounts for ranked preferences over the offline vertices under a variety of input models. They show that a constant competitive ratio cannot be attained under stationary stochastic input when the horizon is completely unknown and use it to justify the known-horizon assumption. Our impossibility result (Theorem~\ref{thm:upper-bound}) establishes a parametrized upper bound on the competitive ratio in terms of the uncertainty $\tau_2/\tau_1$ and implies their result as a special case when $\tau_2/\tau_1 \to \infty$. \citet{alijani2020predict} study the multi-unit prophet-inequality problem in which the resource is perishable, with each unit of the resource exiting the system independently at some time whose distribution is known to the decision maker. When there is one unit of the resource, their model captures horizon uncertainty in the prophet-inequality problem, which is a special case of online resource allocation. Importantly, when there is more than one unit, our models are incomparable. For the single-unit special case, they prove a parameterized upper bound of $\tilde O\left( \ln(\tau_2/\tau_1)^{-1}\right)$ on the competitive ratio. In contrast, our upper bound (Theorem~\ref{thm:upper-bound}) holds for the more general regime where the initial resource endowment (number of units of the resource) scales linearly with the horizon and the action space is continuous. This is crucial since the performance guarantees of algorithms for online resource allocation with known-horizon often only hold in this regime \citep{balseiro2020best, mehta2007adwords, talluri2004theory}, thereby making the single-unit upper bound inapplicable.

\citet{bai2022fluid} develop a fluid approximation to the dynamic-programming solution for network revenue management when both the distribution of the request and the horizon are completely known. They show that the asymptotically-tight fluid approximation should attempt to respect the resource constraint for all possible horizon values and not in expectation over the horizon. \citet{aouad2022nonparametric} consider a model for network revenue management in which the distribution of the horizon is known and each type of request follows an adversarial or random-order arrival pattern. They also show that the fluid LP relaxation based on the expected value of the horizon can be arbitrarily bad and develop tighter LP relaxations. We do not assume that the type of requests, the distribution of requests or the distribution of the horizon are known ahead of time, and use the hindsight optimal allocation as the benchmark, making our results incomparable even for the special case of network revenue management.

\section{Model}\label{sec:model}

\paragraph{Notation} We use $\R_+$ to denote the set of  non-negative real numbers and $\R_{++}$ to denote the set of positive real numbers. We use $a^+$ to denote $\max\{a,0\}$. For a vector $v \in \R^m$ and a scalar $a \in \R \setminus \{0\}$, $v/a$ denotes the scalar multiplication of $v$ by $1/a$. For vectors $u,v \in \R^m$ and a relation $R \in \{\geq , >, \leq, <\}$, we write $u\ R\ v$ whenever the relation holds component wise: $u_i\ R\ v_i$ for all $i \in [m]$.

We consider a general online resource allocation problem with $m$ resources, in which requests arrive sequentially. At time $t$, a request $\gamma_t = (f_t, b_t, \X_t)$ arrives, which is composed of a reward function $f_t: \X_t \to \R_+$, a resource consumption function $b_t: \X_t \to \R^m_{+}$ and a compact action set $\X_t \subset \R_+^d$. We assume that $0 \in \X_t$ and $b_t(0) = 0$ for all $t$. This ensures that the decision maker has the option to not spend any resources at each time step. We make no assumptions about the convexity/concavity of either $f_t$, $b_t$ or $\X_t$.

Let $\S$ represent the set of all possible requests. We make standard regularity assumptions: there exist constants $\bar f, \bar b \in \R_+$ such that, for every request $\gamma = (f, b, \X) \in \S$, we have $|f(x)| \leq \bar f$ and $\|b(x)\|_\infty \leq \bar b$ for all $x \in \X$. Furthermore, we assume that the requests are drawn i.i.d. from some distribution $\PP$ over $\S$, both of which are not assumed to be known to the decision maker. The decision maker has a known initial resource endowment (or budget) of $B = (B_1, \dots, B_m) \in \R_{++}^m$, where $B_i$ denotes the initial amount of resource $i$ available to the decision maker. We will assume that $B_i \geq 2 \bar b$ for all $i \in [m]$.

Let $T$ denote the total number of requests that will arrive (also called the horizon). We will use $\rho_T = B/T$ to denote the per-period resource endowment that is available to the decision maker. In contrast to previous work, we do not assume that $T$ (or its distribution) is known to the decision maker. Looking ahead, this uncertainty is what makes our problem much harder than vanilla online resource allocation where the horizon is known, as evidenced by the fact that no algorithm can attain even a positive competitive ratio when nothing is known about the horizon $T$ (see Theorem~\ref{thm:upper-bound}), which is a far-cry from the no-regret property exhibited by algorithms for the known-horizon setting.

At each time step $t \leq T$, the following sequence of events take place: (i) A request $\gamma_t = (f_t, b_t, \X_t)$ arrives and is observed by the decision maker; (ii) The decision maker selects an action $x_t \in \X_t$ from the action set based on the information seen so far; (iii) The decision maker receives a reward of $f_t(x_t)$ and the request consumes $b_t(x_t)$ resources. The goal of the decision maker is to take actions that maximize her total reward while keeping the total consumption of resources below the initial resource endowment. More concretely, an online algorithm (for the decision maker) chooses an action $x_t \in \X_t$ at each time step $t \leq T$ based on the current request $\gamma_t = (f_t, b_t, \X_t)$ and the history observed so far $\{\gamma_s, x_s \}_{s=1}^t$ such that the resource constraints $\sum_{t=1}^T b_t(x_t) \leq B$ are satisfied almost surely w.r.t. $\vec{\gamma} \sim \PP^T$. Our results continue to hold even if the actions $\{x_t\}_t$ are randomized, but we work with deterministic actions for ease of exposition. Since we assume that $T$ is not known to the decision maker, the actions of the online algorithm cannot depend on $T$. The total reward of algorithm $A$ on a sequence of requests $\vec{\gamma} = (\gamma_1, \dots, \gamma_T)$ is denoted by $R(A|T, \vec{\gamma}) = \sum_{t=1}^T f_t(x_t)$.

We measure the performance of an algorithm for the decision maker by comparing it to the hindsight optimal solution computed with access to all the requests and the value of $T$.  More concretely, for a horizon $T$ and a sequence of requests $\vec{\gamma} = (\gamma_1, \dots, \gamma_T)$, the hindsight-optimal reward $\opt(T, \vec{\gamma})$ is defined as the optimal value of the following hindsight optimization problem:
\begin{align}\label{eq:opt}
  \opt(T, \vec{\gamma}) \coloneqq\max_{x \in \prod_t \X_t}\ \sum_{t=1}^T f_t(x_t) \quad \text{subject to} \quad \sum_{t=1}^T b_t(x_t) \leq B\,.
\end{align}
We define the performance ratio of an algorithm $A$ for horizon $T$ and request distribution $\PP$ as
\begin{align*}
  c(A|T,\PP) \coloneqq \frac{\E_{\vec\gamma \sim \PP^T}\left[R(A|T, \vec \gamma) \right]}{\E_{\vec\gamma \sim \PP^T}\left[\opt(T, \vec\gamma)\right]} \,.
\end{align*}

Throughout this paper, we will assume that the horizon $T$ belongs to an uncertainty window $[\tau_1, \tau_2]$ which is known to the decision maker. This assumption is necessary because it is impossible to attain non-trivial performance guarantees in the absence of an upper bound on the horizon (see Theorem~\ref{thm:upper-bound}). Moreover, we will assume that there exists a constant $\kappa > 0$ such that $\E[\opt(T, \vec\gamma)] \geq \kappa \cdot T$ for all $T \in [\tau_1, \tau_2]$. The assumption that $\E[\opt(T, \vec\gamma)] = \Omega(T)$ is common in the literature on online resource allocation with bandit feedback (see \citealt{slivkins2019introduction} for a survey). A mild sufficient condition for this assumption to hold is the existence of some mapping from request to actions which achieves positive expected reward: $\exists$ $x:\S \to \X$ such that $\E_{\gamma \sim \PP}[f(x(\gamma))] > 0$.\footnote{To see how this, define $\psi \coloneqq \E_{\gamma \sim \PP}[f(x(\gamma))] > 0$ and set $x_t = x(\gamma_t)$ starting from $t=1$ till some resource runs out. Since $\|b(x)\|_\infty \leq \bar b$, resource $j$ will last at least $\lfloor B_j/\bar b \rfloor$ time steps, which in combination with $B \geq 2 \bar b$ implies
\begin{align*}
  \E_{\vec\gamma} [\opt(T,\vec\gamma)] \geq \min_{j \in [m]} \lfloor B_j/\bar b \rfloor \cdot \E_{\gamma \sim \PP}[f(x(\gamma))] \geq \min_{j \in [m]} \left( \frac{B_j}{\bar b} - 1 \right) \psi \geq \min_{j \in [m]} \frac{B_j \psi}{2\bar b} =  \min_{j \in [m]} \frac{\rho_{T,j} \psi}{2\bar b} \cdot T \,.
\end{align*}
Since $\rho_T \geq \rho_{\tau_2}$ for all $T \leq \tau_2$, setting $\kappa = \min_j \rho_{\tau_2,j} \psi/2\bar b$ yields $\E[\opt(T, \vec\gamma)] \geq \kappa \cdot T$ for all $T \in [\tau_1,\tau_2]$.
}

Next, we describe the models of horizon uncertainty we consider in this paper.

\textbf{Uncertainty Window.} Here, we assume that the decision maker is not aware of the exact value of $T$ and it can lie anywhere in the known uncertainty window $[\tau_1, \tau_2]$. This approach is motivated by the literature on robust optimization, where it is often assumed that the exact value of the parameter is unknown but it is constrained to belong to some known uncertainty set~\citep{ben2002robust}. Our goal here is to capture settings with large horizon uncertainty where it is difficult to predict the total number of requests with high accuracy. In such settings, it is often easier to generate confidence intervals than precise estimates. For this model of horizon uncertainty, we measure the performance of an algorithm $A$ by its competitive ratio $c(A)$, which we define as
\begin{align*}
  c(A) \coloneqq \inf_{\PP} \min_{T \in [\tau_1,\tau_2]} c(A|T, \PP)  \,.
\end{align*}
We also say that an algorithm $A$ is $c(A)$-competitive if it has a competitive ratio of $c(A)$. The competitive ratio of our algorithm degrades at a near-optimal logarithmic rate as $\tau_2/\tau_1$ grows large, and consequently yields good performance even for conservative estimates of the uncertainty window.


\textbf{Algorithms with Predictions.} We also consider a model of horizon uncertainty, inspired by the Algorithms-with-Predictions framework, which interpolates between the previously studied known-horizon model and the uncertainty-window model described above. This framework assumes that the decision maker has access to a prediction $T_P \in [\tau_1, \tau_2]$ about the horizon but no assumptions are made about the accuracy of this prediction. In particular, the goal is to develop algorithms that perform well when the prediction is accurate (consistency) while maintaining worst-case guarantees (competitiveness). For this setting, our algorithm allows the decision maker to smoothly trade-off consistency and competitiveness depending on her preferences.


\subsection{Why do we need a new algorithm?}\label{sec:previous-alg}

As discussed earlier, online resource allocation and its special cases have been extensively studied in the literature. Perhaps one of the algorithms from the literature continues to perform well under horizon uncertainty? We show below that previously-studied algorithms can be exponentially worse than our algorithm. Consider an uncertainty window $[\tau_1, \tau_2]$, where $\tau_1, \tau_2 \in \mathbb{Z}_+$. Consider an online algorithm $A$ which takes as input the horizon and is optimal (defined precisely later) for stochastic online resource allocation when the horizon is known. Suppose we pick some horizon $T^* \in [\tau_1, \tau_2]$ before the first request arrives and run algorithm $A$ with $T^*$ in the hope of getting good performance for all horizons $T \in [\tau_1, \tau_2]$. As we show next, this approach performs much worse than our algorithm even when there is only one resource ($m=1$), the same request arrives at all time steps, and the decision-maker knows this to be the case.

Let  $B$ be the initial resource endowment, $\X_t = [0,\max\{1,B/\tau_1\}]$ be the action set for all $t \in [\tau_2]$, and $\PP_r$ be the deterministic distribution that always serves the request $(f, b)$ where $f(x) = x^r$ for a fixed $r \in (0,1)$ and $b(x) = x$. Observe that all the requests are the same, the decision-maker knows this fact, and she takes her first action after observing the request. In particular, the decision maker completely knows the deterministic request after the first request arrives and before she takes her first action. Moreover, if she employs an algorithm for the known-horizon setting with horizon $T^*$ as the input and we have $T = T^*$, then the algorithm has as much information (about the request and the horizon) available before making its first decision as it would in hindsight. This motivates us to call an algorithm \textbf{optimal for the known-horizon setting} if it takes the same actions as the hindsight optimal $\opt(T^*, \vec\gamma)$ on this instance when $T = T^*$ and it is given horizon $T^*$ as the input. The dual-descent algorithm of \citet{balseiro2020best} (when appropriately initialized), and all of the primal methods based on solving the fluid approximation (e.g., \citealt{jasin2012re, agrawal2014dynamic} and \citealt{balseiro2021survey}) satisfy this definition of optimality. Let $A$ be such an optimal algorithm.

Consequently, if $\{x_t\}_t$ are the actions of algorithm $A$, then $\{x_t\}_t$ is an optimal solution to the hindsight-optimization problem $\opt(T^*, \vec\gamma)$ (see equation~\eqref{eq:opt}). In Lemma~\ref{lemma:r-opt}, we will show that the concavity of $f$ implies that $x^*_t = B/T^*$ for all $t \leq T^*$ and $x^*_t = 0$ for $t > T^*$ is the unique hindsight optimal solution of $\opt(T^*, \vec\gamma)$, which implies that $x_t = x_t^*$ for all $t \geq 1$. Now, recall that algorithm $A$ does not know the horizon $T$ and is non-anticipating. Consequently, it will take the actions $\{x_t\}_t$ no matter what $T$ turns out to be. This is because, if $T \leq T^*$, then it does not know that $T$ is different from $T^*$, and if $T > T^*$, then it has run out of budget by time step $T^*$.

The performance ratio of algorithm $A$ for $T = \tau_1$ is given by
\begin{align*}
  c(A|\tau_1, \PP_r) = \frac{\E_{\vec\gamma \sim \PP_r^{\tau_1}}\left[R(A|\tau_1, \vec \gamma) \right]}{\E_{\vec\gamma \sim \PP_r^{\tau_1}}\left[\opt(\tau_1, \vec\gamma)\right]} = \frac{(B/T^*)^r \cdot \tau_1}{(B/\tau_1)^r \cdot \tau_1} = \left( \frac{\tau_1}{T^*} \right)^r
\end{align*}
and for $T = \tau_2$ is given by
\begin{align*}
  c(A|\tau_2, \PP_r) = \frac{\E_{\vec\gamma \sim \PP_r^{\tau_2}}\left[R(A|\tau_2, \vec \gamma) \right]}{\E_{\vec\gamma \sim \PP_r^{\tau_2}}\left[\opt(\tau_2, \vec\gamma)\right]} = \frac{(B/T^*)^r \cdot T^*}{(B/\tau_2)^r \cdot \tau_2} = \left( \frac{T^*}{\tau_2} \right)^{1-r} \,.
\end{align*}
Finally, observe that:
\begin{itemize}
  \item If $T^* > \sqrt{\tau_1\tau_2}$, then $\inf_{r \in (0,1)} c(A|\tau_1, \PP_r) = \lim_{r \to 1} c(A|\tau_1,\PP_r) = \tau_1/T^* \leq \left(\sqrt{\tau_2/\tau_1} \right)^{-1}$.
  \item If $T^* \leq \sqrt{\tau_1\tau_2}$, then $\inf_{r \in (0,1)} c(A|\tau_2, \PP_r) = \lim_{r \to 0} c(A|\tau_2,\PP_r) = T^*/\tau_2 \leq \left( \sqrt{\tau_2/\tau_1} \right)^{-1}$.
\end{itemize}

Therefore, we get that the competitive ratio of algorithm $A$ is bounded above by $(\sqrt{\tau_2/\tau_1})^{-1}$ for all values of $\{\tau_1, \tau_2\}$. In stark contrast, if $\tau_1$ is large and $B = \Theta(\tau_1)$, we will show that our online algorithm achieves a competitive ratio greater than $(1 + \ln(\tau_2/\tau_1))^{-1}$, which is exponentially better than algorithm $A$. Even for small values of $\tau_2/\tau_1$, our algorithm significantly outperforms previous algorithms (see Figure~\ref{fig:us-vs-previous}).

We have shown that a proxy horizon does not allow us to use algorithms which are optimal for the known-horizon setting to obtain good performance in the face of horizon uncertainty. Perhaps one can use the Doubling Trick instead? The Doubling Trick involves running an algorithm designed for the known-horizon setting repeatedly on time-intervals of increasing lengths. More precisely, given an optimal (or low-regret) algorithm $A$ for the known horizon setting, run it separately on the intervals $[1, T^*], [T^*, 2T^*], \dots, [2^k T^*, 2^{k+1} T^*]$ for some $T^* \geq 1$. Unfortunately, as we alluded to in the Introduction, the Doubling Trick does not work for online resource allocation. This is because, unlike online convex optimization \citep{hazan2016introduction,shalev2012online} where the problem decouples and the regret from the different intervals is simply added together to get total regret, the online resource allocation problem has global resource constraints and does not decouple.

In particular, if we were to run an algorithm $A$ with low-regret in the known-horizon setting on the interval $[1,T^*]$, it will attempt to deplete all of the available resources by time $T^*$ (because unused resources have no value to $A$ after $T^*$), which in turn implies that we will not have sufficient resource capacity to even run algorithm $A$ on latter intervals $[T^*, 2T^*] , \dots, [2^k T^*, 2^{k+1} T^*]$. The crux of the problem is that the Doubling Trick does not take the resources capacities into account: since we only have a finite amount of resources, one cannot repeatedly run algorithm $A$ because it will consume the entire resource capacity on every run (if possible). Additionally, note that the benchmark in online resource allocation is the optimal solution in hindsight considering all requests till time $T$, which is very different from the sum of the benchmark optimal solutions in the intervals $[1, T^*], [T^*, 2T^*], \dots, [2^k T^*, 2^{k+1} T^*]$. One can potentially come up with sophisticated versions of the Doubling Trick that allocate the resource endowment between the intervals in non-trivial ways. But the aforementioned lack of decomposability of the benchmark across intervals means that analyzing such heuristics would be far from straightforward. In fact, one of our primary contributions is a general performance guarantee for dual mirror descent with arbitrary allocation of the resource endowment across time steps (Theorem~\ref{thm:regret}), which allows one to analyze such heuristics. Finally, in online convex optimization, the Doubling Trick allows one to convert an algorithm for the known-horizon setting into one for the unknown-horizon setting while maintaining the same asymptotic competitive ratio of 1. However, as we will show in Theorem~\ref{thm:upper-bound}, it is impossible to achieve the same competitive ratio in the known-horizon and unknown-horizon settings for the online resource allocation problem. Thus, the simple Doubling Trick cannot be applied to online resource allocation, necessitating the need for novel techniques beyond the ones developed for online convex optimization.

\section{The Algorithm}\label{sec:lower-bound}

In this section, we describe our dual-mirror-descent-based master algorithm. 
As the name suggests, this algorithm maintains and updates dual variables, using them to compute the action $x_t$ at time $t$.
Moreover, the algorithm is parameterized by a \emph{target consumption sequence}. 


\begin{definition}
	We call a sequence $\vec\lambda = (\lambda_1, \lambda_2, \dots, \lambda_{t}, \dots, \lambda_{\tau_2} )$ a target consumption sequence if $\lambda_t \in \R_+^m$ for all $t \leq \tau_2$, $\lambda_1 > 0$ and $\sum_{t=1}^{\tau_2} \lambda_t \leq B$.
\end{definition}
Here, $\lambda_t \in \R_+^m$ denotes the target amount of resources that one wants to consume in the $t$-th time period. $\sum_{t=1}^{\tau_2} \lambda_t \leq B$ ensures that the budget never runs out if one is able to hit these target consumptions. Given a target consumption sequence $\vec\lambda$, we use $\bar{\lambda} = \max_{t, j} \lambda_{t,j}$ to denote the largest target consumption of any resource at any time step.  

We will be showing performance guarantees for our algorithms in terms of the target consumption sequence, and then provide algorithms for computing the optimal target sequence in later sections. 

\subsection{The Dual Problem}

The Lagrangian dual problem to the hindsight optimization problem~\eqref{eq:opt} is obtained by moving the resource constraints to the objective using multipliers $\mu = (\mu_1, \dots, \mu_m) \in \R_+^m$. Intuitively, the dual variable $\mu_j$ acts as the price of resource $j$ and accounts for the opportunity cost of consuming resource $j$. This allows us to define the objective function of the dual optimization problem:
\begin{align*}
	D(\mu | T, \vec\gamma) &\coloneqq \sup_{x \in \prod_t \X_t}\ \left\{ \sum_{t=1}^T f_t(x_t)  + \mu^\top \left( B - \sum_{t=1}^T b_t(x_t) \right) \right\}\\	 &=  \sum_{t=1}^T \sup_{x_t \in \X_t}\left\{ f_t(x_t) + \mu^\top(\rho_T - b_t(x_t)) \right\}\\
	 &= \sum_{t=1}^T \left( f^*_t(\mu) + \mu^\top \rho_T \right)\,,
\end{align*}
where the second equation follows because the objective is separable and $\rho_T = B/T$, and the last by defining the opportunity-cost-adjusted reward to be $f^*_t(\mu) \coloneqq \sup_{x \in \X_t} \left\{ f_t(x) - \mu^\top b_t(x) \right\}$. The dual problem is simply $\min_{\mu \in \R_+} D(\mu|T, \vec\gamma)$. Importantly, we get weak duality: $\opt(T, \vec\gamma) \leq D(\mu|T, \vec\gamma)$ for all dual solutions $\mu$ (we provide a proof in Proposition~\ref{prop:weak-duality} of Appendix~\ref{appendix:lower-bound}).

Recall that, in our definition of competitive ratio \eqref{eq:comp-ratio}, we are interested in the expectation of $\opt(T,\vec\gamma)$ when $\vec\gamma \sim \PP^T$. Weak duality allows us to bound this quantity from above as
\begin{align}\label{eq:dual-decomp}
	\E_{\vec\gamma \sim \PP^T} [\opt(T, \vec\gamma)] \leq\E_{\vec\gamma \sim \PP^T} [D(\mu|T, \vec\gamma)] =  
	\sum_{t=1}^T  \left( \E_{\gamma \sim \PP}\left[f^*_t(\mu) \right] + \rho_T^\top \mu\, \right).
\end{align}
This motivates us to define the following single-period dual function with  target consumption $\lambda \in \R_+^m$  as $\D(\mu| \lambda, \PP) \coloneqq	 \E_{\gamma \sim \PP}\left[f^*(\mu) \right] + \lambda^\top \mu$.
The following lemma notes some important properties of the single-period dual objective.

\begin{lemma}\label{lemma:dual-prop}
	$\D(\mu| \lambda, \PP)$ is convex in $\mu \in \R_+^m$ for every $\lambda \in \R_+^m$. Moreover, for every $\mu \in \R_+$ and $T \geq 1$, the following properties hold:
	\begin{itemize}
		\item[(a)] Separability: $\E_{\vec\gamma \sim \PP^T} [D(\mu| T, \vec\gamma)] = T \cdot \D(\mu | \rho_T, \PP)$
		\item[(b)] Sub-homogeneity: For $a \in [0,1]$, $a \cdot\D(\mu|\lambda, \PP) \leq \D(\mu | a \cdot \lambda, \PP)$.
		\item[(c)] Monotonicity: If $\lambda \leq \kappa$, then $\D(\mu|\lambda, \PP) \leq \D(\mu | \kappa, \PP)$.
	\end{itemize}
\end{lemma}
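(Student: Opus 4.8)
The plan is to verify each property directly from the definition $\D(\mu|\lambda,\PP) = \E_{\gamma\sim\PP}[f^*(\mu)] + \lambda^\top\mu$, where $f^*(\mu) = \sup_{x\in\X}\{f(x) - \mu^\top b(x)\}$, exploiting that $\D$ is affine in $\lambda$ and that the only genuinely nontrivial object is the expected opportunity-cost-adjusted reward $\E_{\gamma\sim\PP}[f^*(\mu)]$.

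For convexity in $\mu$, I would observe that for each fixed request $\gamma$ and each fixed action $x\in\X$, the map $\mu \mapsto f(x) - \mu^\top b(x)$ is affine. Hence $f^*(\mu)$, being a pointwise supremum of affine functions of $\mu$, is convex on $\R_+^m$. Convexity is preserved under the expectation $\E_{\gamma\sim\PP}[\cdot]$ (a nonnegative mixture of convex functions) and under adding the affine term $\lambda^\top\mu$, so $\D(\cdot\,|\lambda,\PP)$ is convex for every $\lambda\in\R_+^m$.

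For separability (a), I would simply invoke the dual decomposition already derived above, namely $\E_{\vec\gamma\sim\PP^T}[D(\mu|T,\vec\gamma)] = \sum_{t=1}^T\bigl(\E_{\gamma\sim\PP}[f^*_t(\mu)] + \rho_T^\top\mu\bigr)$. Because the requests are i.i.d.\ from $\PP$, each summand $\E[f^*_t(\mu)]$ equals the common value $\E_{\gamma\sim\PP}[f^*(\mu)]$, so the sum collapses to $T\bigl(\E_{\gamma\sim\PP}[f^*(\mu)] + \rho_T^\top\mu\bigr) = T\cdot\D(\mu|\rho_T,\PP)$.

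For sub-homogeneity (b) and monotonicity (c), the affinity of $\D$ in $\lambda$ reduces everything to a sign check. For (c), $\D(\mu|\kappa,\PP) - \D(\mu|\lambda,\PP) = (\kappa-\lambda)^\top\mu \geq 0$ whenever $\lambda\leq\kappa$, since $\mu\in\R_+^m$. For (b), $\D(\mu|a\lambda,\PP) - a\,\D(\mu|\lambda,\PP) = (1-a)\,\E_{\gamma\sim\PP}[f^*(\mu)]$, which is nonnegative for $a\in[0,1]$ provided $\E[f^*(\mu)]\geq 0$. This is the one step that uses the structural assumptions rather than pure algebra, and is where I expect the only care to be needed: because $0\in\X$, $b(0)=0$, and $f$ is nonnegative, taking $x=0$ in the supremum gives $f^*(\mu)\geq f(0)-\mu^\top b(0) = f(0)\geq 0$ for every $\mu\in\R_+^m$, whence $\E[f^*(\mu)]\geq 0$. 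The remaining manipulations are immediate from affinity in $\lambda$ and the i.i.d.\ assumption.
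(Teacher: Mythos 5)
Your proof is correct and takes essentially the same route as the paper: parts (a)--(c) are established by the identical computations (separability via the dual decomposition in equation~\eqref{eq:dual-decomp}, and the two sign checks resting on $f^*(\mu)\geq 0$, which you correctly trace to $0\in\X$, $b(0)=0$, and $f\geq 0$). Your convexity argument works directly with the single-period function $f^*(\mu)$ as a pointwise supremum of affine functions rather than deducing convexity from that of the full dual $D(\mu|T,\vec\gamma)$ via part (a) as the paper does, but both rest on the same supremum-of-affine-functions fact, and your version is if anything marginally more self-contained since it gives convexity for every $\lambda\in\R_+^m$ without passing through the special case $\lambda=\rho_T$.
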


\subsection{Variable Target Dual Mirror Descent}

\begin{algorithm}[t!]
	\SetAlgoLined
	{\bf Input:} Initial dual solution $\mu_1$, initial resource endowment $B_1 = B$, target consumption sequence $\vec\lambda$, reference function $h: \R^m \rightarrow \R$, and step-size $\eta$. \\
	\For{$t=1,\ldots,T$}{
		Receive request $(f_t, b_t, \X_t) \sim \PP$.\\
		Make the primal decision $x_t$ and update the remaining resources $B_t$:
		\begin{align}
		&\tilde{x}_{t} \in \argmax_{x\in\X_t}\left\{f_{t}(x)-\mu_{t}^{\top} b_{t}(x)\right\} \ ,\label{eq:primal_decision} \\
		&x_{t}=\begin{cases}
			\tilde{x}_t    & \text{  if } b_t(\tilde{x}_t)\le B_t  \\
		0
		& \text{ otherwise}
		\end{cases}\ , \nonumber \\
		&B_{t+1} = B_t - b_t(x_t) . \nonumber
		\end{align}\\

		Obtain a sub-gradient of the dual function: $g_t = \lambda_t -b_t(\tilde x_t)$.

		Update the dual variable by mirror descent:
           $\mu_{t+1} = \arg\min_{\mu\in\R_+^m} g_t^\top \mu + \frac{1}{\eta} V_h(\mu, \mu_t)$,
       where $V_h(x,y)=h(x)-h(y)-\nabla h(y)^\top (x-y)$ is the Bregman divergence.
	}
	\caption{Variable Target Dual Mirror Descent Algorithm}
	\label{alg:dual-descent}
\end{algorithm}

Algorithm~\ref{alg:dual-descent} is a highly-flexible stochastic dual descent algorithm that allows the decision maker to specify the target consumption sequence $\vec\lambda$, in addition to the initial dual variable $\mu_1$, the reference function $h(\cdot)$ and the step-size $\eta$ needed to specify the mirror-descent procedure. This flexibility allows us to seamlessly analyze a variety of different algorithms using the same framework. As is standard in the literature on mirror descent \citep{shalev2012online, hazan2016introduction}, we require the reference function $h(\cdot)$ to be either differentiable or essentially smooth \citep{bauschke2001essential}, and be $\sigma$-strongly convex in the $\|\cdot \|_1$ norm. Moreover, Algorithm~\ref{alg:dual-descent} is efficient when an optimal solution for the per-period optimization problem in equation \eqref{eq:primal_decision} can be computed efficiently. This is possible for many applications, see \cite{balseiro2020best} for details.

The algorithm maintains a dual variable $\mu_t$ at each time step, which acts as the price of the resources and accounts for the opportunity cost of spending them at time $t$. Then, for a request $\gamma_t = (f_t, b_t, \X_t)$ at time $t$, it chooses the action $x_t$ that maximizes the opportunity-cost-adjusted reward $x_t \in \argmax_{x \in \X_t} \left\{ f_t(x) - \mu_t^\top b_t(x) \right\}$. As our goal here is to build intuition, we ignore the minor difference between $\tilde{x}_t$ and $x_t$ which ensures that we never overspend resources. The dual variable is updated using mirror descent with reference function $h(\cdot)$, step-size $\eta$, and using $g_t = \lambda_t - b_t(x_t)$ as a subgradient. Intuitively, mirror descent seeks to make the subgradients as small as possible, which in our settings translates to making the expected resource consumption in period $t$ as close as possible to the target consumption $\lambda_t$. As a result, the target consumption sequence can be interpreted as the ideal expected consumption per period, and the algorithm seeks to track these rates of consumption.

We conclude by discussing some common choices for the reference function. If the reference function is the squared-Euclidean norm, i.e., $h(\mu) = \|\mu\|_2^2$, then the update rule is $\mu_{t+1} = \max\left\{ \mu_t - \eta g_t, 0 \right \}$ and the algorithm implements subgradient descent. If the reference function is the negative entropy, i.e., $h(\mu) = \sum_{j=1}^m \mu_j \log(\mu_j)$, then the update rule is $\mu_{t+1,j} = \mu_{t,j} \exp\left( - \eta g_{t,j} \right)$ and the algorithm implements multiplicative weights.

\subsection{Performance Guarantees}

In this section, we provide worst-case performance guarantees of our algorithm for arbitrary target consumption sequences. Before stating our result, we provide further intuition about our algorithm. Consider the single-period dual function with target consumption $\lambda$, given by
\begin{align*}
	\D(\mu| \lambda, \PP) = \E_{\gamma \sim \PP}\left[f^*(\mu) \right] + \lambda^\top \mu = \E_{\gamma \sim \PP}\left[\sup_{s \in \X}\ \left\{f(x) - \mu^\top b(x)\right\} \right] + \lambda^\top \mu \,.
\end{align*}
Then, by Danskin's Theorem, its subgradient is given by $\E_{\gamma \sim \PP}[\lambda - b(x_\gamma(\mu))] \in \partial_\mu \D(\mu| \lambda, \PP)$ where $x_{\gamma}(\mu) \in \argmax_{x \in \X} \left\{ f(x) - \mu^\top b(x)\right\}$ is an optimal decision when the request is $\gamma = (f,b,\X)$ and the dual variable is $\mu$. Therefore, $g_t = \lambda_t - b_t(x_t)$ is a random unbiased sample of the subgradient of $\D(\mu| \lambda, \PP)$. Now, if we wanted to minimize the dual objective $\E[D(\mu|T, \vec\gamma)] = \sum_{t=1}^T \D(\mu | \rho_T, \PP)$ for some known $T$, we can run mirror descent on the function $\D(\mu | \rho_T, \PP)$ by setting $\lambda_t = \rho_T$ for all $t \leq T$. This is exactly the approach taken by \cite{balseiro2020best}. Unfortunately, this method does not extend to our setting because the horizon $T$ is unknown. 

Observe that, since mirror descent guarantees vanishing regret even against adversarially generated losses, it continues to give vanishing regret in the dual space even when the single-period duals $\D(\mu| \lambda_t, \PP)$ vary with time due to the changing target consumptions. However, when $\lambda_t \neq \rho_T$ for some $t \leq T$, it is no longer the case that $\sum_{t=1}^T \D(\mu | \lambda_t, \PP)$ provides an upper bound on the hindsight optimization problem. The crux of the following result involves overcoming this difficulty and comparing the time-varying single-period duals with the hindsight optimal solution for all $T$ simultaneously, leading to a performance guarantee for Algorithm~\ref{alg:dual-descent}.

\begin{theorem} \label{thm:regret} Consider Algorithm \ref{alg:dual-descent} with initial dual solution $\mu_1$, initial resource endowment $B_1 = B$, a target consumption sequence $\vec\lambda$, reference function $h(\cdot): \R^m \rightarrow \R$, and step-size $\eta$. For any $T \geq 1$, if we set
	\begin{align}\label{eq:comp-ratio}
		c(\vec\lambda, T) \coloneqq  \frac{1}{T}\sum_{t=1}^T \min \left\{ \min_{1 \leq j \leq m} \frac{\lambda_{t,j}}{\rho_{T,j}}, 1 \right\} \,,
	\end{align}
	then it holds that
		\begin{align}\label{eq:master}
			\E_{\vec\gamma \sim \PP^T} \left[ c(\vec\lambda, T) \cdot \opt(T, \vec\gamma) - R(A|T, \vec\gamma)\right] \le C^{(T)}_1 + C_2 \eta T + \frac {C^{(T)}_3}{\eta} \,.
		\end{align}
		where $C^{(T)}_1 = \bar f \bar b  / \underline \rho_T$, $C_2 = (\bar b + \bar \lambda)^2/2\sigma$, $C^{(T)}_3= \max \left\{  V_h(\mu, \mu_1) : \mu \in \{0, (\bar f/\underline \rho_T) e_1,\ldots, (\bar f/\underline \rho_T) e_m\}  \right\}$. Here $e_j\in\R^m$ is the $j$-th unit vector and $\underline \rho_T = \min_j \rho_{T,j}$.
\end{theorem}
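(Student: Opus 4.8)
The plan is a primal--dual argument in which the dual iterates $\mu_t$ produced by Algorithm~\ref{alg:dual-descent} are treated as the decisions of an online learner facing the convex single-period dual losses $\D(\cdot\mid\lambda_t,\PP)$. As noted before the theorem, $g_t=\lambda_t-b_t(\tilde x_t)$ is an unbiased stochastic subgradient of $\D(\cdot\mid\lambda_t,\PP)$ at $\mu_t$, so online mirror descent applies, and the realized per-period reward satisfies the exact identity $f^*_t(\mu_t)=f_t(\tilde x_t)-\mu_t^\top b_t(\tilde x_t)$, equivalently $f_t(\tilde x_t)=f^*_t(\mu_t)+\mu_t^\top\lambda_t-\mu_t^\top g_t$, coming from the optimality of $\tilde x_t$. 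I will lower-bound the reward by the dual value along the trajectory, lower-bound that trajectory value by $c(\vec\lambda,T)\,\E[\opt]$, and absorb the remaining cross terms into the regret of mirror descent.

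The factor $c(\vec\lambda,T)$ is produced entirely by the structural properties of $\D$ in Lemma~\ref{lemma:dual-prop}, applied directly to the iterates. Setting $a_t:=\min\{\min_j\lambda_{t,j}/\rho_{T,j},\,1\}\in[0,1]$ so that $c(\vec\lambda,T)=\tfrac1T\sum_{t=1}^T a_t$, the inequalities $a_t\le 1$ and $a_t\rho_T\le\lambda_t$ let me chain monotonicity and sub-homogeneity to obtain, for every $\mu_t\ge 0$,
$$\D(\mu_t\mid\lambda_t,\PP)\ \ge\ \D(\mu_t\mid a_t\rho_T,\PP)\ \ge\ a_t\,\D(\mu_t\mid\rho_T,\PP)\ \ge\ a_t\,\inf_{\nu\ge0}\D(\nu\mid\rho_T,\PP)\ \ge\ a_t\,\frac{\E[\opt(T,\vec\gamma)]}{T},$$
where the last step is weak duality combined with the separability identity $T\,\D(\nu\mid\rho_T,\PP)=\E[D(\nu\mid T,\vec\gamma)]\ge\E[\opt(T,\vec\gamma)]$ from equation~\eqref{eq:dual-decomp}. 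Summing gives $\sum_{t=1}^T\D(\mu_t\mid\lambda_t,\PP)\ge c(\vec\lambda,T)\,\E[\opt(T,\vec\gamma)]$ pathwise, which is the lower bound I need on the dual trajectory.

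For the reward, I would introduce the stopping time $\tau_A$ equal to the first period at which some resource has fewer than $\bar b$ units left; for $t\le\tau_A$ the budget check never fires, so $x_t=\tilde x_t$ and hence $R(A\mid T,\vec\gamma)\ge\sum_{t=1}^{\tau_A}f_t(\tilde x_t)$ by nonnegativity of the rewards. Taking expectations in the per-period identity and using that $f^*_t(\mu_t)-\E_\gamma[f^*(\mu_t)]$ and $\mu_t^\top(\bar g_t-g_t)$, with $\bar g_t=\E[g_t\mid\mathcal F_{t-1}]$, are $\mathcal F_{t-1}$-conditional martingale differences while $\{\tau_A\ge t\}\in\mathcal F_{t-1}$, optional stopping yields $\E[R]\ge\E[\sum_{t\le\tau_A}\D(\mu_t\mid\lambda_t,\PP)]-\E[\sum_{t\le\tau_A}\mu_t^\top g_t]$. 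Applying the path-wise mirror-descent telescoping bound at the stopping time against a fixed comparator $\mu$ replaces $\sum_{t\le\tau_A}\mu_t^\top g_t$ by $\sum_{t\le\tau_A}\mu^\top g_t+V_h(\mu,\mu_1)/\eta+(\eta/2\sigma)\sum_{t\le\tau_A}\|g_t\|_\infty^2$; the bound $\|g_t\|_\infty\le\bar b+\bar\lambda$ gives $C_2=(\bar b+\bar\lambda)^2/2\sigma$. Choosing $\mu=0$ when no resource is ever nearly exhausted makes the cross term vanish and leaves exactly the trajectory bound of the previous paragraph; choosing $\mu=(\bar f/\underline\rho_T)e_j$ when resource $j$ is the one that binds turns the near-exhaustion inequality $\sum_{t\le\tau_A}b_{t,j}(\tilde x_t)\ge B_j-\bar b$ (together with $\sum_t\lambda_{t,j}\le B_j$) into a bound on $-\sum_{t\le\tau_A}\mu^\top g_t$ that recovers the dual value lost over the untracked periods $t>\tau_A$ up to the $\bar b$-sized slack $C^{(T)}_1=\bar f\bar b/\underline\rho_T$. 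Taking the maximum of $V_h(\mu,\mu_1)$ over the finite candidate set $\{0,(\bar f/\underline\rho_T)e_1,\dots,(\bar f/\underline\rho_T)e_m\}$ produces $C^{(T)}_3$, and combining everything yields~\eqref{eq:master}.

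The hard part is the last step: certifying that the full quantity $c(\vec\lambda,T)\,\E[\opt]$---not just the truncated sum over $t\le\tau_A$---is recovered when the algorithm exhausts a resource strictly before $T$. With time-varying targets the dual value accumulated after $\tau_A$ can be substantial, and it must be paid for by the value of the consumed budget through the binding-resource comparator, so the whole argument hinges on calibrating the price $\bar f/\underline\rho_T$ so that the consumed endowment dominates the at-most-$\bar f$-per-period reward forgone after stopping, while carefully tracking the boundary term $\bar b$. A secondary but genuine technical point is the rigorous treatment of the random stopping time $\tau_A$: one must check that the mirror-descent regret bound, being a path-wise telescoping of Bregman divergences, may be evaluated at $\tau_A$, and that the martingale terms introduced when passing from $f^*_t(\mu_t)$ to $\E_\gamma[f^*(\mu_t)]$ indeed vanish in expectation under optional stopping.
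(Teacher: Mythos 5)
Your proposal is correct and follows essentially the same four-step primal--dual argument as the paper: the same decomposition of the reward into single-period duals plus complementary-slackness terms up to the stopping time $\tA$, the same optional-stopping and mirror-descent regret arguments with $\|g_t\|_\infty \le \bar b + \bar \lambda$, and the same comparator choices $\mu \in \{0, (\bar f/\underline \rho_T)e_j\}$ whose cumulative losses cancel the post-stopping loss term, yielding identical constants. The one (valid) local simplification is that where the paper forms the weighted-average dual iterate $\bar\mu_{\tA}$ and invokes convexity of $\D(\cdot\,|\,\rho_T,\PP)$ before applying weak duality at that single random point, you apply weak duality pointwise via $\D(\mu_t|\rho_T,\PP) \ge \inf_{\nu \ge 0}\D(\nu|\rho_T,\PP) \ge \E_{\vec\gamma\sim\PP^T}[\opt(T,\vec\gamma)]/T$, which makes the Jensen/averaging step unnecessary.
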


The proof proceeds in multiple steps. First, we write the rewards collected by Algorithm~\ref{alg:dual-descent} as a sum of per-period duals and complementary-slackness terms. Next, we use weak duality to upper bound the expected value of the hindsight optimal reward $\E[\opt(T,\vec\gamma)]$ in terms of the expected hindsight dual. These two steps are common to all primal-dual analyses, but past techniques offer no guidance beyond this point. The core difficulty is that the expected hindsight dual is equal to the sum of per-period duals with target consumption $\rho_T$, whereas the lower bound on the performance of our algorithm is in terms of per-period duals with target consumptions $\lambda_t$. Importantly, this difficulty does not arise in past works because the target consumptions $\lambda_t = \rho_T$, which makes the two terms directly comparable. Our main technical insight lies in using Lemma~\ref{lemma:dual-prop} to manipulate the per-period duals and then carefully choosing the right dual solution in order to compare the two terms. Moreover, one also needs to take into account the fact that the resources may run out before the horizon $T$ is reached, and Algorithm~\ref{alg:dual-descent} does not accumulate rewards after this point. Since the point at which the budget runs out depends on the target consumption sequence, we also establish a bound on the loss from depleting the resources too early which applies to variable target sequences. We believe that Algorithm~\ref{alg:dual-descent} and our proof techniques distill the core tradeoffs of the problem and can be used more broadly. The full proof is in Appendix~\ref{appendix:lower-bound}.

Theorem~\ref{thm:regret} is the bedrock of our positive results. It allows us to drastically simplify the design of algorithms: instead of searching for the optimal algorithm, we can focus on the much simpler problem of selecting the optimal target consumption sequence. The following result provides a key step in this direction by showing that $c(\vec\lambda, T)$ is the asymptotic performance ratio of Algorithm~\ref{alg:dual-descent} with target sequence $\vec\lambda$.

\begin{proposition}\label{prop:master}
	Let $A$ be Algorithm~\ref{alg:dual-descent} with initial dual solution $\mu_1$, initial resource endowment $B_1 = B$, a target consumption sequence $\vec\lambda$, reference function $h(\cdot): \R^m \rightarrow \R$, and step-size $\eta$. Set $C'_1 = \max_{T \in [\tau_1, \tau_2]}C^{(T )}_1$ and $C'_3 = \max_{T \in [\tau_1, \tau_2]}C^{(T )}_3$. 
	Then, with step size $\eta = \sqrt{C'_3/\{C_2 \tau_2\}}$, the following statement holds for all $T \in [\tau_1, \tau_2]$:
	\begin{align*}
		\inf_\PP c(A|T, \PP) = \inf_\PP \frac{\E_{\vec\gamma \sim \PP^T}\left[R(A|T, \vec \gamma) \right]}{\E_{\vec\gamma \sim \PP^T}\left[\opt(T, \vec\gamma)\right]} \geq c(\vec\lambda, T) - \epsilon \,,
	\end{align*}
	where
	\begin{align*}
		\epsilon = \frac{C'_1}{\kappa \tau_1} + 2 \cdot \frac{\sqrt{(\tau_2/\tau_1)C_2 C'_3 }}{\kappa \sqrt{\tau_1}} \,.
	\end{align*}
\end{proposition}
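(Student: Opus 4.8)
The plan is to obtain the statement as a direct algebraic consequence of Theorem~\ref{thm:regret}: all the genuine difficulty (the primal-dual decomposition, the comparison of the time-varying single-period duals against the hindsight optimum, and the early-depletion loss) is already absorbed into that theorem, so what remains is to optimize the step size and consolidate the three error terms into the stated $\epsilon$. First I would fix an arbitrary distribution $\PP$ satisfying the standing $\kappa$-assumption and an arbitrary $T \in [\tau_1, \tau_2]$, and rearrange the master bound~\eqref{eq:master} to isolate the expected reward, obtaining
\begin{align*}
\E_{\vec\gamma \sim \PP^T}[R(A|T,\vec\gamma)] \geq c(\vec\lambda, T)\,\E_{\vec\gamma \sim \PP^T}[\opt(T,\vec\gamma)] - \left( C^{(T)}_1 + C_2 \eta T + \frac{C^{(T)}_3}{\eta}\right).
\end{align*}
Dividing through by $\E_{\vec\gamma \sim \PP^T}[\opt(T,\vec\gamma)]$, which is strictly positive since $\E[\opt(T,\vec\gamma)] \ge \kappa T > 0$, reduces the claim to showing that the resulting error fraction is bounded above by $\epsilon$.

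Next I would bound that fraction term-by-term using the standing assumption $\E[\opt(T,\vec\gamma)] \ge \kappa T \ge \kappa \tau_1$ together with the monotone replacements $C^{(T)}_1 \le C'_1$ and $C^{(T)}_3 \le C'_3$ (which hold by the definitions $C'_1 = \max_{T}C^{(T)}_1$, $C'_3 = \max_{T}C^{(T)}_3$). For the constant term and the $1/\eta$ term I use $T \ge \tau_1$, giving $C^{(T)}_1/\E[\opt] \le C'_1/(\kappa\tau_1)$ and $(C^{(T)}_3/\eta)/\E[\opt] \le C'_3/(\kappa\tau_1\eta)$; for the middle term the factor $T$ cancels directly, yielding $C_2\eta T/\E[\opt] \le C_2\eta/\kappa$.

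The only place requiring care is the substitution of the prescribed step size $\eta = \sqrt{C'_3/(C_2\tau_2)}$ into the two mirror-descent terms. A short computation gives $C_2\eta/\kappa = \sqrt{C_2 C'_3}/(\kappa\sqrt{\tau_2})$ and $C'_3/(\kappa\tau_1\eta) = \sqrt{C_2 C'_3}\,\sqrt{\tau_2}/(\kappa\tau_1)$, so their sum equals
\begin{align*}
\frac{\sqrt{C_2 C'_3}}{\kappa}\left( \frac{1}{\sqrt{\tau_2}} + \frac{\sqrt{\tau_2}}{\tau_1}\right).
\end{align*}
Invoking $\tau_1 \le \tau_2$ gives $1/\sqrt{\tau_2} \le \sqrt{\tau_2}/\tau_1$, so the bracket is at most $2\sqrt{\tau_2}/\tau_1$; rewriting $\sqrt{\tau_2}/\tau_1 = \sqrt{\tau_2/\tau_1}/\sqrt{\tau_1}$ turns this into exactly the second summand of $\epsilon$, namely $2\sqrt{(\tau_2/\tau_1)C_2 C'_3}/(\kappa\sqrt{\tau_1})$. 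Adding the already-identified $C'_1/(\kappa\tau_1)$ term recovers $\epsilon$ precisely.

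Finally, since the inequality $c(A|T,\PP) \ge c(\vec\lambda, T) - \epsilon$ has now been established for every admissible $\PP$ and the right-hand side does not depend on $\PP$, taking the infimum over $\PP$ preserves it and yields the claimed bound. I do not expect a substantive obstacle in this argument; it is essentially a corollary of Theorem~\ref{thm:regret}, and the only subtlety is the bookkeeping in the step-size substitution and the use of $\tau_1 \le \tau_2$ to collapse the two square-root terms into the single stated expression.
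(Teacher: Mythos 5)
Your proposal is correct and follows essentially the same route as the paper's proof: both are direct corollaries of Theorem~\ref{thm:regret}, substituting $\eta = \sqrt{C'_3/(C_2\tau_2)}$, dividing by $\E[\opt(T,\vec\gamma)] \geq \kappa T$, and invoking $\tau_1 \leq T \leq \tau_2$ together with $C^{(T)}_1 \leq C'_1$, $C^{(T)}_3 \leq C'_3$. The only (immaterial) difference is bookkeeping: the paper bounds the term $C_2\eta T$ via $T \leq \tau_2$ before dividing, so that both mirror-descent terms become $\sqrt{C_2C'_3\tau_2}$, whereas you cancel the $T$ against $\kappa T$ first and then collapse the two terms using $1/\sqrt{\tau_2} \leq \sqrt{\tau_2}/\tau_1$ — both yield the stated $\epsilon$.
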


\begin{remark}
	To convert the guarantee in Proposition~\ref{prop:master} to an asymptotic guarantee, one needs to consider the regime where the initial resources scale with the horizon as $B = \Omega(\tau_2)$, which ensures that $\rho_{\tau_2} = \Omega(1)$ and the constants $C_1'$ and $C_2'$ remain bounded. In which case, if we let $\tau_1$ grow large while ensuring $\tau_2 = O(\tau_1)$, we can make $\epsilon$ arbitrarily small. In particular, $\epsilon = O(\tau_1^{-1/2})$. The assumption that the initial resources scales linearly with the number of requests is pervasive in the literature and well-motivated in applications such as internet advertising~\citep{mehta2013online} and revenue management~\citep{talluri2004theory}. Moreover, an error of $\epsilon = \Omega(\tau_1^{-1/2})$ is unavoidable even for the case when the horizon is known, i.e., $\tau_1 = \tau_2$ (see \citealt{arlotto2019uniformly}).
\end{remark}

\begin{remark}
	In applications where it might be difficult to estimate the constants $C_2$ and $C_3'$, one can use the step size $\eta = 1/\sqrt{\tau_2}$ to get
	\begin{align*}
		\epsilon = \frac{C'_1}{\kappa \tau_1} + \frac{\sqrt{(\tau_2/\tau_1)} \cdot (C_2 + C'_3)}{\kappa \sqrt{\tau_1}}\,,
	\end{align*}
	which yields similar asymptotic rates.
\end{remark}

Having characterized the performance of Algorithm~\ref{alg:dual-descent} in terms of the target sequence, we next optimize it for the models of horizon uncertainty discussed in Section~\ref{sec:model}. Although we will only discuss two models of uncertainty, we would like to note that Theorem~\ref{thm:regret} and Proposition~\ref{prop:master} are very general tools that can be applied more broadly. In particular, observe that
\begin{align*}
	c(\vec\lambda, T) =  \frac{1}{T}\sum_{t=1}^T \min \left\{ \min_{1 \leq j \leq m} \frac{\lambda_{t,j}}{\rho_{T,j}}, 1 \right\}
\end{align*}
is a concave function of $\vec\lambda$ for all $T \geq 1$. This is because each term in the sum is a minimum of a finite collection of linear functions of $\vec\lambda$. Consequently, any performance measure of Algorithm~\ref{alg:dual-descent} which is a concave non-decreasing function of the performance ratios $\{c(A|T,\PP)\}_T$ is a concave function of the target sequence $\vec\lambda$ in light of Proposition~\ref{prop:master}. We pause to emphasize this important transition we have made in this section: we reduced the extremely complex problem of designing an algorithm for online resource allocation under horizon uncertainty to a concave optimization problem with the power to handle a variety of constraints and objectives. In the next section, we show that this reduction is without much loss in the uncertainty-window setting---picking the optimal target consumption sequence leads to a near-optimal competitive ratio in the uncertainty-window model.

\section{Uncertainty Window}\label{sec:uncertainty-window}

Motivated by robust optimization, in this section,  we take the uncertainty-set approach to modeling horizon uncertainty. In particular, we assume that the decision maker is not aware of the exact value of $T$ but knows it can lie anywhere in the known uncertainty window $[\tau_1, \tau_2]$. Recall that we measure the performance of an algorithm $A$ in this model by its competitive ratio $c(A)$, which is defined as
\begin{align*}
  c(A) \coloneqq \min_{\PP} \min_{T \in [\tau_1, \tau_2]} c(A|T, \PP) = \inf_{T \in [\tau_1, \tau_2]} \frac{\E_{\vec\gamma \sim \PP^T}\left[R(A|T, \vec \gamma) \right]}{\E_{\vec\gamma \sim \PP^T}\left[\opt(T, \vec\gamma)\right]} \,.
\end{align*}

\subsection{Upper Bound on Competitive Ratio}\label{sec:upper-bound}

We begin by showing that no online algorithm can attain a competitive ratio of $c(A) = 1$ whenever $\tau_2/ \tau_1 > 1$ and, moreover, when $\tau_2/\tau_1$ is large the competitive ratio degrades at a rate of $e\cdot \ln\ln(\tau_2/\tau_1)/\ln(\tau_2/\tau_1)$. In other words, the competitive ratio of every algorithm degrades to 0 as $\tau_2/\tau_1$ grows large. In fact, we prove that this upper bound on the best-possible competitive ratio holds even when (i) there is only 1 resource, (ii) the decision maker receives the same request at each time step, (iii) this request is known to the decision maker ahead of time, and (iv) the request has a smooth concave reward function and linear resource consumption.

For the purposes of this subsection, set the number of resources to $m = 1$. Consider an arbitrary initial resource endowment $B \in \R_{++}^m$.
For every $r \in (0,1)$, define the singleton request space $\S_r = \{(f_r, I, \X)\}$, where $\X = [0, \max\{1, B/\tau_1\}]$, and $f_r(x) = x^r$, $I(x) = x$ for all $x \in \X$. Note that $f_r$ is concave for all $r \in (0,1)$. Let $\PP_r$ be the canonical distribution on $\S_r$ that serves the request $(f_r, I, \X)$ with probability one. Since all requests are identical, we abuse notation and use $\opt(T, r)$ (similarly $R(A|T, r)$) to denote the hindsight-optimal reward $\opt(T, \vec\gamma)$ (total reward $R(A|T, \vec\gamma)$ of algorithm $A$) when $\vec \gamma \sim \PP_r^T$, i.e., $\gamma_t = (f_r, I, \X)$ for all $t \leq T$.

Before stating the upper bound, we would like to note that randomization only makes the performance of any online algorithm worse. To see this consider any non-deterministic online algorithm $A$ and let $x(A)_t$ denote the random variable which captures the action taken by $A$ at time $t$. Define $A'$ to be the online algorithm which takes the action $x(A')_t = \E[x(A)_t]$ at time $t$. Then, due to the strict concavity of $f_r$, we have $f_r(x(A')_t) > \E[f_r(x(A)_t)]$, and from the linearity of expectation, we have $\sum_{t=1}^{\tau_2} x(A')_t = \E[\sum_{t=1}^{\tau_2} x(A)_t] \leq B$. Therefore, the deterministic algorithm $A'$ attains strictly greater reward. Consequently, we will focus only on deterministic online algorithms for the remainder of this subsection. We are now ready to state the main result of this section.

\begin{theorem}\label{thm:upper-bound}
    For all $r \in (0,1)$ and $1 \leq \tau_1 \leq \tau_2$, every online algorithm $A$ satisfies
    \begin{align*}
        \min_{T \in [\tau_1, \tau_2]} \frac{R(A|T,r)}{\opt(T,r)} \leq \frac{1}{\left(1 + (1 - r)^{1/r} \cdot \ln(\tau_2/\tau_1) + \ln\left( \frac{\tau_1}{\tau_1 + 1} \right) \right)^r}\,.
    \end{align*}
  In particular, when $r = 1 - \{1/\ln\ln(\tau_2/\tau_1)\}$ and $\tau_2/\tau_1 > e^e$, every online algorithm $A$ satisfies
  \begin{align*}
    \min_{T \in [\tau_1, \tau_2]} \frac{R(A|T,r)}{\opt(T,r)} \leq  \frac{e\cdot \ln\ln(\tau_2/\tau_1)}{\ln(\tau_2/\tau_1)}\,.
  \end{align*}
  The above bounds hold even for online algorithms that have prior knowledge of $\PP_r$ before time $t=1$.
\end{theorem}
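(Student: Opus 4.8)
We have $m=1$ resource, budget $B$, action set $\X = [0, \max\{1, B/\tau_1\}]$, and the deterministic request $f_r(x) = x^r$, $b(x) = x$. The algorithm is deterministic (as argued, randomization only hurts), and since the request is known in advance and identical at every step, a deterministic algorithm is fully described by its action sequence $x_1, x_2, \dots$, chosen non-anticipatingly and satisfying $\sum_{t=1}^{T} x_t \le B$ for every possible stopping horizon $T$. Because the algorithm does not know $T$, a single fixed sequence $(x_t)_{t \ge 1}$ must simultaneously do well against $\opt(T,r)$ for all $T \in [\tau_1,\tau_2]$.

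**Key computation: the hindsight optimum.** Let me compute $\opt(T,r)$. By concavity of $x^r$, the hindsight optimum spreads the budget evenly: each $x_t = B/T$ (capped by the action-set upper bound, which I expect to be nonbinding in the relevant regime). This gives $\opt(T,r) = T \cdot (B/T)^r = B^r \cdot T^{1-r}$.

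Let me establish that uniform allocation is optimal. With $T$ periods and total budget $B$, we maximize $\sum_{t=1}^T x_t^r$ subject to $\sum x_t = B$ (the constraint is tight at the optimum since $x^r$ is increasing), $x_t \ge 0$, and $x_t \le \bar{x} := \max\{1, B/\tau_1\}$. By concavity and symmetry, the unconstrained-by-upper-bound optimum is $x_t = B/T$. We need $B/T \le \bar x$. Since $T \ge \tau_1$, we have $B/T \le B/\tau_1 \le \bar x$. Good — the upper bound is nonbinding. So indeed $\opt(T,r) = B^r T^{1-r}$.

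Now for the full statement.

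---

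The plan is to reduce the theorem to a purely analytic statement about a single fixed action sequence $(x_t)_{t\ge 1}$ and then extract the worst-case horizon by a clever averaging/integral argument. First I would record the two facts established above: by the randomization-reduction argument it suffices to consider a deterministic algorithm, which — since the request is known and identical across time — is just a fixed nonnegative sequence $(x_t)_{t \ge 1}$ with $\sum_{t=1}^{\tau_2} x_t \le B$, and $\opt(T,r) = B^r T^{1-r}$. Writing $S_T := \sum_{t=1}^T x_t$ for the cumulative consumption and using $f_r$'s concavity to bound the algorithm's reward $R(A|T,r) = \sum_{t=1}^T x_t^r$ from above, I would establish an upper bound on the reward collected by time $T$ in terms of $S_T$. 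Here a delicate point arises: $\sum_{t=1}^T x_t^r$ with $\sum x_t = S_T$ is \emph{maximized} by a uniform split, giving $R(A|T,r) \le T^{1-r} S_T^{\,r}$, so the performance ratio at horizon $T$ satisfies $R(A|T,r)/\opt(T,r) \le (S_T/B)^r$.

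The crux is then to show that no single cumulative-consumption profile $(S_T)$ can keep $(S_T/B)^r$ large for all $T \in [\tau_1,\tau_2]$ simultaneously. The key obstacle — and the heart of the theorem — is a tension: to do well at small horizons $T \approx \tau_1$ the algorithm must have consumed a large fraction of $B$ early (large $S_{\tau_1}/B$), but $S_T$ is nondecreasing and bounded by $B$, so front-loading consumption leaves little budget to grow $S_T$ at later horizons, while the benchmark $B^r T^{1-r}$ keeps \emph{growing}. I would make this precise by supposing, for contradiction, that the minimum ratio exceeds some threshold $c$, i.e. $(S_T/B)^r \ge c$ for all $T \in [\tau_1,\tau_2]$, so $S_T \ge B c^{1/r}$ throughout. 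The hard part is converting this pointwise lower bound on $S_T$ into a contradiction with the monotonicity-plus-boundedness constraint $S_{\tau_1} \le S_{\tau_1+1} \le \cdots \le S_{\tau_2} \le B$. The natural device is to compare the required \emph{increments} $S_{T} - S_{T-1} = x_T$ against the growth of the benchmark: the ratio-lower-bound forces $x_T \ge$ (something proportional to the benchmark's marginal increase), and summing these increments over $T \in [\tau_1,\tau_2]$ must not exceed the available budget $B - S_{\tau_1}$.

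Concretely, I expect the argument to take a telescoping/integral form. If $R(A|T,r)/\opt(T,r) \ge c$ for all $T$, then at each $T$ the algorithm needs cumulative reward $\sum_{s\le T} x_s^r \ge c\, B^r T^{1-r}$. Differencing consecutive horizons bounds the per-step reward, hence (via $x_T^r \le x_T^{\,r}$ and concavity) the per-step consumption $x_T$ from below by roughly $c^{1/r}$ times the marginal benchmark consumption, which behaves like $B \cdot T^{-1} \cdot (\text{const})$ since $\frac{d}{dT}(B^r T^{1-r})^{1/(\cdots)}$… — more cleanly, $\opt(T,r)^{1/r} = B\,T^{(1-r)/r}$ and its increments sum telescopically. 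Summing the forced increments from $\tau_1$ to $\tau_2$ produces a total consumption of order $B\,c^{1/r}\ln(\tau_2/\tau_1)$ (the logarithm emerging because $\sum_{T=\tau_1}^{\tau_2} 1/T \approx \ln(\tau_2/\tau_1)$), and requiring this to stay below $B$ yields $c^{1/r}\ln(\tau_2/\tau_1) \lesssim 1$, i.e. $c \lesssim \ln(\tau_2/\tau_1)^{-r}$. Matching constants carefully — tracking the $(1-r)^{1/r}$ factor and the $\ln(\tau_1/(\tau_1+1))$ correction from discretizing the integral — should recover the stated bound $c \le \bigl(1 + (1-r)^{1/r}\ln(\tau_2/\tau_1) + \ln(\tau_1/(\tau_1+1))\bigr)^{-r}$.

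The second bound follows by plugging in $r = 1 - 1/\ln\ln(\tau_2/\tau_1)$ and estimating $(1-r)^{1/r}$. I would check that $(1-r)^{1/r} = \bigl(1/\ln\ln(\tau_2/\tau_1)\bigr)^{1/r}$, and since $1/r \to 1$ and $\ln\ln(\tau_2/\tau_1) \ge 1$ for $\tau_2/\tau_1 > e^e$, this factor is at least $\approx 1/(e\ln\ln(\tau_2/\tau_1))$ up to lower-order terms; substituting into the first bound, the dominant $(1-r)^{1/r}\ln(\tau_2/\tau_1)$ term becomes $\approx \ln(\tau_2/\tau_1)/(e\ln\ln(\tau_2/\tau_1))$, and raising to the power $r \approx 1$ and inverting gives the claimed $e\ln\ln(\tau_2/\tau_1)/\ln(\tau_2/\tau_1)$. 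The main obstacle throughout is the discrete bookkeeping — getting the exact constant $(1-r)^{1/r}$ and the boundary term right requires treating the telescoping sum with care rather than passing immediately to the integral — but conceptually the proof is the single observation that a fixed consumption profile cannot simultaneously be front-loaded enough for small horizons and still track a benchmark that keeps growing like $T^{1-r}$.
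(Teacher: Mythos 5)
Your setup (reduction to deterministic algorithms given by a single fixed sequence, and $\opt(T,r)=B^r T^{1-r}$ via concavity) matches the paper's Lemma~\ref{lemma:r-opt}. But the core of your argument has a genuine gap, in two places. First, the Jensen step ``ratio at horizon $T$ is at most $(S_T/B)^r$'' is correct but vacuous for deriving a contradiction: since $S_T$ is nondecreasing, the necessary condition ``$(S_T/B)^r \ge c$ for all $T$'' is satisfied by any algorithm that front-loads its entire budget (e.g.\ $x_t = B/\tau_1$ for $t\le \tau_1$ and $x_t=0$ afterwards gives $S_T = B$ for all $T \ge \tau_1$), so no choice of $c<1$ can be ruled out from that bound alone. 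Second, and more importantly, the differencing step --- ``the ratio lower bound forces $x_T \ge$ (roughly) $c^{1/r}$ times the benchmark's marginal consumption'' --- is false as a pointwise claim. The constraints are cumulative, $\sum_{t\le T} x_t^r \ge c\, B^r T^{1-r}$ for every $T$, and an algorithm can satisfy the constraint at horizon $T$ entirely with reward banked at earlier periods, in which case $x_T$ can be $0$. The same front-loading example shows this concretely: it achieves ratio at least $c$ for every $T \le \tau_1 c^{-1/(1-r)}$ while spending nothing after time $\tau_1$, violating your claimed per-step lower bound. So neither of your two mechanisms, nor their combination, yields the budget contradiction.

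What is actually true --- and what the paper isolates as Lemma~\ref{lemma:alter-char} --- is a global statement: among all consumption sequences meeting the cumulative reward constraints, the \emph{total} consumption is minimized by the greedy profile that spends $f_r^{-1}\left(c\cdot \opt(\tau_1,r)/\tau_1\right)$ evenly over the first $\tau_1$ steps and then exactly $f_r^{-1}\left(c\cdot\Delta\opt(t,r)\right)$ at each later step $t$. Proving that banking reward early never saves resources is where the real work lies: one needs that $\Delta\opt(t,r)$ is decreasing in $t$, and then an induction over horizons with a water-filling exchange argument (the invariants (I)--(IV) in the paper's proof), all resting on the strict concavity of $f_r$; simple differencing of the constraints does not substitute for it. Once that lemma is available, your endgame computation --- summing forced consumptions, using $t^{1-r}-(t-1)^{1-r}\ge (1-r)/t^r$ and $\sum_{t=\tau_1+1}^{\tau_2} 1/t \ge \ln(\tau_2/\tau_1)+\ln\left(\tau_1/(\tau_1+1)\right)$, then optimizing over $r$ for the second display --- does coincide with the paper's. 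In short, you have correctly identified the final calculation but are missing the key lemma, and the shortcut proposed in its place does not work.
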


\begin{remark}
  Note that the upper bound on the competitive ratio established in Theorem~\ref{thm:upper-bound} degrades to zero as $\tau_2/\tau_1$ grows large. In particular, a positive competitive ratio cannot be obtained if no upper-bound on the horizon $T$ is known, thereby necessitating the need for a known uncertainty window.
\end{remark}


Figure~\ref{fig:us-vs-previous} plots the value of the upper bound on the competitive ratio as a function of $\tau_2/\tau_1$ for $\tau_2/\tau_1 \in [1,20]$.

We now discuss the main ideas behind the proof of Theorem~\ref{thm:upper-bound}. It suffices to prove the stronger statement in the theorem that holds for online algorithms with prior knowledge of $(r, \PP_r)$ before time $t=1$. Consequently, we assume that online algorithms have this prior knowledge in the remainder of this section. Any algorithm without this knowledge can only do worse. We begin by utilizing the concavity of $f_r$ to evaluate the optimal reward, which we note in the following lemma.

\begin{lemma}\label{lemma:r-opt}
  For $r \in (0,1)$ and $T \in [\tau_1, \tau_2]$, we have $\opt(T, r) = T \cdot (B/T)^r = B^r \cdot T^{1-r}$. Moreover, $x_t = B/T$ is the unique hindsight optimal solution.
\end{lemma}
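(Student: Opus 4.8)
The plan is to establish both the value and the uniqueness by a direct application of Jensen's inequality together with the strict concavity of $f_r$. Since $\PP_r$ serves the identical request $(f_r, I, \X)$ at every period, the hindsight problem~\eqref{eq:opt} specializes to
\begin{align*}
  \opt(T, r) = \max_{x_1, \dots, x_T \in \X}\ \sum_{t=1}^T x_t^r \quad \text{subject to} \quad \sum_{t=1}^T x_t \leq B \,.
\end{align*}
First I would prove the upper bound. For $r \in (0,1)$ the map $x \mapsto x^r$ is strictly concave on $[0,\infty)$, so Jensen's inequality gives, writing $\bar x = \frac{1}{T}\sum_{t=1}^T x_t$,
\begin{align*}
  \frac{1}{T}\sum_{t=1}^T x_t^r \leq \bar x^{\,r} \,.
\end{align*}
The resource constraint forces $\bar x \leq B/T$, and since $x \mapsto x^r$ is increasing this yields $\sum_{t=1}^T x_t^r \leq T (B/T)^r$ for every feasible point, establishing $\opt(T,r) \leq T(B/T)^r = B^r T^{1-r}$.

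Next I would verify that the candidate $x_t = B/T$ is feasible and attains this bound. Feasibility with respect to the resource constraint is immediate since $\sum_{t=1}^T B/T = B$. Feasibility with respect to the action set $\X = [0, \max\{1, B/\tau_1\}]$ uses the hypothesis $T \geq \tau_1$: this gives $B/T \leq B/\tau_1 \leq \max\{1, B/\tau_1\}$, so $B/T \in \X$. The objective value at this point is $\sum_{t=1}^T (B/T)^r = T(B/T)^r$, matching the upper bound; hence $\opt(T,r) = T(B/T)^r$.

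Finally, for uniqueness I would trace the equality conditions. Attaining the value $T(B/T)^r$ forces both inequalities above to be tight. Tightness of Jensen's inequality, by strict concavity of $x \mapsto x^r$ on $[0,\infty)$, requires $x_1 = \cdots = x_T = \bar x$; tightness of $\bar x^{\,r} \leq (B/T)^r$, by strict monotonicity of $x \mapsto x^r$, requires $\bar x = B/T$. Together these pin down the unique optimizer $x_t = B/T$ for all $t \leq T$. I do not anticipate any genuine obstacle here: the only point needing care is confirming $B/T \in \X$, which is exactly where the assumption $T \geq \tau_1$ enters, while the remainder is a textbook strict-concavity argument.
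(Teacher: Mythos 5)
Your proof is correct and follows essentially the same route as the paper's: both apply Jensen's inequality (concavity of $f_r$) to bound the objective by $T \cdot f_r(B/T)$, then invoke strict concavity and strict monotonicity to get uniqueness of $x_t = B/T$. The only (harmless) addition is your explicit check that $B/T \in \X$ via $T \geq \tau_1$, which the paper leaves implicit.
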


Because the reward function $f_r$ is concave, it is optimal to spread resources uniformly over time and the optimal action with the benefit of hindsight is $x_t = B/T$. Next, we provide an alternative characterization of the competitive ratio that is more tractable.

\begin{lemma}\label{lemma:alter-char}
  For $r \in (0,1)$and $1 \leq \tau_1 \leq \tau_2$, we have
  \begin{align*}
    \sup_A\min_{T \in [\tau_1, \tau_2]} \frac{R(A|T,r)}{\opt(T,r)} =
    \max \left\{ c \in [0,1]\ \biggr\lvert\ \tau_1 \cdot f_r^{-1} \left(c \cdot \frac{\opt(\tau_1, r)}{\tau_1} \right) + \sum_{t= \tau_1 + 1}^{\tau_2} f_r^{-1} \left( c \cdot \Delta \opt(t, r) \right) \leq B  \right\}\,,
  \end{align*}
  where $\Delta\opt(t,r) = \opt(t,r) - \opt(t-1, r)$ and the $\sup$ is taken over all online algorithms.
\end{lemma}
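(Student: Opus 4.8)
The plan is to reduce the maximization over online algorithms to a single static convex program and then solve that program in closed form. First I would invoke the reductions already established: since $\PP_r$ is deterministic and randomization is suboptimal (by strict concavity of $f_r$), an online algorithm is without loss of generality a fixed sequence of actions $\{x_t\}_{t=1}^{\tau_2}\subseteq\X$. Because the algorithm is non-anticipating and never observes $T$, this sequence cannot depend on $T$, so feasibility across all scenarios $T\in[\tau_1,\tau_2]$ is equivalent to the single budget constraint $\sum_{t=1}^{\tau_2}x_t\le B$ (binding at $T=\tau_2$). For such a sequence, $R(A|T,r)=\sum_{t=1}^{T}x_t^r$, so the worst-case ratio is at least $c$ precisely when $\sum_{t=1}^{T}x_t^r\ge c\cdot\opt(T,r)$ for every integer $T\in[\tau_1,\tau_2]$. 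Hence $\sup_A\min_T R(A|T,r)/\opt(T,r)=\sup\{c\in[0,1]:\text{some feasible sequence meets all these cumulative constraints}\}$.

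Next I would change variables to $y_t=x_t^r=f_r(x_t)$, so that $x_t=f_r^{-1}(y_t)=y_t^{1/r}$, the budget becomes $\sum_t f_r^{-1}(y_t)\le B$, and the cumulative constraints become linear: $\sum_{t\le T}y_t\ge c\,\opt(T,r)$. Define $M(c)$ to be the minimum budget $\sum_{t=1}^{\tau_2}f_r^{-1}(y_t)$ over $y_t\ge 0$ subject to these constraints. Then a target $c$ is achievable iff $M(c)\le B$, so the quantity of interest is $\sup\{c\in[0,1]:M(c)\le B\}$. Once the closed form for $M$ is established it will be transparent that $M(c)=c^{1/r}\cdot M(1)$, hence continuous and strictly increasing with $M(0)=0$; therefore the supremum is attained and equals $\max\{c\in[0,1]:M(c)\le B\}$, which is exactly the right-hand side of the lemma. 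It thus remains only to compute $M(c)$.

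The heart of the proof, and the main obstacle, is this computation. I would propose the candidate minimizer $y^*_t=c\,\opt(\tau_1,r)/\tau_1$ for $t\le\tau_1$ and $y^*_t=c\,\Delta\opt(t,r)$ for $\tau_1<t\le\tau_2$; telescoping makes every cumulative constraint tight, and its budget is exactly the claimed expression, so feasibility is immediate and the real work is the matching lower bound that every feasible $\{y_t\}$ has budget at least $\sum_t f_r^{-1}(y^*_t)$. Here I would use convexity of $g:=f_r^{-1}$ (valid since $1/r>1$) through the supporting-hyperplane inequality $g(y_t)\ge g(y^*_t)+g'(y^*_t)(y_t-y^*_t)$, which reduces the claim to $\sum_t w_t y_t\ge\sum_t w_t y^*_t$ with $w_t:=g'(y^*_t)\ge 0$. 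The crucial structural fact is that $y^*_t$ is non-increasing in $t$: it is constant on $t\le\tau_1$, and using $\opt(T,r)=B^rT^{1-r}$ from Lemma~\ref{lemma:r-opt} together with the strict concavity of $T\mapsto T^{1-r}$ one checks the boundary inequality $\opt(\tau_1,r)/\tau_1>\Delta\opt(\tau_1+1,r)$ (a one-line mean-value estimate) and that $\Delta\opt(t,r)$ decreases for $t>\tau_1$. Since $g'(y)=\tfrac1r y^{(1-r)/r}$ is increasing, $w_t$ is therefore non-negative and non-increasing.

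Finally I would conclude with Abel summation. Writing $Y_t:=\sum_{s\le t}y_s$ and $Y^*_t:=\sum_{s\le t}y^*_s$, summation by parts gives $\sum_t w_t(y_t-y^*_t)=w_{\tau_2}\,(Y_{\tau_2}-Y^*_{\tau_2})+\sum_{t<\tau_2}(w_t-w_{t+1})(Y_t-Y^*_t)$. Every coefficient $w_t-w_{t+1}$ is non-negative because $w$ is non-increasing, and it vanishes for $t<\tau_1$ where $w$ is constant; for $t\ge\tau_1$ feasibility gives $Y_t\ge c\,\opt(t,r)=Y^*_t$, and likewise $w_{\tau_2}\ge 0$ with $Y_{\tau_2}\ge Y^*_{\tau_2}$. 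Hence every term is non-negative, establishing the lower bound and therefore $M(c)=\tau_1 f_r^{-1}\!\big(c\,\opt(\tau_1,r)/\tau_1\big)+\sum_{t=\tau_1+1}^{\tau_2}f_r^{-1}\!\big(c\,\Delta\opt(t,r)\big)$. Combined with the reduction of the first two paragraphs, this yields the stated identity. I expect the delicate point to be the monotonicity of $y^*_t$ across the junction $t=\tau_1\to\tau_1+1$, since that is what guarantees the Abel coefficients stay non-negative and hence that the evenly-spread-then-track-the-increments allocation is genuinely optimal.
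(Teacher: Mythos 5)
Your proposal is correct, and for the crucial step it takes a genuinely different route from the paper. Both proofs treat achievability identically: the evenly-spread-then-track-the-increments sequence, with telescoping, attains ratio $c$ simultaneously for every $T$, and its budget is exactly the expression in the lemma. The real work in both cases is the converse --- that no feasible sequence can meet all the cumulative-reward constraints more cheaply. The paper argues this combinatorially: it first sorts the algorithm's actions in decreasing order (WLOG), then runs an induction on $T$ from $\tau_1$ to $\tau_2$, maintaining a water-filling sequence $w(T)_t$ with invariants (I)--(IV) and using strict concavity and intermediate-value estimates at every step. You instead recast the problem as a convex program in the reward variables $y_t = f_r(x_t)$ (linear cumulative constraints, convex objective $\sum_t f_r^{-1}(y_t)$) and certify optimality of the candidate $y^*$ by the tangent-line inequality plus Abel summation. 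This is in effect a KKT certificate: the Abel coefficients $w_t - w_{t+1}$ and $w_{\tau_2}$ are dual multipliers, their nonnegativity is exactly the monotonicity of $y^*$ (including the junction inequality at $t = \tau_1$, which you correctly flag as the delicate point), and the fact that the unconstrained prefixes $t < \tau_1$ carry zero weight is what lets feasibility be invoked only where it is actually assumed. What your approach buys: it is shorter, needs no sorting and no induction, uses only concavity of $f_r$ and monotonicity of $\Delta\opt(t,r)$ (not strict concavity), and so generalizes more readily to other reward curves; the homogeneity $M(c) = c^{1/r}M(1)$ also gives attainment of the supremum cleanly, where the paper appeals to continuity of $f_r^{-1}$. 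What the paper's argument buys is independence from any convex-programming machinery and a more mechanical, picture-driven induction. One small point you should patch: converting the minimizer of $M(c)$ back into an algorithm requires checking that $x^*_t = f_r^{-1}(y^*_t)$ lies in the action set $\X = [0, \max\{1, B/\tau_1\}]$; this holds because $c \le 1$ forces $x^*_t \le c^{1/r} B/\tau_1$ for $t \le \tau_1$ and $x^*_t \le c^{1/r} B/(t-1)$ beyond (the same verification the paper performs), but your write-up asserts feasibility of the sequence only for the program, not for the action-set constraint.
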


We present a proof sketch of Lemma~\ref{lemma:alter-char} here. The main step in the proof involves showing that, for a given competitive ratio $c$, the minimum amount of resources that any online algorithm $A$ needs to be spend in order to satisfy $\min_{T \in [\tau_1, \tau_2]} R(A|T,r)/\opt(T,r) \geq c$ is given by
\begin{align*}
  \tau_1 \cdot f_r^{-1} \left(c \cdot \frac{\opt(\tau_1, r)}{\tau_1} \right) + \sum_{t= \tau_1 + 1}^{\tau_2} f_r^{-1} \left( c \cdot \Delta \opt(t, r) \right) \,.
\end{align*}
This is because $f_r$ is concave for all $r \in (0,1)$ and the resource consumption function $I$ is linear, which together imply that the marginal bang-per-buck $f'_r(x)$ (amount of reward per marginal unit of resource spent) decreases with $x$. As a consequence, an online algorithm that does not have any knowledge of $T$ (other than $T \in [\tau_1, \tau_2]$) and needs to satisfy $R(A|T,r) \geq c \cdot \opt(T,r)$ for all $T \in [\tau_1, \tau_2]$ would spend the minimum amount of resources in doing so if (i) it attains a reward of $c \cdot \opt(\tau_1, r)$ by evenly spending resources in the first $\tau_1$ steps, and (ii) it spends just enough resources to attain a reward of $c \cdot \Delta \opt(t,r)$ at each time step $t \geq \tau_1 +1$. Proving (ii) requires showing that $\Delta \opt(t,r)$ decreases with an increase in $t$, which follows from Lemma~\ref{lemma:r-opt}.
In particular, this ensures that obtaining all of $\Delta \opt(t,r)$ at time $t$ is cheaper than obtaining some of that reward at an earlier time $t' < t$. However, the proof requires a sophisticated water-filling argument to show that the aforementioned greedy strategy of minimizing the amount of resources at each time step leads to globally-minimal spending. Finally, combining Lemma~\ref{lemma:alter-char} and Lemma~\ref{lemma:r-opt} yields
\begin{align*}
  \tau_1 \cdot \left(c^* \cdot \frac{B^r \tau_1^{1-r}}{\tau_1} \right)^{1/r} + \sum_{t= \tau_1 + 1}^{\tau_2} \left( c^* \cdot [B^r t^{1-r} - B^r (t-1)^{1-r}] \right)^{1/r} \leq B
\end{align*}
for $c^* = \sup_A\min_{T \in [\tau_1, \tau_2]} R(A|T,r)/\opt(T,r)$. The above equation specifies an upper bound on $c^*$, which upon simplification leads to Theorem~\ref{thm:upper-bound}.

We conclude by noting that the upper bound of Theorem~\ref{thm:upper-bound} can be extended to the popular setting of online resource allocation with random linear rewards and consumptions (see Appendix~\ref{appendix:randomized-linear-upper-bound} for details). Moreover, the upper bound of Theorem~\ref{thm:upper-bound} holds even when the horizon $T$ is drawn from a distribution $\T$ supported on $[\tau_1, \tau_2]$ and this distribution is known to the decision maker. A proof based on strong duality can be found in Appendix~\ref{appendix:dist-upper-bound}.

\subsection{Optimizing the Target Sequence}

Having shown that no algorithm can attain a competitive ratio better than $\tilde{O}(1/\ln(\tau_2/\tau_1))$, we now show that Algorithm~\ref{alg:dual-descent} with an appropriately chosen target consumption sequence $\vec \lambda$ can achieve a competitive ratio of $\Omega(1/\ln(\tau_2/\tau_1))$ for sufficiently large $\tau_1$ and $B$. In light of Proposition~\ref{prop:master}, we can optimize the competitive ratio of Algorithm~\ref{alg:dual-descent} by finding the target consumption sequence which maximizes $\min_{T \in [\tau_1, \tau_2]} c(\vec\lambda, T)$, i.e., we need to solve the following maximin problem:
\begin{align*}
  \max_{\vec\lambda} \min_{T \in [\tau_1, \tau_2]} c(\vec\lambda, T) = \max_{\vec\lambda} \min_{T \in [\tau_1, \tau_2]} \frac{1}{T}\sum_{t=1}^T \min \left\{ \min_{1 \leq j \leq m} \frac{\lambda_{t,j}}{\rho_{T,j}}, 1 \right\} \,.
\end{align*}

The following proposition restates the above maximin problem as an LP.
\begin{proposition}\label{prop:uncertainty-window-LP}
  For budget $B$ and uncertainty window $[\tau_1, \tau_2]$, we have
  \begin{align*}
    \max_{\vec\lambda} \min_{T \in [\tau_1, \tau_2]} c(\vec\lambda, T) \quad  = \quad  \max_{z, y, \lambda} \quad &z\\
    \text{s.t.} \quad &z \leq \frac{1}{T} \sum_{t=1}^T y_{T,t} &\forall T \in [\tau_1, \tau_2]\\
    &y_{T,t} \leq \frac{\lambda_{t,j}}{\rho_{T,j}} &\forall j \in [m], T \in [\tau_1, \tau_2], t \in [T]\\
    &y_{T,t} \leq 1 &\forall T \in [\tau_1, \tau_2], t \in [T]\\
    &\sum_{t=1}^{\tau_2} \lambda_t \leq B\\
    &\lambda \geq 0
  \end{align*}
\end{proposition}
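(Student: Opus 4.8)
The plan is to recognize this as the standard epigraph linearization of a maximin problem whose inner objective is assembled from pointwise minima of linear functions. The only facts I need are the elementary identity $\min_i a_i = \max\{y : y \le a_i\ \forall i\}$ for finite collections, together with the observation (already noted in the paper) that $c(\vec\lambda, T)$ is concave in $\vec\lambda$ because each summand is a minimum of finitely many linear functions of $\vec\lambda$. Since $T$ ranges over the integers in $[\tau_1,\tau_2]$, the outer minimum is over a finite set, so the resulting program is a genuine (finite) LP.

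First I would linearize the inner objective. For fixed $\vec\lambda$ and $T$, introduce for each $t \in [T]$ an auxiliary variable $y_{T,t}$ and use the minimum-as-maximum identity to write
\[
\min\left\{ \min_{j} \tfrac{\lambda_{t,j}}{\rho_{T,j}},\, 1 \right\} = \max\left\{ y_{T,t} : y_{T,t} \le \tfrac{\lambda_{t,j}}{\rho_{T,j}}\ \forall j,\ y_{T,t} \le 1 \right\}.
\]
Because the constraints on $y_{T,t}$ for distinct $t$ are independent and each $y_{T,t}$ enters $\tfrac1T\sum_t y_{T,t}$ with a positive coefficient, the maximization distributes over the sum, giving $c(\vec\lambda, T) = \max\{ \tfrac1T\sum_{t=1}^T y_{T,t} : \text{the } y\text{-constraints hold}\}$. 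Applying the same identity to the outer minimum over $T$ via a single scalar $z$, and noting that the blocks $\{y_{T,t}\}_t$ are \emph{separate} variables for each $T$ (so the individual maximizations do not interfere), I would collapse $\max_{\vec\lambda}\min_T c(\vec\lambda, T)$ into a single joint maximization over $(z, y, \vec\lambda)$, which is exactly the stated LP.

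To make this rigorous I would prove the two inequalities directly by exhibiting feasible points, avoiding any genuine min-max interchange. For $\ge$: take an optimal $\vec\lambda^\star$ for the maximin, set $y_{T,t} = \min\{\min_j \lambda^\star_{t,j}/\rho_{T,j}, 1\}$ and $z = \min_{T} c(\vec\lambda^\star, T)$; these satisfy every LP constraint since $\tfrac1T\sum_t y_{T,t} = c(\vec\lambda^\star, T) \ge z$, so the LP value is at least the maximin value. For $\le$: take an LP-optimal $(z^\star, y^\star, \vec\lambda^\star)$; the $y$-constraints force $y^\star_{T,t} \le \min\{\min_j \lambda^\star_{t,j}/\rho_{T,j}, 1\}$, hence $z^\star \le \tfrac1T\sum_t y^\star_{T,t} \le c(\vec\lambda^\star, T)$ for every $T$, so $z^\star \le \min_T c(\vec\lambda^\star, T) \le \max_{\vec\lambda}\min_T c(\vec\lambda, T)$. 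Combining the two inequalities gives the claimed equality.

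The one point requiring care is that the LP feasible set imposes only $\lambda \ge 0$ and $\sum_t \lambda_t \le B$, whereas the definition of a target consumption sequence additionally demands $\lambda_1 > 0$. I expect this to be the sole (minor) obstacle: I would argue that dropping the strict inequality does not change the optimal value, either because the maximin objective is continuous in $\vec\lambda$ so the supremum over $\{\lambda_1 > 0\}$ equals that over $\{\lambda_1 \ge 0\}$, or by observing that any optimal $\vec\lambda^\star$ can be perturbed to satisfy $\lambda_1 > 0$ with arbitrarily small loss (and in fact a sensible optimum always allocates positive consumption in the first period, since otherwise no reward is attainable when $T = \tau_1$). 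With this caveat addressed, the equality follows from the two explicit constructions above.
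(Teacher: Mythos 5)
Your proof is correct and takes essentially the same route as the paper's: both directions are established by exhibiting feasible points, mapping any target sequence $\vec\lambda$ to the feasible LP triple $(z,y,\vec\lambda)$ with $y_{T,t} = \min\{\min_j \lambda_{t,j}/\rho_{T,j},1\}$ for one inequality, and conversely using the $y$-constraints to bound $z^\star \leq \min_{T} c(\vec\lambda^\star, T)$ for the other. Your closing remark about the $\lambda_1 > 0$ requirement in the definition of a target consumption sequence is in fact a detail the paper silently glosses over (its proof asserts that any LP-feasible $\lambda$ ``is a target consumption sequence''), and your continuity/perturbation fix handles it correctly.
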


\begin{figure}[t!]
  \begin{subfigure}{0.33\textwidth}
    \includegraphics[width = \linewidth]{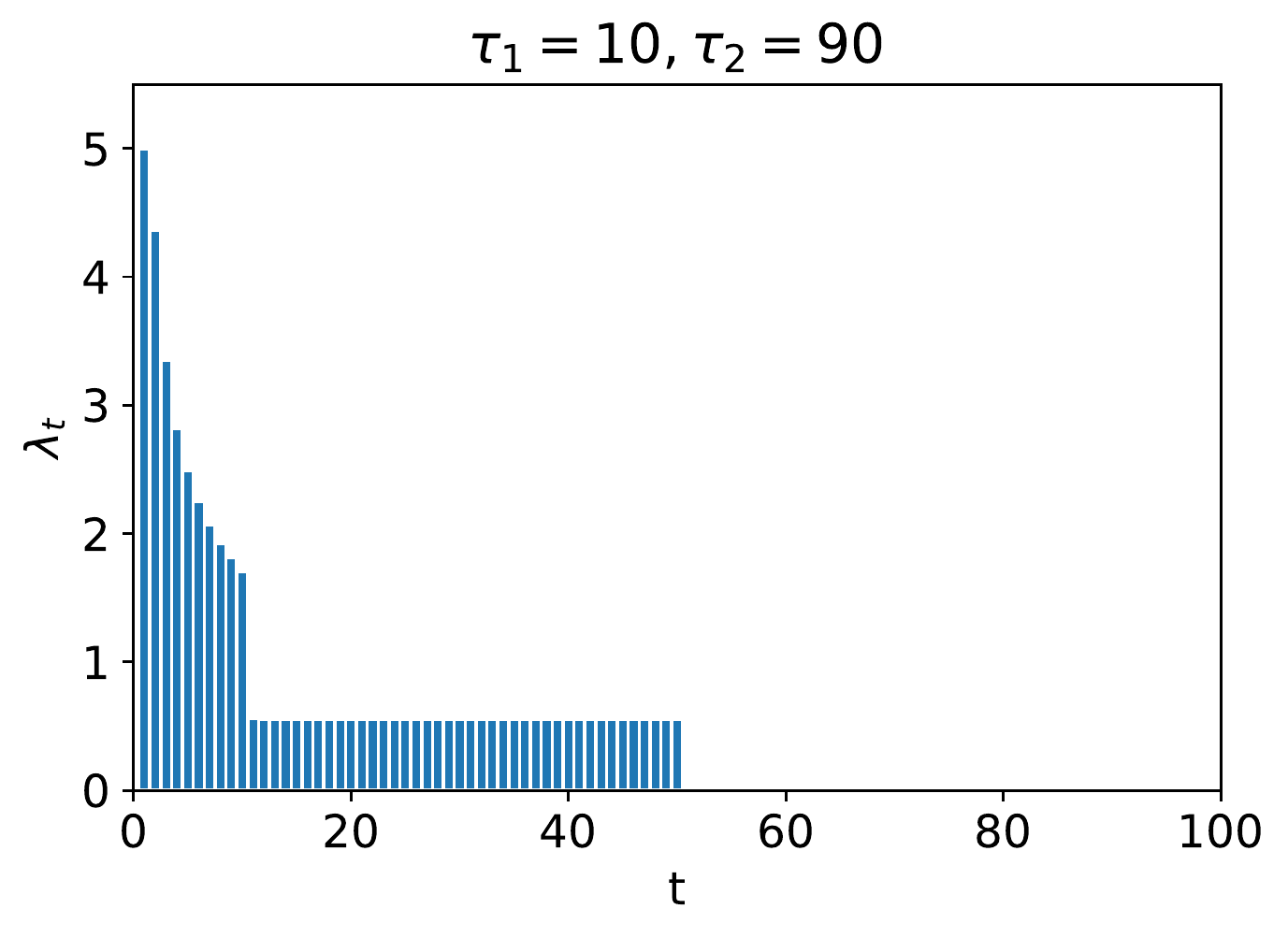}
  \end{subfigure}%
  \begin{subfigure}{0.33\textwidth}
    \includegraphics[width = \linewidth]{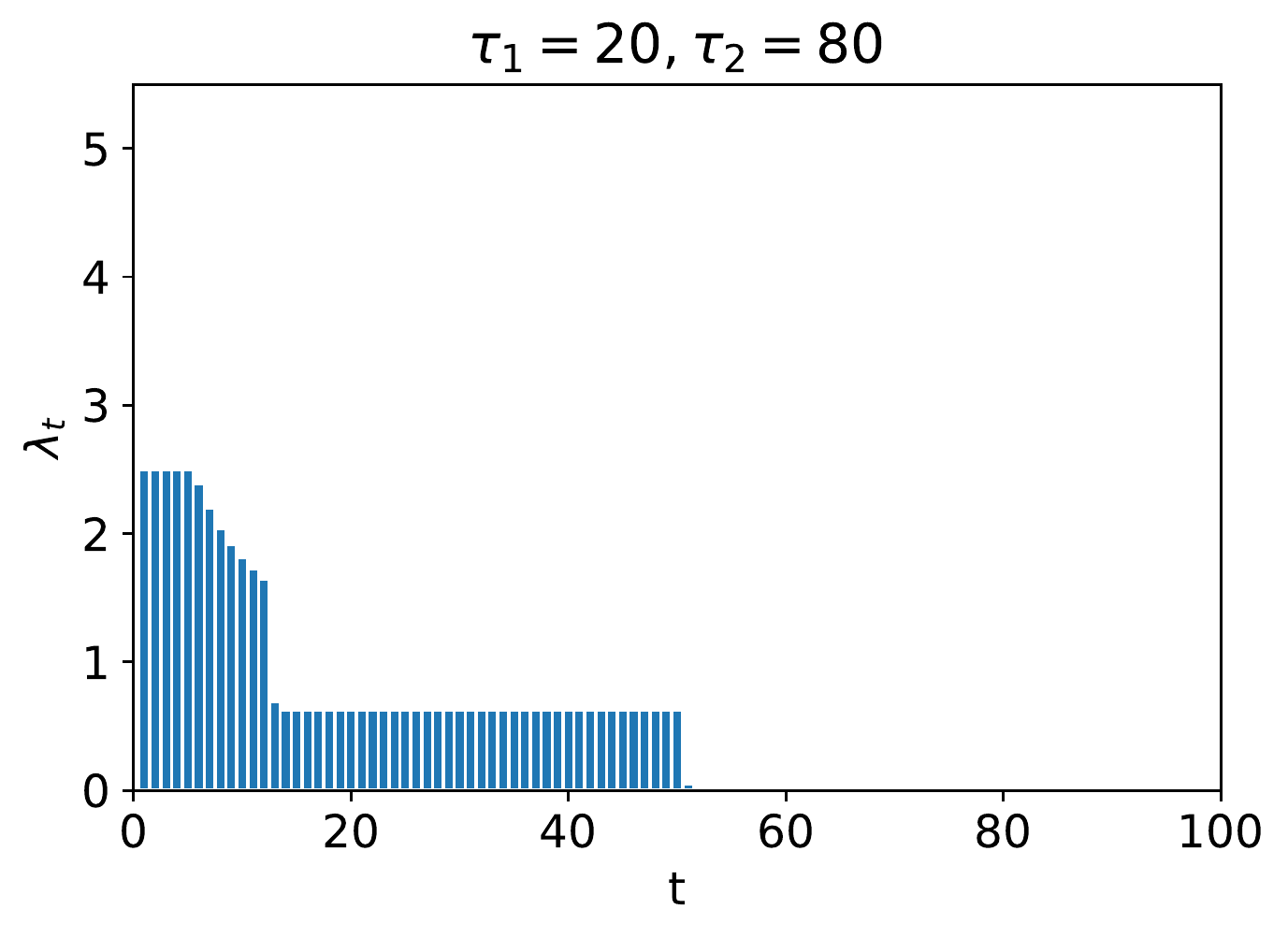}
  \end{subfigure}%
  \begin{subfigure}{0.33\textwidth}
    \includegraphics[width = \linewidth]{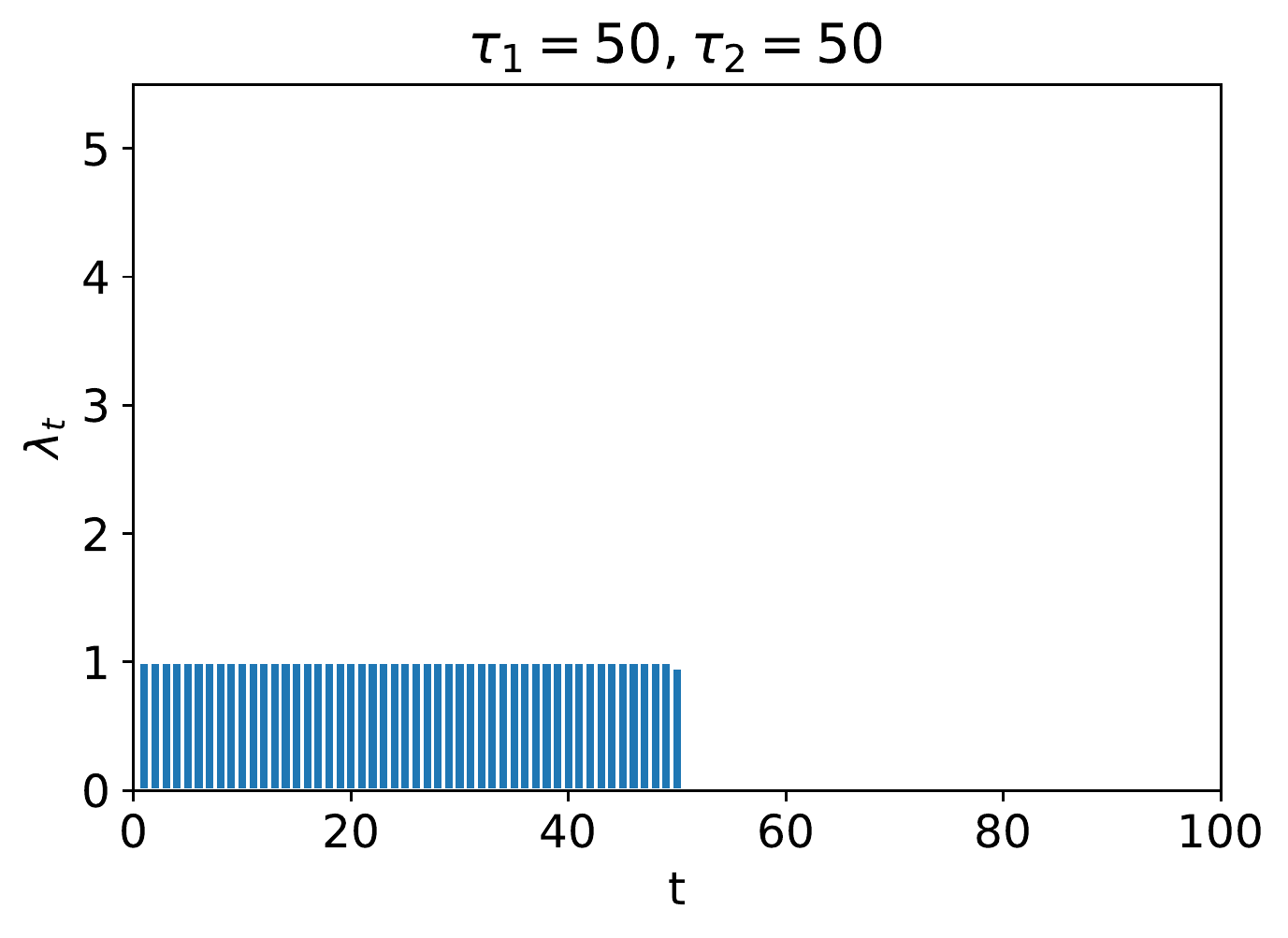}
  \end{subfigure}
  \vspace{-1em}
  \caption{The optimal target consumption sequence for various possible uncertainty windows centered on $T = 50$. Here, number of resources $m=1$ and initial resource endowment $B = 50$.}
  \label{fig:comp-target-seq}
\end{figure}

Proposition~\ref{prop:uncertainty-window-LP} states that we can efficiently compute the optimal target consumption sequence by solving an LP. Figure~\ref{fig:comp-target-seq} plots the optimal target sequences from Proposition~\ref{prop:uncertainty-window-LP} for different uncertainty windows. The optimal target consumption sequences are decreasing as the algorithm consumes resources more aggressively early on to prevent having too many leftover resources if the horizon ends being short. Moreover, as the uncertainty window becomes more narrow, the consumption sequence becomes less variable.

\begin{figure}[t!]
    \centering
    \includegraphics[width= 0.8\textwidth]{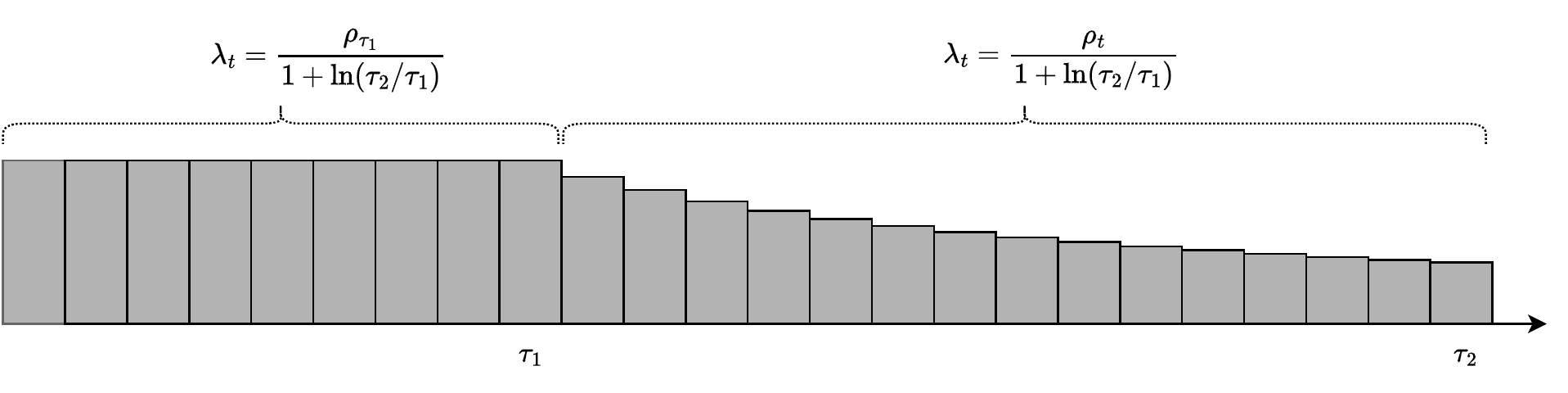}
    \caption{A simple target consumption sequence that achieves a competitive ratio of $1/ (1 + \ln(\tau_2/\tau_1))$. The height of the bars represents $\lambda_t$.}
    \label{fig:simple-log}
\end{figure}

To see that Algorithm~\ref{alg:dual-descent} with the optimal target consumption sequence from the above LP has an asymptotic competitive ratio of $\Omega(1/\ln(\tau_2/\tau_1))$, consider the following target consumption sequence (depicted in Figure~\ref{fig:simple-log}):
\begin{align}\label{eqn:simple-target-sequence}
  \lambda_t \coloneqq \begin{cases}
    \frac{1}{1 + \ln(\tau_2/\tau_1)} \cdot \frac{B}{\tau_1} = \frac{1}{1 + \ln(\tau_2/\tau_1)} \cdot \rho_{\tau_1}  &\text{if } t \leq \tau_1\,,\\
    \frac{1}{1 + \ln(\tau_2/\tau_1)} \cdot \frac{B}{t} = \frac{1}{1 + \ln(\tau_2/\tau_1)} \cdot \rho_t &\text{if } \tau_1+1 \leq t \leq \tau_2\,.
  \end{cases}
\end{align}

It is easy to see that it satisfies the budget constraint:
\begin{align*}
    \sum_{t=1}^{\tau_2} \lambda_t = \frac{B}{1 + \ln(\tau_2/\tau_1)} \cdot\left( \tau_1 \cdot \frac{1}{\tau_1} + \sum_{t=\tau_1+1}^{\tau_2} \frac{1}{t}\right) \leq \frac{B}{1 + \ln(\tau_2/\tau_1)} \cdot\left( 1 + \ln\left( \frac{\tau_2}{\tau_1} \right)\right) = B\,.
\end{align*}

Moreover, since $\rho_t \geq \rho_T$ for all $t \leq T$ and $T \in [\tau_1, \tau_2]$, we get
\begin{align*}
  \frac{1}{T}\sum_{t=1}^T \min \left\{ \min_{1 \leq j \leq m} \frac{\lambda_{t,j}}{\rho_{T,j}}, 1 \right\} \geq  \frac{1}{T}\sum_{t=1}^T \frac{1}{1 + \ln(\tau_2/\tau_1)} = \frac{1}{1 + \ln(\tau_2/\tau_1)} \,,
\end{align*}
where the inequality follows from the fact that $\rho_T \leq \rho_t$ for all $t \in [\tau_1, T]$ and the definition of $\lambda$ as given in \eqref{eqn:simple-target-sequence}.

Since $\vec\lambda$ from \eqref{eqn:simple-target-sequence} is just one possible choice of the target consumption sequence, we have
\begin{align*}
  \max_{\vec\lambda} \min_{T \in [\tau_1, \tau_2]} c(\vec\lambda, T) \geq \frac{1}{1 + \ln(\tau_2/\tau_1)} \,.
\end{align*}

\begin{figure}[t!]
    \centering
    \includegraphics[width= 0.5\textwidth]{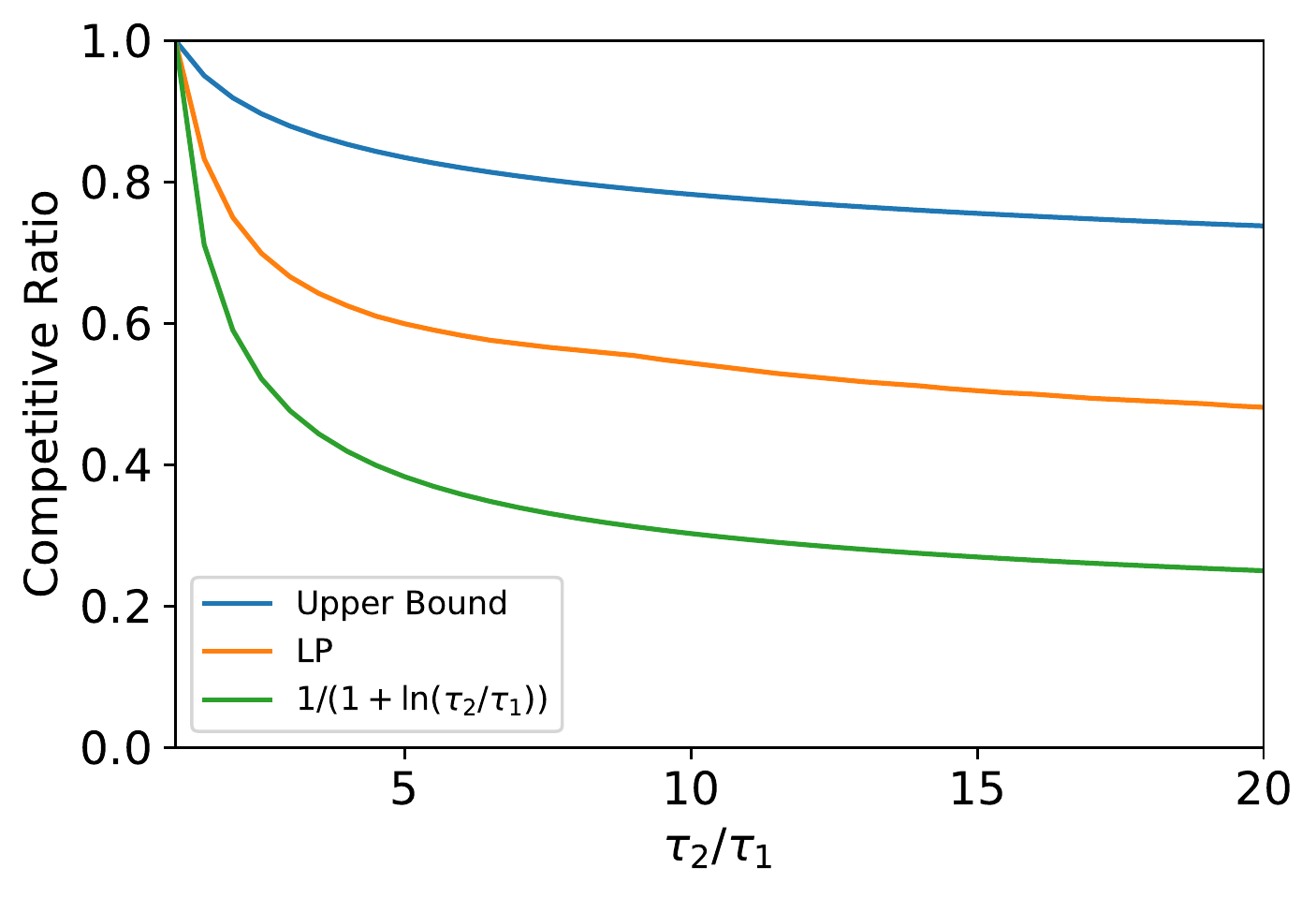}
    \caption{The competitive ratios achieved using the target consumption sequence from the LP in Proposition~\ref{prop:uncertainty-window-LP}, and the simple one defined in \eqref{eqn:simple-target-sequence} that yields a competitive ratio of $1/ (1 + \ln(\tau_2/\tau_1))$.}
    \label{fig:better-than-log}
\end{figure}

Therefore, we get that Algorithm~\ref{alg:dual-descent} in combination with the target consumption sequence returned by the LP in Proposition~\ref{prop:uncertainty-window-LP} achieves a degradation of $1/(1 + \ln(\tau_2/\tau_1))$ in the competitive ratio as a function of the multiplicative uncertainty $\tau_2/\tau_1$, which is optimal up to constants and a $\ln\ln(\tau_2/\tau_1)$ factor. In fact, as Figure~\ref{fig:better-than-log} shows, the target consumption sequence from the LP performs much better than $1/(1 + \ln{\tau_2/\tau_1})$, even for small values of $\tau_2/\tau_1$. In Section~\ref{sec:faster-alg}, we will give a faster algorithm which leverages the structure of the problem to optimize the target sequence and does not require solving an LP.

We conclude with a discussion on the structural similarity of the results of this subsection with those of \citet{besbes2014dynamic}, who studied the dynamic pricing problem (special case of online resource allocation) under demand shifts. They worked under the assumption that the request distribution is perfectly known, and showed that the optimal dynamic programming solution has a non-decreasing resource consumption sequence when the horizon is uncertain. The target consumption sequences described in this section are also non-increasing, leading to a similar structural insight for the more general online resource allocation problem with unknown request distribution.

\section{Incorporating Predictions about the Horizon}\label{sec:advice}

In the previous section, we did not assume that we had any information about the horizon $T$ other than the fact that it belonged to the uncertainty window $[\tau_1,\tau_2]$. This may be too pessimistic in settings where the environment is well behaved and machine learning algorithms can be deployed to make predictions about the horizon. In this section, we show that our Variable Target Dual Descent algorithm allows us to easily incorporate predictions by optimizing the target sequences. We formulate an LP to optimize the target sequence which allows the decision-maker to smoothly interpolate between the uncertainty-window setting and the known-horizon setting, thereby catering to different levels of confidence in the prediction.

First, we define the performance metrics we will use to measure the performance of an online algorithm capable of incorporating predictions. These metrics are pervasive in the Algorithms-with-Predictions literature (see \citealt{mitzenmacher2020algorithms} for an excellent survey) and are aimed at capturing the performance of the algorithm both when the prediction is accurate and in the worst case when the instance bears no resemblance to the prediction. To this end, in addition to the competitive-ratio metric defined in Section~\ref{sec:model}, which captures the worst-case performance, we introduce the notion of consistency to capture the performance of the algorithm when the prediction is accurate. Let $T_P \in [\tau_1,\tau_2]$ denote the predicted value of the horizon and let $A(T_P)$ denote algorithm $A$ when provided with the prediction $T_P$. We say that an algorithm $A$ is $\beta$-consistent on prediction $T_P$ and $\gamma$-competitive if it satisfies
\begin{align*}
  c(A(T_P)|T_P, \PP) = \frac{\E_{\vec\gamma \sim \PP^T}\left[R(A(T_P)|T_P, \vec \gamma) \right]}{\E_{\vec\gamma \sim \PP^T}\left[\opt(T_P, \vec\gamma)\right]} \geq \beta\,,
\end{align*}
and
\begin{align*}
  \inf_{T \in [\tau_1, \tau_2]} c(A(T_P)|T, \PP) = \inf_{T \in [\tau_1, \tau_2]} \frac{\E_{\vec\gamma \sim \PP^T}\left[R(A(T_P)|T, \vec \gamma) \right]}{\E_{\vec\gamma \sim \PP^T}\left[\opt(T, \vec\gamma)\right]} \geq \gamma\,,
\end{align*}
for all request distributions $\PP$. In other words, an algorithm which is $\beta$-consistent on prediction $T_P$ and $\gamma$-competitive is guaranteed to get a $\beta$ fraction of the hindsight optimal reward in expectation if the prediction comes true and it is guaranteed to attain a $\gamma$ fraction of the hindsight optimal reward for every horizon $T \in [\tau_1,\tau_2]$ (whether or not it conforms to the prediction). Consistency and competitiveness are conflicting objectives and different decision makers might have different preferences over them. In particular, increasing consistency usually leads to lower competitiveness. Consequently, our goal is to find an algorithm which can trade off the two quantities, allowing us to interpolate between the known-horizon and the uncertainty-window settings.

Once again, the versatility of Algorithm~\ref{alg:dual-descent} and its ability to reduce the problem of finding the optimal algorithm to that of finding the optimal target consumption sequence comes to the fore. In particular, Proposition~\ref{prop:master} implies that Algorithm~\ref{alg:dual-descent} with target consumption sequence $\vec\lambda(T_P)$ for prediction $T_P \in [\tau_1,\tau_2]$ is $\beta'$-consistent for prediction $T_P$ and $\gamma'$-competitive with
\begin{align*}
  \beta' = c(\vec\lambda(T_P), T_P) - \epsilon \quad \text{ and } \quad \gamma' = \inf_{T\in [\tau_1,\tau_2]} c(\vec\lambda(T_P), T) - \epsilon \,.
\end{align*}

Therefore, given a prediction $T_P$ and a required level of competitiveness $\gamma' = \gamma - \epsilon$, we need to solve the following optimization problem in order to maximize consistency while achieving $\gamma'$-competitiveness:
\begin{align*}
  \max_{\vec\lambda} c(\vec\lambda, T_P) \quad \text{s.t.} \quad \inf_{T \in [\tau_1,\tau_2]} c(\vec\lambda, T) \geq \gamma \,.
\end{align*}
As in the uncertainty-window setting, we can rewrite this as an LP.
\begin{proposition}\label{prop:advice-LP}
  For budget $B$, uncertainty window $[\tau_1,\tau_2]$, predicted horizon $T_P \in [\tau_1,\tau_2]$ and required level of competitiveness $\gamma' = \gamma -\epsilon$, we have
  \begin{align*}
    \max_{\vec\lambda} \quad &c(\vec\lambda, T_P) &  &= &  \max_{\lambda, y} \quad &\frac{1}{T_P} \sum_{t=1}^{T_P} y_{T_P, t}\\
    \text{s.t.} \quad &\min_{T \in [\tau_1, \tau_2]} c(\vec\lambda, T) \geq \gamma &&& \text{s.t.}\quad &\gamma \leq \frac{1}{T} \sum_{t=1}^T y_{T,t} &\forall T \in [\tau_1, \tau_2]\\
    &&&&&y_{T,t} \leq \frac{\lambda_{t,j}}{\rho_{T,j}} &\forall j \in [m], T \in [\tau_1, \tau_2], t \in [T]\\
    &&&&&y_{T,t} \leq 1 &\forall T \in [\tau_1, \tau_2], t \in [T]\\
    &&&&&\sum_{t=1}^{\tau_2} \lambda_t \leq B\\
    &&&&&\lambda \geq 0
  \end{align*}
\end{proposition}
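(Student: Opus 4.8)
The plan is to establish the claimed equality by exhibiting the two inequalities separately, exactly as in the proof of Proposition~\ref{prop:uncertainty-window-LP}; the only structural change is that here competitiveness enters as a constraint rather than as the objective, while the consistency $c(\vec\lambda, T_P)$ becomes the objective. The whole argument rests on the standard hypograph linearization of the concave, piecewise-linear term appearing in the definition of $c(\vec\lambda, T)$. Concretely, for each pair $(T,t)$ with $t \leq T$, the quantity $\min\{\min_{j} \lambda_{t,j}/\rho_{T,j},\, 1\}$ is the largest value $y_{T,t}$ satisfying $y_{T,t} \leq \lambda_{t,j}/\rho_{T,j}$ for every $j \in [m]$ together with $y_{T,t} \leq 1$. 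I would first record the identity $c(\vec\lambda, T) = \frac{1}{T}\sum_{t=1}^T \min\{\min_{j} \lambda_{t,j}/\rho_{T,j},\, 1\}$ and note that the objective $\frac{1}{T_P}\sum_{t=1}^{T_P} y_{T_P,t}$ and each competitiveness constraint $\frac{1}{T}\sum_{t=1}^T y_{T,t} \geq \gamma$ are nondecreasing in the $y$-variables, which is what makes relaxing equality to inequality in the linearization lossless. The defining constraints of a target consumption sequence, namely $\sum_{t=1}^{\tau_2}\lambda_t \leq B$ and $\lambda \geq 0$, are common to both formulations and carry over verbatim.

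For the $\leq$ direction, I would take any $\vec\lambda$ feasible for the left-hand problem (so $\min_{T} c(\vec\lambda, T) \geq \gamma$ and the budget and nonnegativity constraints hold) and set $y_{T,t} := \min\{\min_{j} \lambda_{t,j}/\rho_{T,j},\, 1\}$ for all admissible $(T,t)$. By construction this satisfies $y_{T,t} \leq \lambda_{t,j}/\rho_{T,j}$ and $y_{T,t} \leq 1$; moreover $\frac{1}{T}\sum_{t=1}^T y_{T,t} = c(\vec\lambda, T) \geq \gamma$ for every $T \in [\tau_1,\tau_2]$, so $(\lambda, y)$ is LP-feasible, and its objective equals $\frac{1}{T_P}\sum_{t=1}^{T_P} y_{T_P,t} = c(\vec\lambda, T_P)$. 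Hence the LP optimum is at least the value of the left-hand problem.

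For the reverse inequality, I would take any $(\lambda, y)$ feasible for the LP. Combining the two families of upper bounds gives $y_{T,t} \leq \min\{\min_{j} \lambda_{t,j}/\rho_{T,j},\, 1\}$, so summing over $t$ yields $\frac{1}{T}\sum_{t=1}^T y_{T,t} \leq c(\vec\lambda, T)$ for every $T$. The competitiveness constraints then force $c(\vec\lambda, T) \geq \gamma$ for all $T \in [\tau_1,\tau_2]$, so this same $\vec\lambda$ is feasible for the left-hand problem; and its consistency satisfies $c(\vec\lambda, T_P) \geq \frac{1}{T_P}\sum_{t=1}^{T_P} y_{T_P,t}$, so the left-hand problem attains a value at least as large as the LP objective. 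Taking suprema over feasible points on each side yields equality.

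I do not anticipate a genuine obstacle: the result is a routine epigraph-style reformulation, and the only point requiring care is the bookkeeping that the variables $y_{T,t}$ are shared between the objective (through $T = T_P$) and the competitiveness constraints, so one cannot simply optimize them term by term. This coupling is precisely why the two-sided inequality argument above—rather than a direct substitution of the optimal $y$—is the cleanest route, since it handles the shared variables automatically.
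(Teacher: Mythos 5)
Your proof is correct and follows essentially the same route as the paper's own proof: both directions use the identical constructions (setting $y_{T,t} = \min\{\min_j \lambda_{t,j}/\rho_{T,j},\, 1\}$ for one inclusion, and using the LP constraints to deduce $y_{T,t} \leq \min\{\min_j \lambda_{t,j}/\rho_{T,j},\, 1\}$ and hence feasibility of $\vec\lambda$ for the other). Nothing is missing; your remark about the $y$-variables being shared between objective and constraints is a fair observation, though the two-sided argument handles it just as the paper does.
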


\begin{remark}
  Our framework can also accommodate distributional predictions about the horizon, leading to a similar LP with the only difference being an additional expectation over the predicted horizon $T_P$ in the objective.
\end{remark}

\begin{figure}[t!]
  \begin{subfigure}{0.33\textwidth}
    \includegraphics[width = \linewidth]{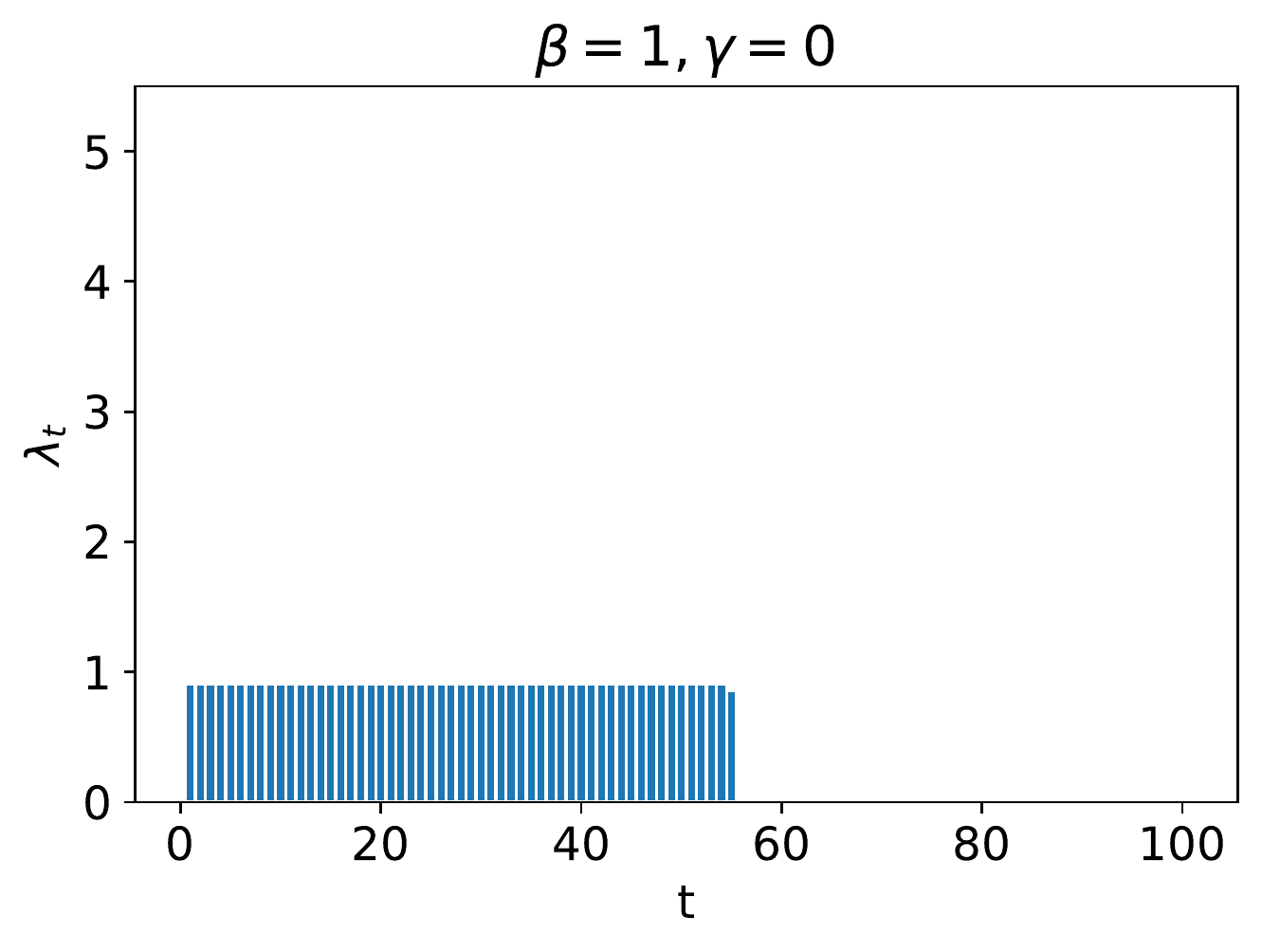}
  \end{subfigure}%
  \begin{subfigure}{0.33\textwidth}
    \includegraphics[width = \linewidth]{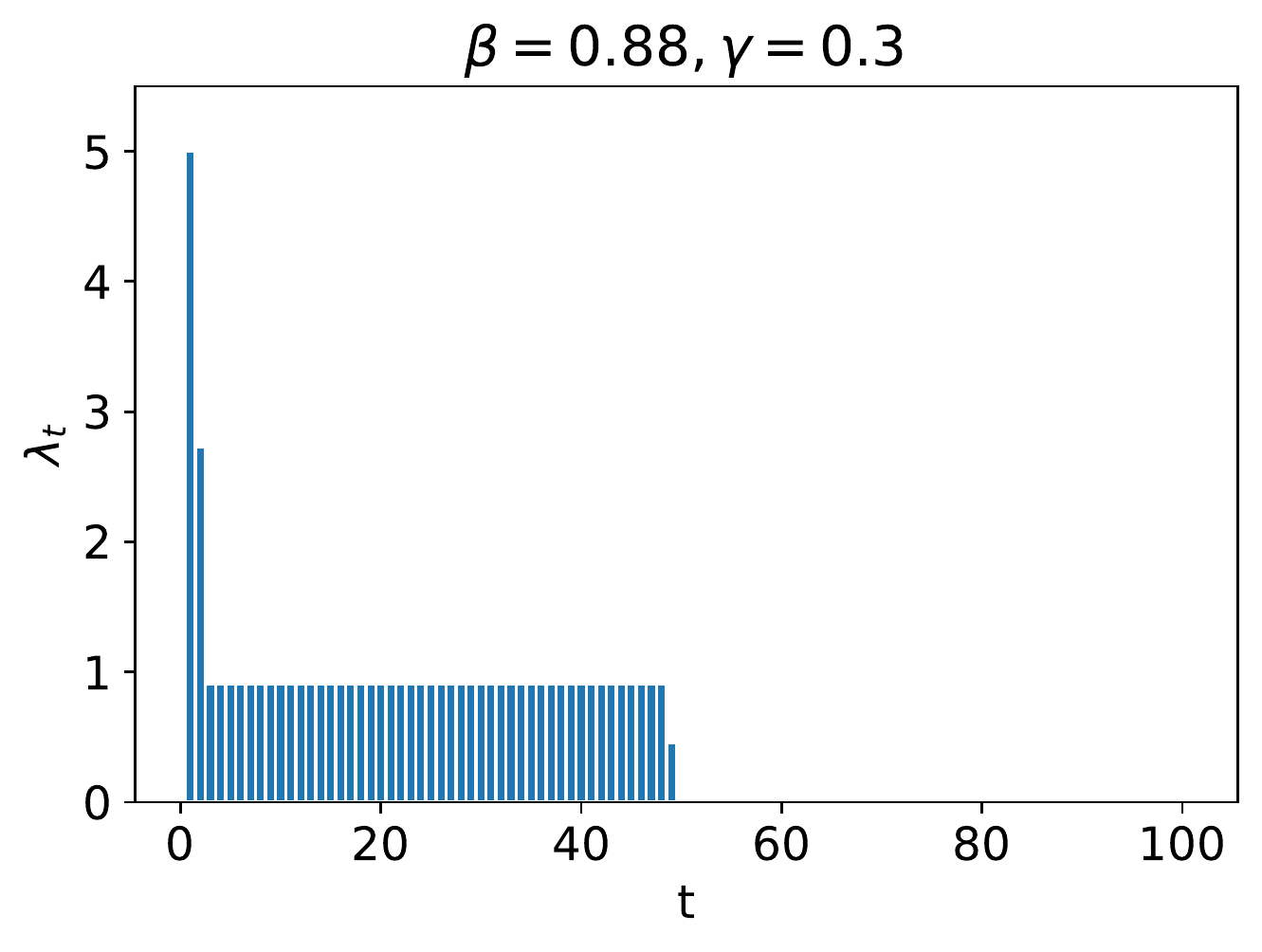}
  \end{subfigure}%
  \begin{subfigure}{0.33\textwidth}
    \includegraphics[width = \linewidth]{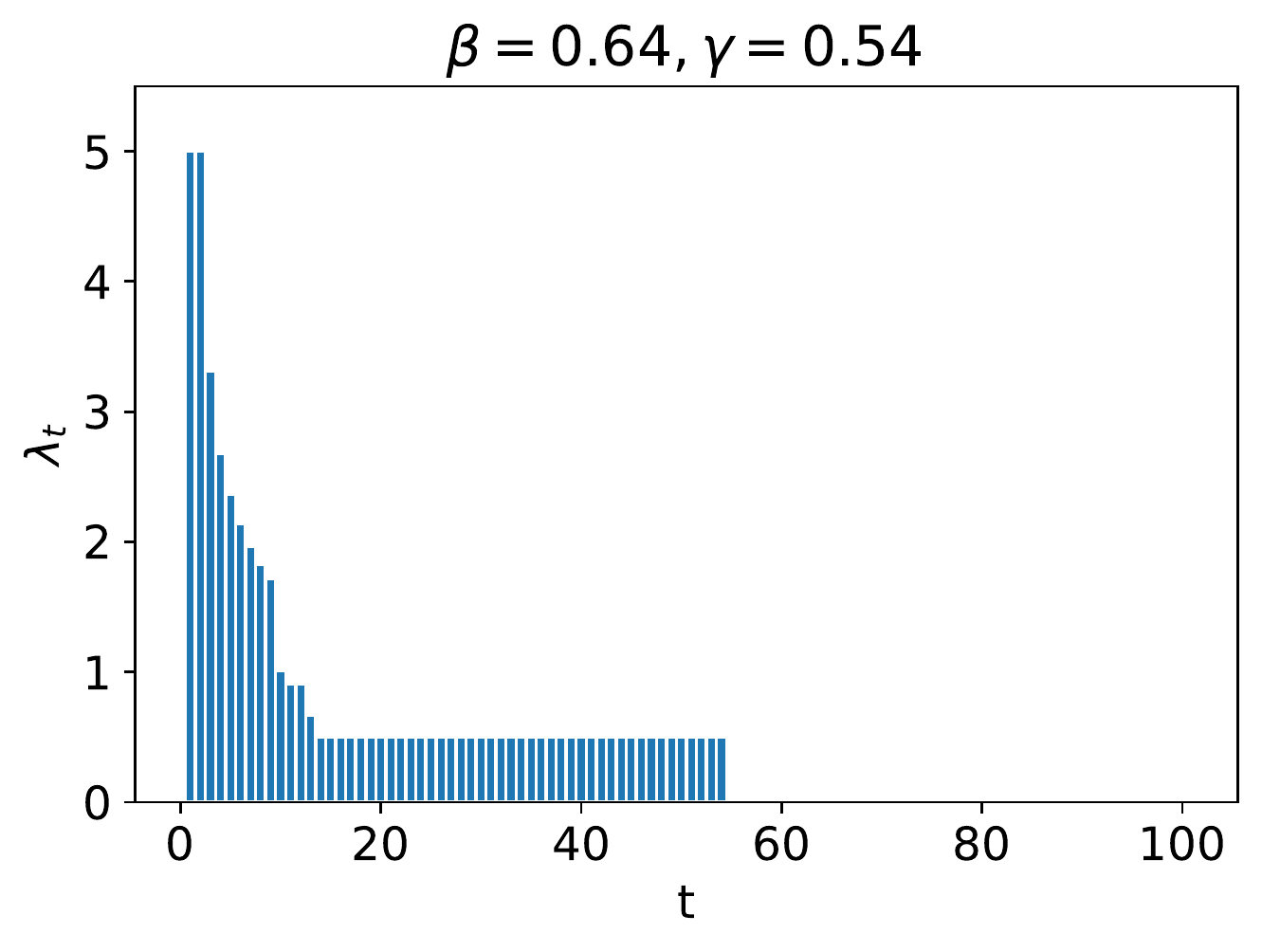}
  \end{subfigure}
  \caption{The optimal target consumption sequence for various values of required levels of competitiveness $\gamma$. Here, $m=1$, $\tau_1 = 10$, $\tau_2 = 100$, $B = 50$ and $T_P = 55$. Moreover, $0.54$ is the competitive ratio of the optimal target sequence, i.e., 0.54 is the optimal value of the LP in Proposition~\ref{prop:uncertainty-window-LP}. The sequences lose consistency and gain competitiveness from left to right.}
  \label{fig:target-seq}
\end{figure}

\begin{figure}[t!]
  \begin{subfigure}{0.33\textwidth}
    \includegraphics[width = \linewidth]{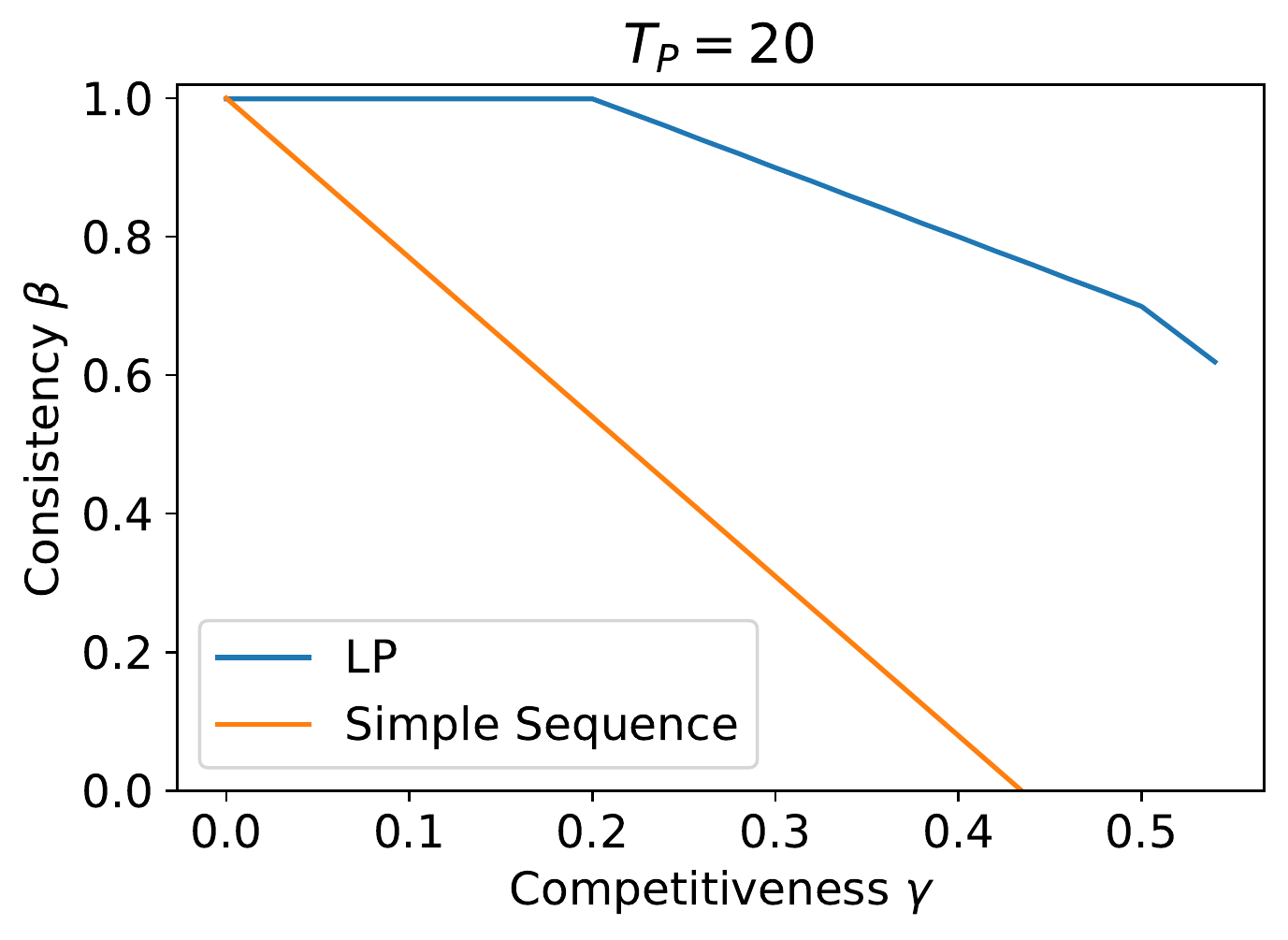}
  \end{subfigure}%
  \begin{subfigure}{0.33\textwidth}
    \includegraphics[width = \linewidth]{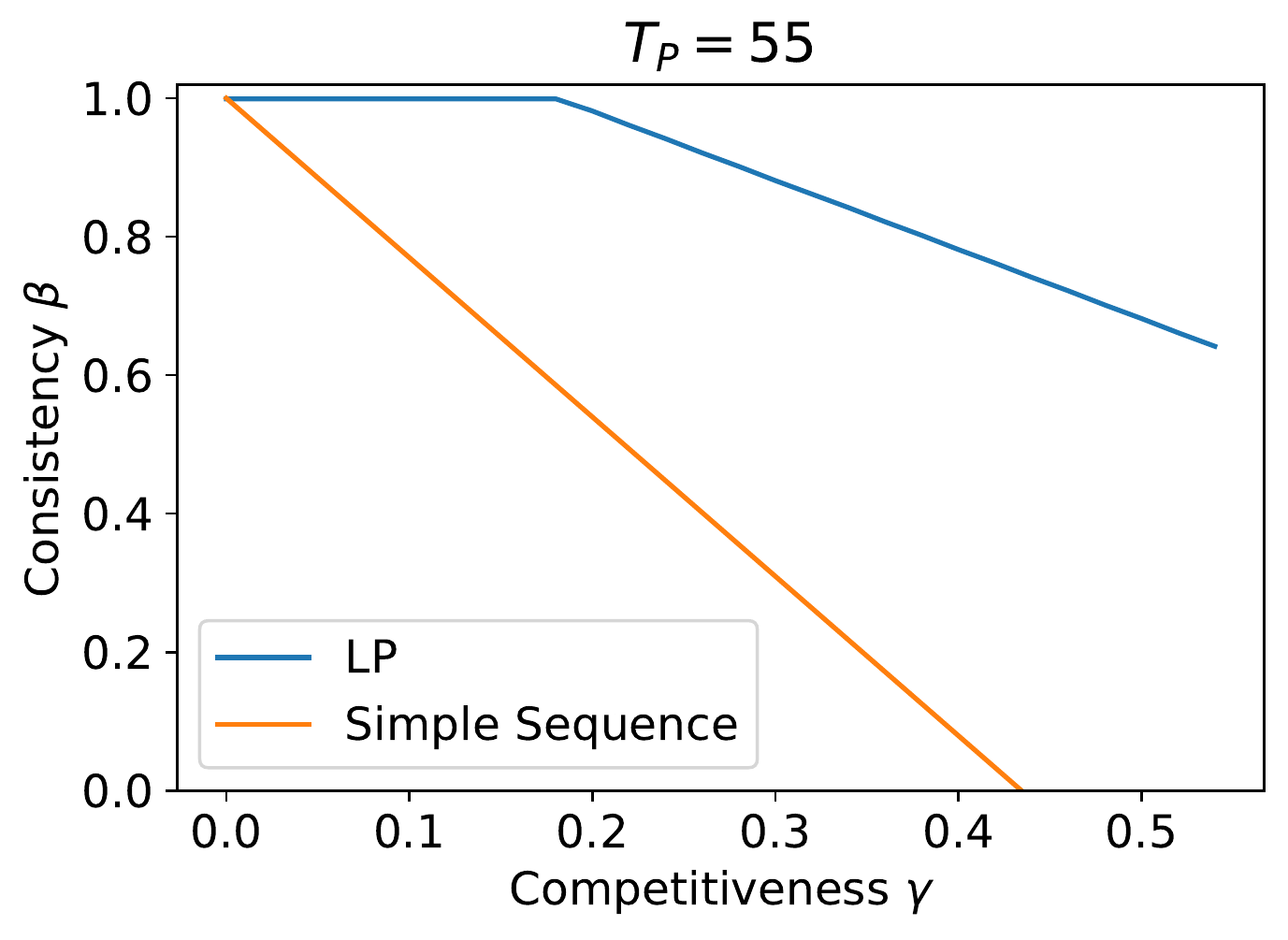}
  \end{subfigure}%
  \begin{subfigure}{0.33\textwidth}
    \includegraphics[width = \linewidth]{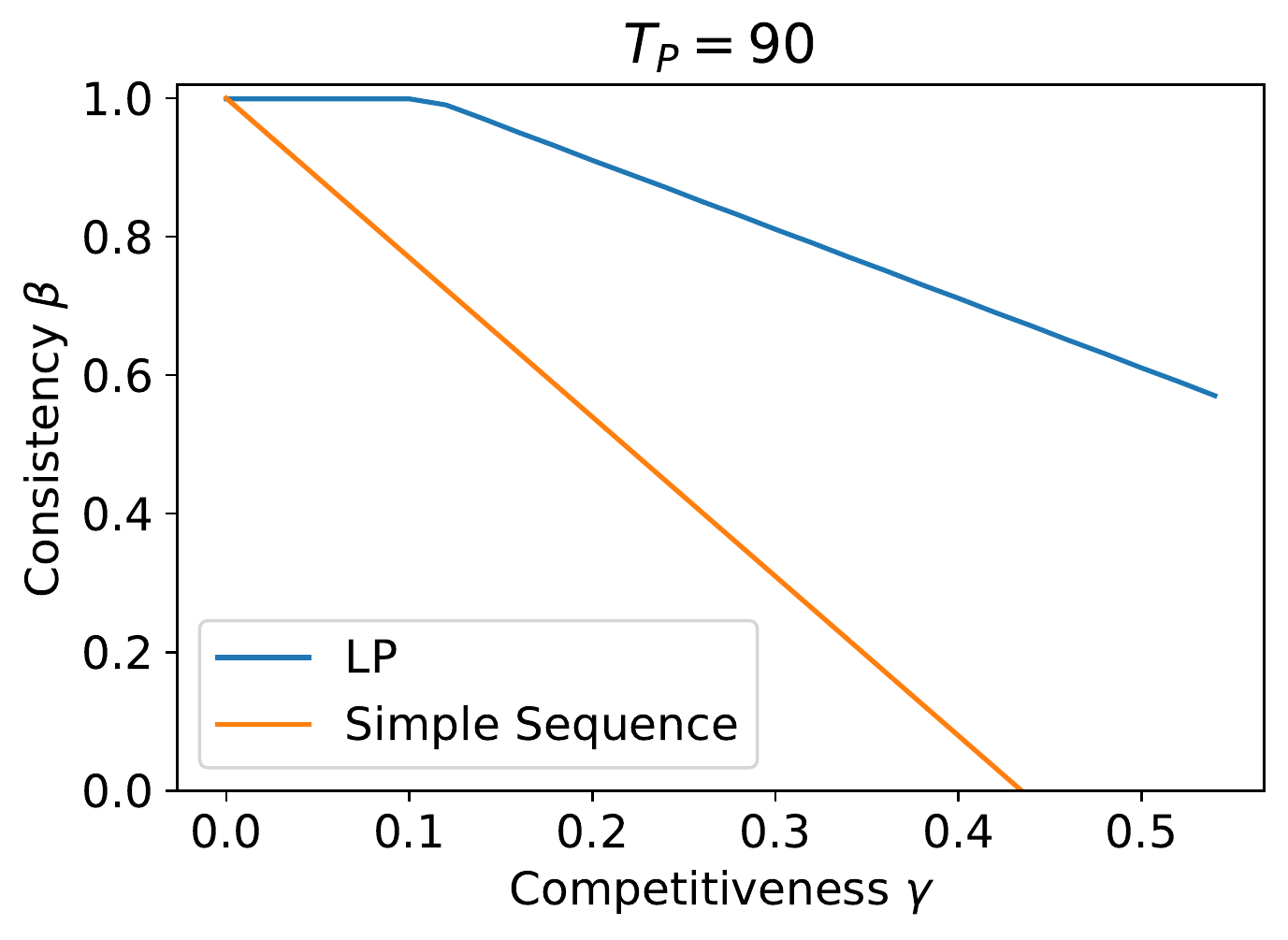}
  \end{subfigure}
  \caption{The consistency-competitiveness curves for the LP from Proposition~\ref{prop:advice-LP} and the simple target sequence from \eqref{eqnq:simple-advice-seq}, with predicted horizon $T_P \in \{20,55,90\}$. Here, $m=1$, $\tau_1 = 10$, $\tau_2 = 100$ and $B = 50$. Consistency $\beta = 1$ corresponds to the known-horizon setting and competitiveness $\gamma = 0.54$ corresponds to the largest possible competitiveness which can be obtained by optimizing the target sequence (Proposition~\ref{prop:uncertainty-window-LP}).}
  \label{fig:frontier}
\end{figure}

\begin{figure}[t!]
  \begin{subfigure}{0.33\textwidth}
    \includegraphics[width = \linewidth]{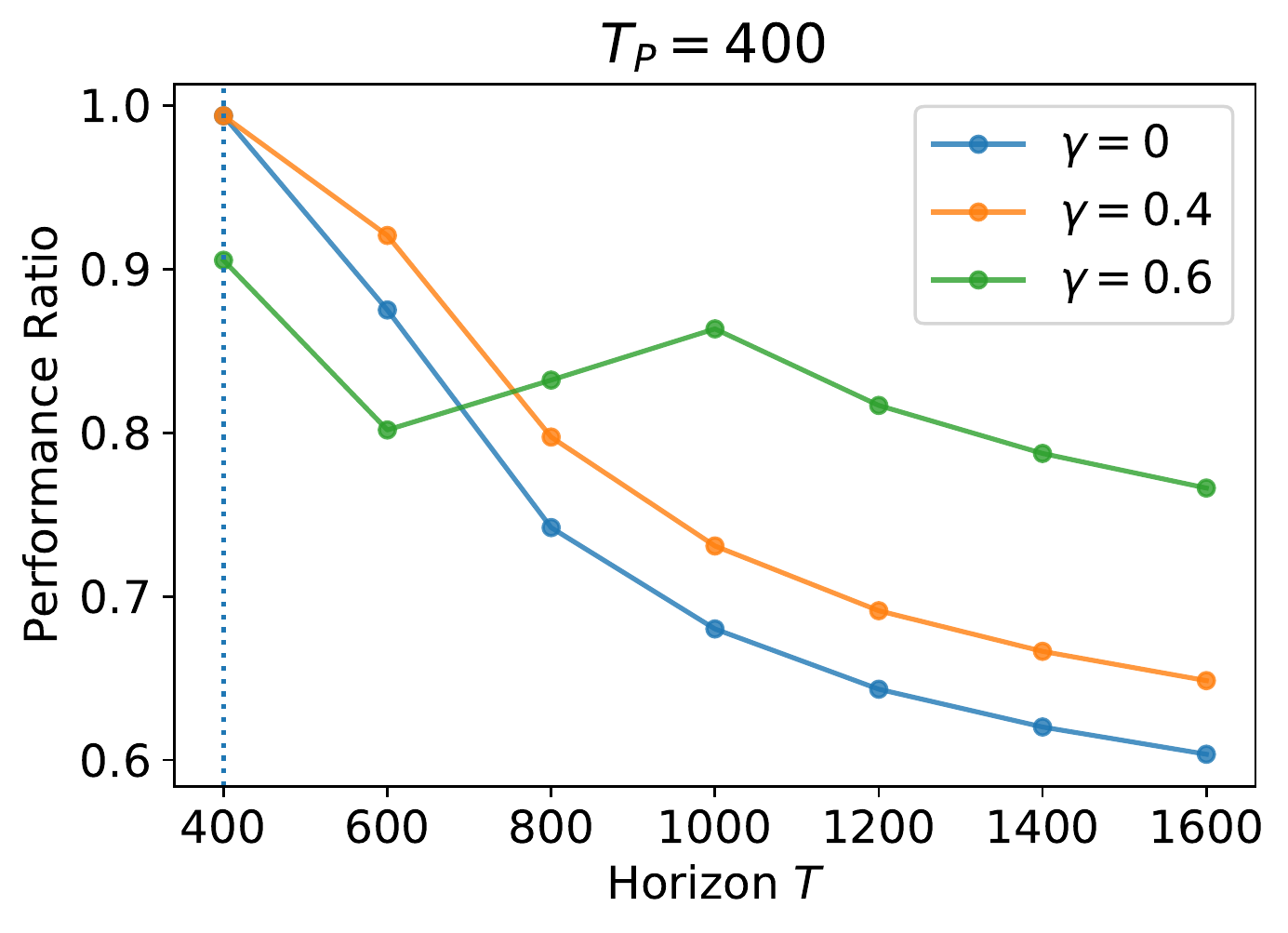}
  \end{subfigure}%
  \begin{subfigure}{0.33\textwidth}
    \includegraphics[width = \linewidth]{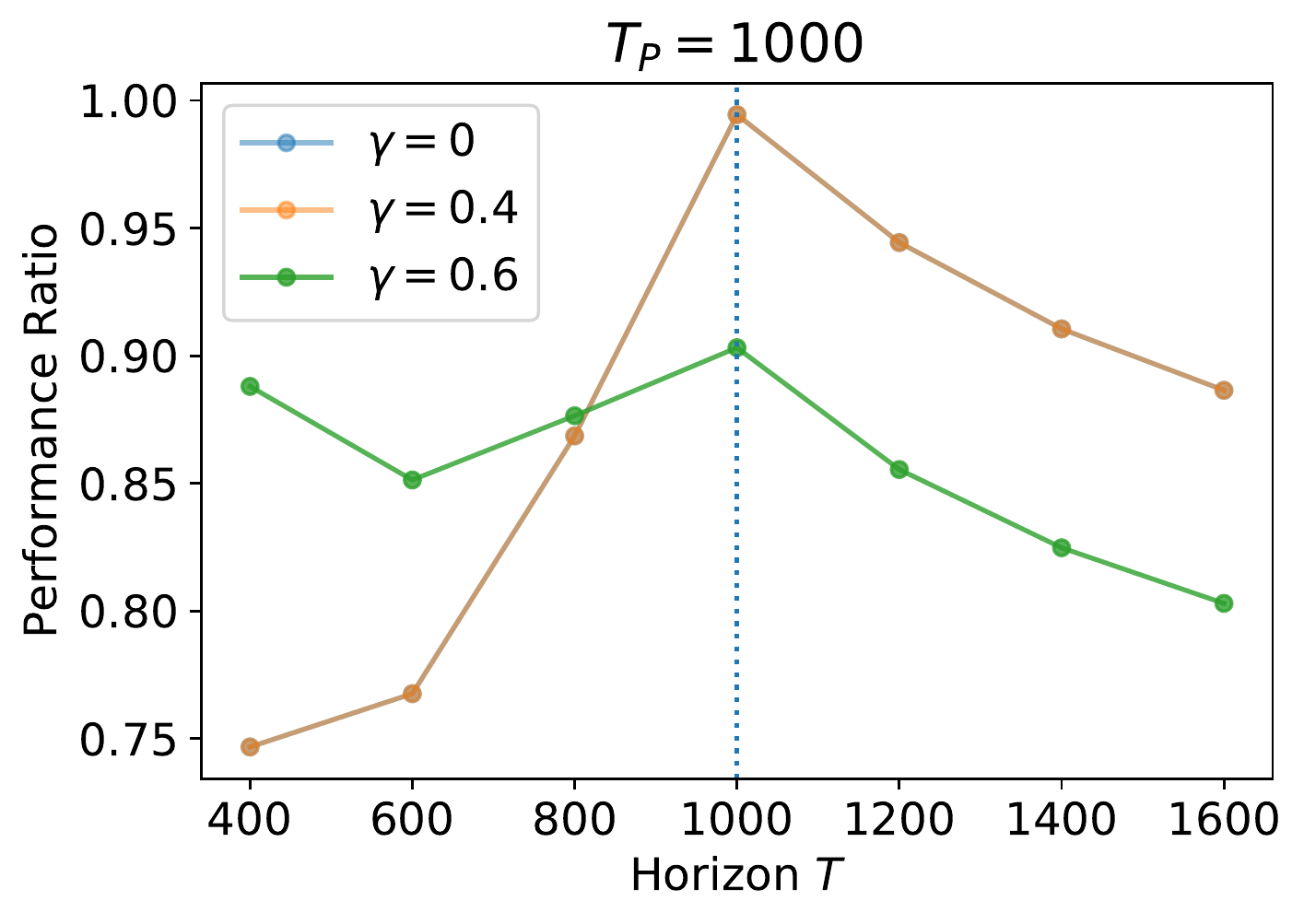}
  \end{subfigure}%
  \begin{subfigure}{0.33\textwidth}
    \includegraphics[width = \linewidth]{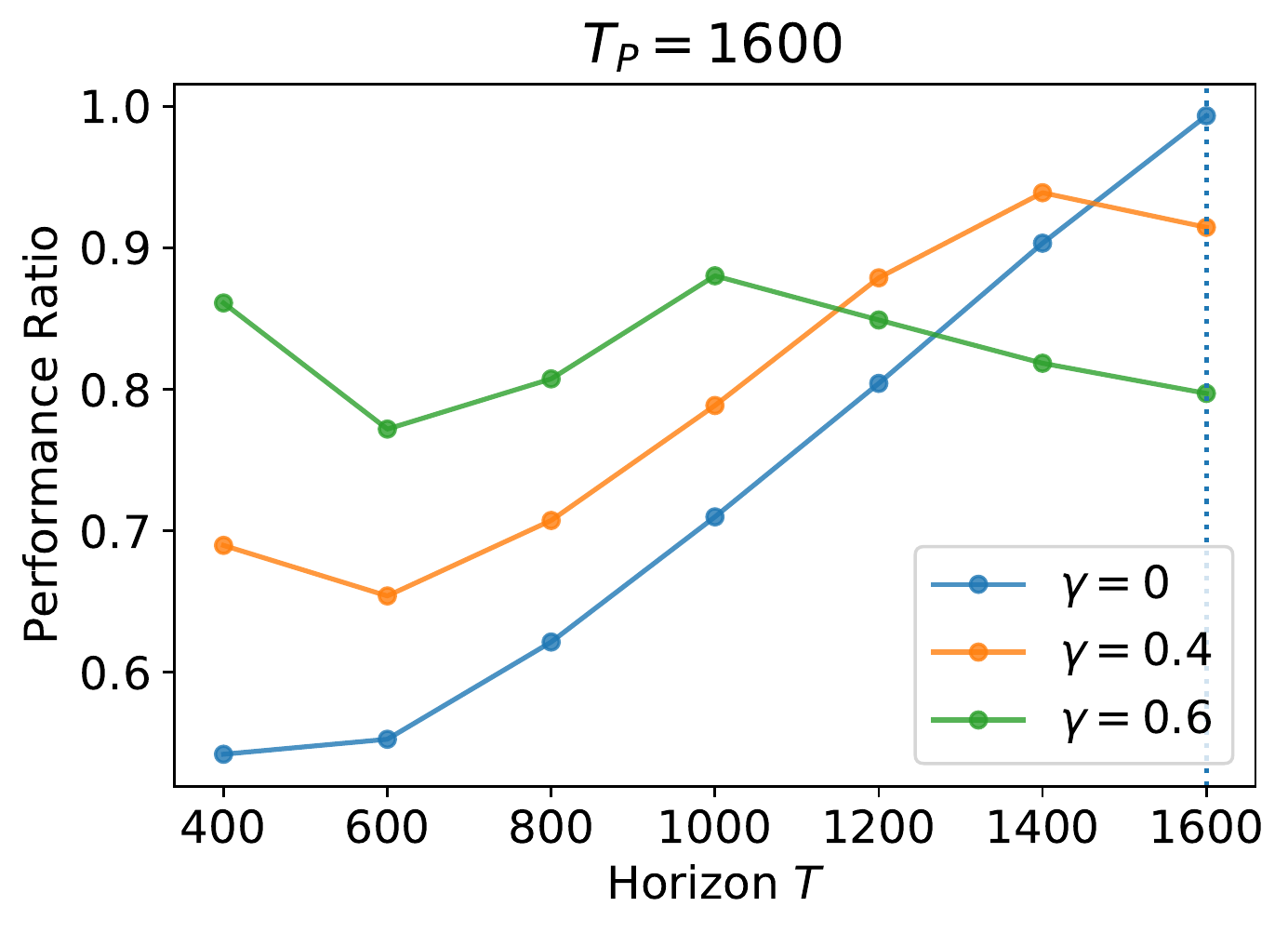}
  \end{subfigure}
  \caption{Average performance ratio (over 100 runs) of Algorithm~\ref{alg:dual-descent}, with target sequence from the LP in Proposition~\ref{prop:advice-LP} for different values of $\gamma$, on the uniform multi-secretary problem.}
  \label{fig:multisec}
\end{figure}

Observe that, when $\gamma = 0$ and the decision maker does not desire robustness, the LP in Proposition~\ref{prop:advice-LP} would output $\vec\lambda$ with $\lambda_t = \rho_{T_P}$ for $t \leq T_P$ and $\lambda_t = 0$ otherwise. Algorithm~\ref{alg:dual-descent} with this target consumption sequence is exactly the algorithm of \citet{balseiro2020best}, which yields a consistency of $\beta = 1$. On the other extreme is $\gamma$ being equal to the output of the LP in Proposition~\ref{prop:uncertainty-window-LP}, in which case the LP in Proposition~\ref{prop:advice-LP} would output a target sequence $\vec\lambda$ which maximizes the competitive ratio $\min_{T \in [\tau_1,\tau_2]} c(\vec\lambda, T)$. For values of $\gamma$ in between the two extremes, the LP in Proposition~\ref{prop:advice-LP} outputs a target consumption sequence which attempts to balance the two objectives, as can be seen in Figure~\ref{fig:target-seq}. This allows the decision maker to interpolate between the known-horizon and the uncertainty-window settings (see Figure~\ref{fig:frontier}).

Now, suppose the required level of competitiveness $\gamma' = \gamma - \epsilon$ is such that $\gamma = \alpha \cdot (1 + \ln(\tau_2/\tau_1))^{-1}$ for some $\alpha \in [0,1]$. Then, for predicted horizon $T_P \in [\tau_1, \tau_2]$, consider the following simple target consumption sequence
\begin{align}\label{eqnq:simple-advice-seq}
  \lambda_t \coloneqq \begin{cases}
    \frac{\alpha}{1 + \ln(\tau_2/\tau_1)} \cdot \frac{B}{\tau_1} + (1 - \alpha) \cdot \frac{B}{T_P} = \frac{\alpha}{1 + \ln(\tau_2/\tau_1)} \cdot \rho_{\tau_1} + (1 - \alpha) \cdot \rho_{T_P} &\text{if } t \leq \tau_1 \,,\\
    \frac{\alpha}{1 + \ln(\tau_2/\tau_1)} \cdot \frac{B}{t} + (1 - \alpha) \cdot \frac{B}{T_P} = \frac{\alpha}{1 + \ln(\tau_2/\tau_1)} \cdot \rho_t + (1 - \alpha) \cdot \rho_{T_P} &\text{if } \tau_1+1 \leq t \leq T_P\,,\\
    \frac{\alpha}{1 + \ln(\tau_2/\tau_1)} \cdot \frac{B}{t} = \frac{\alpha}{1 + \ln(\tau_2/\tau_1)} \cdot \rho_t &\text{if } T_P+1 \leq t \leq \tau_2\,.
  \end{cases}
\end{align}

The target sequence $\vec\lambda$ is simply a sum of two target sequences: (i) The first part is an $\alpha$-scaled-down version of the simple target sequence from \eqref{eqn:simple-target-sequence}, which ensures $\alpha \cdot (1 + \ln(\tau_2/\tau_1)^{-1})$ competitiveness; (ii) The second is a $(1 - \alpha)$-scaled-down version of the target sequence which spends $\rho_{T_P} = B/T_P$ evenly and is optimal when the prediction were true. $\vec\lambda$, as defined in \eqref{eqnq:simple-advice-seq}, is a feasible solution to the optimization of Proposition~\ref{prop:advice-LP}, which allows us to establish the following closed-form guarantee.

\begin{proposition}\label{prop:simple-advice-seq}
  Let $\epsilon$ be as in Proposition~\ref{prop:master}. Consider a target level of competitiveness $\gamma - \epsilon$, where $\gamma = \alpha \cdot (1 + \ln(\tau_2/\tau_1))^{-1}$ for some $\alpha \in [0,1]$. Let $\vec\lambda(T_P)$ be an optimal solution of the LP in Proposition~\ref{prop:advice-LP} and let $A(T_P)$ denote Algorithm~\ref{alg:dual-descent} with the target sequence $\vec\lambda(T_P)$. Then, for every request distribution $\PP$ and predicted horizon $T_P \in [\tau_1, \tau_2]$, we have
  \begin{align*}
    c(A(T_P)|T_P, \PP) \geq \left(1 - \alpha+ \frac{\alpha}{1 + \ln(\tau_2/\tau_1)} \right) - \epsilon \quad \text{and} \quad \inf_{T, T_P}\ c(A(T_P)|T, \PP) \geq \frac{\alpha}{1 + \ln(\tau_2/\tau_1)} -\epsilon \,.
  \end{align*}
\end{proposition}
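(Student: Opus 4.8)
The claim is that the explicit target sequence in~\eqref{eqnq:simple-advice-seq} is feasible for the LP of Proposition~\ref{prop:advice-LP} and yields the stated consistency and competitiveness bounds. By Proposition~\ref{prop:master}, it suffices to lower bound $c(\vec\lambda, T_P)$ and $\inf_{T \in [\tau_1,\tau_2]} c(\vec\lambda, T)$ for the specific $\vec\lambda$ in~\eqref{eqnq:simple-advice-seq}, since the LP-optimal sequence $\vec\lambda(T_P)$ can only do better; the $-\epsilon$ slack then comes directly from Proposition~\ref{prop:master}. The plan is to exploit the additive decomposition $\vec\lambda = \alpha\vec\lambda^{(1)} + (1-\alpha)\vec\lambda^{(2)}$, where $\vec\lambda^{(1)}$ is the simple competitiveness sequence from~\eqref{eqn:simple-target-sequence} and $\vec\lambda^{(2)}$ is the known-horizon sequence with $\lambda^{(2)}_t = \rho_{T_P}$ for $t \le T_P$ and $0$ afterward.

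\textbf{Step 1: Budget feasibility.} First I would verify $\sum_{t=1}^{\tau_2}\lambda_t \le B$. Since $\vec\lambda = \alpha\vec\lambda^{(1)} + (1-\alpha)\vec\lambda^{(2)}$ and both $\vec\lambda^{(1)}$ and $\vec\lambda^{(2)}$ individually satisfy the budget constraint (the former by the computation already given after~\eqref{eqn:simple-target-sequence}, the latter because $T_P \cdot \rho_{T_P} = B$), convexity of the budget constraint gives $\sum_t \lambda_t \le \alpha B + (1-\alpha)B = B$. This also confirms $\vec\lambda$ is a valid target consumption sequence.

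\textbf{Step 2: Competitiveness lower bound.} The key structural fact I would use is that $c(\vec\lambda, T)$ is concave but, more usefully here, that the per-period clipped ratio $\min\{\min_j \lambda_{t,j}/\rho_{T,j}, 1\}$ is \emph{superadditive} under the decomposition in the sense that dropping the $(1-\alpha)$ part and the clip can only decrease the expression. Concretely, for any $T \in [\tau_1,\tau_2]$ and any $t \le T$, since $\lambda_t \ge \alpha\lambda^{(1)}_t$ componentwise and $\rho_t \ge \rho_T$ for $t \le T$, I would bound $\min\{\min_j \lambda_{t,j}/\rho_{T,j},1\} \ge \alpha/(1+\ln(\tau_2/\tau_1))$ for every $t \le T$, exactly mirroring the computation already displayed for~\eqref{eqn:simple-target-sequence}. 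Averaging over $t$ gives $c(\vec\lambda, T) \ge \alpha/(1+\ln(\tau_2/\tau_1))$ for all $T$, hence the competitiveness bound $\inf_T c(\vec\lambda, T) \ge \gamma$, and in particular $\vec\lambda$ is LP-feasible.

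\textbf{Step 3: Consistency lower bound.} This is where the $(1-\alpha)\rho_{T_P}$ term earns its keep and is the main obstacle. I must show $c(\vec\lambda, T_P) \ge 1 - \alpha + \alpha/(1+\ln(\tau_2/\tau_1))$. For $T = T_P$ and each $t \le T_P$, I would evaluate the clipped ratio directly: since $\lambda_t \ge \alpha\lambda^{(1)}_t + (1-\alpha)\rho_{T_P}$ and for $t \le T_P$ we have $\rho_t/\rho_{T_P} \ge 1$, the $\alpha$-part contributes at least $\alpha/(1+\ln(\tau_2/\tau_1))$ of $\rho_{T_P}$ while the second part contributes exactly $(1-\alpha)\rho_{T_P}$; dividing by $\rho_{T_P,j}$ componentwise and taking the min over $j$ (with $m=1$ or using that both parts scale identically across resources) yields a per-period ratio of at least $(1-\alpha) + \alpha/(1+\ln(\tau_2/\tau_1))$, which is $\le 1$ so the clip is inactive. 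The delicate point to check is that the sum $\alpha\lambda^{(1)}_{t,j} + (1-\alpha)\rho_{T_P,j}$, after division by $\rho_{T_P,j}$, indeed stays below $1$ so that no clipping destroys the additive lower bound; since $\rho_t \ge \rho_{T_P}$ would push the $\alpha$-term above its nominal value, I would retain only the guaranteed $\alpha/(1+\ln(\tau_2/\tau_1))$ contribution and argue the total remains at most $1$ because $\alpha \cdot \frac{1}{1+\ln(\tau_2/\tau_1)} + (1-\alpha) \le 1$. Averaging the uniform per-period bound over $t \le T_P$ gives $c(\vec\lambda, T_P) \ge 1-\alpha+\alpha/(1+\ln(\tau_2/\tau_1))$. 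Finally, applying Proposition~\ref{prop:master} to $\vec\lambda(T_P)$ (which dominates $\vec\lambda$ on both objectives by LP-optimality) subtracts the single $\epsilon$ and delivers both displayed inequalities.
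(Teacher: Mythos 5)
Your proposal is correct and follows essentially the same route as the paper's proof: verify that the explicit sequence in~\eqref{eqnq:simple-advice-seq} satisfies the budget constraint, lower-bound the per-period clipped ratio by $\alpha/(1+\ln(\tau_2/\tau_1))$ for every $T$ (competitiveness) and by $1-\alpha+\alpha/(1+\ln(\tau_2/\tau_1))$ at $T=T_P$ (consistency) using $\rho_t \geq \rho_T$ for $t \leq T$, then pass to the LP-optimal $\vec\lambda(T_P)$ via feasibility of the simple sequence and apply Proposition~\ref{prop:master}. Your only deviations are cosmetic — organizing the budget check through convexity of the decomposition $\alpha\vec\lambda^{(1)}+(1-\alpha)\vec\lambda^{(2)}$ rather than the paper's direct summation, and being more explicit that the clip at $1$ is inactive — and your final appeal to LP optimality needs only the constraint $\min_T c(\vec\lambda(T_P),T)\geq\gamma$ rather than full "domination on both objectives," which is what the paper implicitly uses.
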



Note that the target sequence in \eqref{eqnq:simple-advice-seq} is just one particular target sequence and the LP in Proposition~\ref{prop:advice-LP} computes the optimal target sequence, and consequently the latter always performs better. This domination in performance is depicted in Figure~\ref{fig:frontier}, where the consistency-competitiveness curve the simple sequence (in orange) lies entirely below the curve from Proposition~\ref{prop:advice-LP} (blue curve). 

\textbf{Numerical Experiment.} We evaluated our algorithm (Algorithm~\ref{alg:dual-descent} with target sequence from Proposition~\ref{prop:advice-LP}) on the multi-secretary problem with uniform rewards and the results are summarized in Figure~\ref{fig:multisec}. In this experiment, the request distribution captures the uniform multi-secretary problem: each request $\gamma = (f,b, \X)$ has reward $f(x) = r \cdot x$ for $r \sim \text{Unif}([0,1])$, consumption $b(x) = x$ and an accept/reject action space $\X = \{0,1\}$. Moreover, $\tau_1 = 400$, $\tau_2 = 1600$, $B = 500$, $\eta = 0.03$, $\mu_1 = 0.5$ and $h(\cdot) = \|\cdot\|_2$. As expected, smaller values of $\gamma$ lead to better performance when the true horizon $T$ is close to the prediction $T_P$, but this comes at the expense of lower worst-case reward (minimum competitive ratio over all possible values of the horizon $T \in [\tau_1, \tau_2]$). Recall that $\gamma = 0$ represents the algorithm of \citet{balseiro2020best} with horizon $T_P$. Our experiment demonstrates its fragility to traffic spikes: if the number of requests turns out to be 3 times the predicted traffic of $T_P = 400$, the algorithm of \citet{balseiro2020best} achieves a drastically lower performance ratio than our algorithm with $\gamma = 0.6$.

%

\section{Bypassing the LP: A Faster Algorithm}\label{sec:faster-alg}

Even though the LPs of Proposition~\ref{prop:uncertainty-window-LP} and Proposition~\ref{prop:advice-LP} compute the optimal target consumption sequence in polynomial time, they do not exploit the structure of the problem and are not desirable in large-scale domains like internet advertising where speed is of the essence. To remedy this, we next develop a faster algorithm to compute the optimal target consumption sequence; this algorithm more directly exploits the structure of the problem. The algorithm (Algorithm~\ref{alg:opt-target-sequence}) will rely on the following observation about $c(\vec\lambda, T)$:
\begin{align}\label{eqn:comp-ratio-restate}
    c(\vec\lambda, T) =  \frac{1}{T}\sum_{t=1}^T \min \left\{ \min_{1 \leq j \leq m} \frac{\lambda_{t,j}}{\rho_{T,j}}, 1 \right\} \leq \min_{1 \leq j \leq m} \frac{1}{T}\sum_{t=1}^T \min \left\{\frac{\lambda_{t,j}}{\rho_{T,j}}, 1 \right\} = \min_{1 \leq j \leq m} \frac{1}{B_j}\sum_{t=1}^T \min \left\{\lambda_{t,j}, \rho_{T,j} \right\}
\end{align}
where the last equality follows from multiplying and diving by $\rho_{T,j} = B_j/T$. Moreover, note that the above inequality is tight when $\frac{\lambda_{t,j}}{\rho_{T,j}} = \frac{\lambda_{t,k}}{\rho_{T,k}}$ for all $j,k \in [m], t \in [T] \,.$

Therefore, any target sequence $\vec\lambda$ which is $\beta$-consistent for prediction $T_P$, i.e., $c(\vec\lambda, T_P) \geq \beta$, and $\gamma$-competitive, i.e. $\min_{T \in [\tau_1, \tau_2]} c(\vec\lambda, T) \geq \gamma$, satisfies the following inequalities for all $j \in [m]$:
\begin{align*}
  \sum_{t=1}^{T_P} \min\{\lambda_{t,j}/\rho_{T_P,j}\} \geq \beta \cdot B_j \quad \text{ and } \quad \sum_{t=1}^T \min\{\lambda_{t,j}/\rho_{T,j}\} \geq \gamma \cdot B_j \quad \forall\ T \in [\tau_1, \tau_2] \,.
\end{align*}
Algorithm~\ref{alg:opt-target-sequence} minimizes $\sum_{t=1}^{\tau_2} \lambda_{t,j}$ while maintaining the above property. And as a consequence, we can show that $\beta$ consistency on $T_P$ and $\gamma$ competitiveness are attainable if and only if Algorithm~\ref{alg:opt-target-sequence} returns TRUE. Given this property, it is a straightforward exercise to use binary search in conjunction with Algorithm~\ref{alg:opt-target-sequence} to compute the optimal solution to the LPs in Proposition~\ref{prop:uncertainty-window-LP} and Proposition~\ref{prop:advice-LP} up to arbitrary precision (For completeness, we provide details in Appendix~\ref{appendix:faster-alg}).

\begin{algorithm}[t!]
  \SetAlgoLined
  {\bf Input:} Budget $B \in \R_{++}^m$, uncertainty window $[\tau_1, \tau_2]$, prediction $T_P$, required level of consistency $\beta \in [0,1]$ and required level of competitiveness $\gamma \in [0,\beta]$. \\

  \textbf{Initialize:} $\lambda_{t,j} \leftarrow 0\quad \forall t \in [\tau_2], j \in [m]$\\
  \For{$T = \tau_2$ to $\tau_1$}{
    \For{$t = 1$ to $T$}{
      \begin{equation}\label{eqn:opt-seq-update}
        \lambda_{t,j} \leftarrow \begin{cases}
          \lambda_{t,j} + \min\left\{\rho_{T,j} - \lambda_{t,j}, \beta \cdot B_j - \sum_{s=1}^T \lambda_{s,j} \right\}^+ &\text{if } T = T_P,\\
          \lambda_{t,j} + \min\left\{\rho_{T,j} - \lambda_{t,j}, \gamma \cdot B_j - \sum_{s=1}^T \lambda_{s,j} \right\}^+ &\text{if } T \neq T_P
        \end{cases}
      \end{equation}
    }
  }

  {\bf Return:} TRUE if $\sum_{t=1}^{\tau_2} \lambda_{t} \leq B_j$; else FALSE.
  \caption{Optimal Target Consumption Sequence}
  \label{alg:opt-target-sequence}
\end{algorithm}

\begin{theorem}\label{thm:faster-alg}
  Given budget $B \in \R_{++}^m$, uncertainty window $[\tau_1, \tau_2]$, prediction $T_P$, required level of consistency $\beta \in [0,1]$ and required level of competitiveness $\gamma \in [0,\beta]$ as input, let $\vec\lambda^*$ be the sequence computed by Algorithm~\ref{alg:opt-target-sequence}. Then,
  \begin{enumerate}
    \item $c(\vec\lambda^*, T_P) \geq \beta$ and $\min_{T \in [\tau_1, \tau_2]} c(\vec\lambda^*, T) \geq \gamma$
    \item $\sum_{t=1}^{\tau_2} \lambda^*_t \leq B$ if and only if there exists a target consumption sequence $\vec\lambda'$ (with $\sum_{t=1}^{\tau_2} \lambda'_t \leq B$) which satisfies $c(\vec\lambda', T_P) \geq \beta$ and $\min_{T \in [\tau_1, \tau_2]} c(\vec\lambda', T) \geq \gamma\,.$
  \end{enumerate}
\end{theorem}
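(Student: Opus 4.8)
The plan is to split the argument along the two claims and, for both, reduce the competitive ratio $c(\vec\lambda,T)$ to its per-resource surrogate from \eqref{eqn:comp-ratio-restate}. The single observation that drives everything is that Algorithm~\ref{alg:opt-target-sequence} acts \emph{separably and identically} across resources after rescaling: dividing the update \eqref{eqn:opt-seq-update} for coordinate $j$ by $B_j$ replaces $\rho_{T,j}$ by $1/T$ and $\beta B_j,\gamma B_j$ by $\beta,\gamma$, so the normalized iterates $\lambda_{t,j}^*/B_j$ are \emph{independent of $j$}. Hence $\lambda_{t,j}^*/\rho_{T,j}=T\,\lambda_{t,j}^*/B_j$ does not depend on $j$, which is exactly the condition under which the inequality in \eqref{eqn:comp-ratio-restate} is tight. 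Therefore $c(\vec\lambda^*,T)=\frac{1}{B_j}\sum_{t=1}^T\min\{\lambda_{t,j}^*,\rho_{T,j}\}$ simultaneously for all $j$, and it suffices to control this one-dimensional quantity. Writing $\theta_T=\beta$ if $T=T_P$ and $\theta_T=\gamma$ otherwise, the whole theorem becomes a statement about the single-resource \emph{covering constraints} $\sum_{t\le T}\min\{\lambda_{t,j},\rho_{T,j}\}\ge \theta_T B_j$.

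For Claim~1, I would fix a resource $j$ and track the loop invariant produced by processing $T$ from $\tau_2$ down to $\tau_1$. Because $\rho_{T,j}$ is decreasing in $T$, every value written while handling some $T'>T$ was capped at $\rho_{T',j}\le\rho_{T+1,j}<\rho_{T,j}$; thus at the start of iteration $T$ all entries lie strictly below the current cap $\rho_{T,j}$, the inner loop simply raises the earliest entries up to $\rho_{T,j}$, and it can always reach the target because the total capacity of the first $T$ positions is $\sum_{t\le T}\rho_{T,j}=B_j\ge\theta_T B_j$. Consequently $\sum_{t\le T}\min\{\lambda_{t,j},\rho_{T,j}\}\ge\theta_T B_j$ holds right after iteration $T$. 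Since later iterations (smaller $T$) only \emph{add} mass and $x\mapsto\min\{x,\rho_{T,j}\}$ is nondecreasing, this inequality is preserved through termination. Combined with the tightness above and $\gamma\le\beta$, this yields $c(\vec\lambda^*,T_P)\ge\beta$ and $\min_{T}c(\vec\lambda^*,T)\ge\gamma$.

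For Claim~2, the ``if'' direction is immediate: when $\sum_t\lambda_t^*\le B$, the sequence $\vec\lambda^*$ is itself a valid target consumption sequence and, by Claim~1, a witness to the right-hand side. The ``only if'' direction reduces to a single minimality statement: \emph{for each resource $j$, the sequence $\lambda^*_{\cdot,j}$ minimizes $\sum_{t=1}^{\tau_2}\lambda_{t,j}$ among all nonnegative sequences satisfying the covering constraints $\sum_{t\le T}\min\{\lambda_{t,j},\rho_{T,j}\}\ge\theta_T B_j$ for all $T\in[\tau_1,\tau_2]$.} Granting this, any feasible $\vec\lambda'$ in the theorem satisfies these same covering constraints by \eqref{eqn:comp-ratio-restate}, so $\sum_t\lambda_{t,j}^*\le\sum_t\lambda_{t,j}'\le B_j$ for every $j$, and the algorithm returns TRUE.

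The crux, and the step I expect to be hardest, is this minimality statement, which I would prove by a water-filling / exchange argument mirroring the processing order of the algorithm. Intuitively, mass placed at an \emph{earlier} position contributes to more constraints (all $T\ge t$), so the greedy prefers the earliest positions; and handling the \emph{largest} $T$ first saturates positions at the smallest cap $\rho_{T,j}$, exactly the amount each position can usefully contribute to that constraint. The delicacy is that caps differ across constraints: a unit moved earlier helps more constraints but is capped more tightly for large $T$, so naive left-shifting need not preserve feasibility. I would therefore take an arbitrary optimal covering solution and rearrange it into the greedy configuration \emph{in decreasing order of $T$}, showing at each stage that the mass the algorithm inserts to newly satisfy constraint $T$ is no larger than the mass any feasible solution must devote beyond what the larger-$T$ constraints already force, so total consumption never increases. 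This is the ``globally-minimal spending'' water-filling argument alluded to for Lemma~\ref{lemma:alter-char}, and constitutes the technical heart of the proof.
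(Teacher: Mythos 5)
Your handling of Claim 1 is correct and is essentially the paper's own argument: normalizing by $B_j$ makes the update \eqref{eqn:opt-seq-update} identical across resources, which is exactly the tightness condition for \eqref{eqn:comp-ratio-restate}, and your downward-in-$T$ invariant (all entries capped below $\rho_{T,j}$ at the start of iteration $T$, target $a_T B_j$ always reachable since $\sum_{t\le T}\rho_{T,j}=B_j$, and covering preserved under later additions because $x\mapsto\min\{x,\rho_{T,j}\}$ is nondecreasing) is precisely the content of the paper's Lemma~\ref{lemma:inter-faster-alg}, where $a_T=\beta$ if $T=T_P$ and $a_T=\gamma$ otherwise. Likewise, reducing the hard direction of Claim 2 to the per-resource statement that $\lambda^*_{\cdot,j}$ has minimal total mass among all nonnegative sequences satisfying $\sum_{t\le T}\min\{\lambda_{t,j},\rho_{T,j}\}\ge a_T B_j$ for all $T\in[\tau_1,\tau_2]$ is a sound and clean reformulation.

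The gap is that this minimality statement \emph{is} the backward direction of Claim 2, and you do not prove it; you give a plan whose key step is not well-defined. The quantity ``the mass any feasible solution must devote beyond what the larger-$T$ constraints already force'' has no canonical meaning: mass in a feasible solution serves all covering constraints simultaneously, so feasible solutions do not decompose along the algorithm's processing order, and a stage-by-stage comparison of increments does not go through as stated --- this is exactly the delicacy you flag and then defer. The paper closes this hole with an extremal-solution device that your sketch lacks: among all feasible sequences it selects $\vec\lambda'$ minimizing the \emph{time-weighted} mass $\sum_{j}\sum_t t\,\lambda_{t,j}$, and shows the algorithm's iterates never exceed $\vec\lambda'$ pointwise. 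Concretely, at the first update (outer counter $T^*$, inner counter $t^*$, resource $k$) where $\lambda_{t^*,k}>\lambda'_{t^*,k}$, Lemma~\ref{lemma:inter-faster-alg} forces the inner loop to terminate with $\sum_{t\le T^*}\lambda^{(T^*)}_{t,k}=a_{T^*}B_k$, where $\vec\lambda^{(T^*)}$ denotes the algorithm's sequence at the end of outer iteration $T^*$; feasibility of $\vec\lambda'$ at $T^*$ then yields an index $s>t^*$ with $\lambda'_{s,k}>\lambda^{(T^*)}_{s,k}$, and shifting a small mass $\nu$ in $\vec\lambda'$ from position $s$ to the earlier position $t^*$ preserves every covering constraint (checked separately for $T>T^*$, by comparison with $\vec\lambda^{(T)}$, and for $T\le T^*$, using $\lambda'_{t^*,k}+\nu\le\rho_{T,k}$) while strictly decreasing the weighted objective --- contradicting the choice of $\vec\lambda'$. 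Pointwise domination $\lambda^*\le\lambda'$ then gives $\sum_t\lambda^*_{t,j}\le\sum_t\lambda'_{t,j}\le B_j$. Some exchange argument on a carefully chosen extremal solution of this kind (the $t$-weighting is what makes ``push mass earlier'' a strict improvement and breaks ties) is indispensable; until you supply it, part 2 of the theorem is not established.
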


 \begin{figure}
   \centering
   \includegraphics[width = 0.4\textwidth]{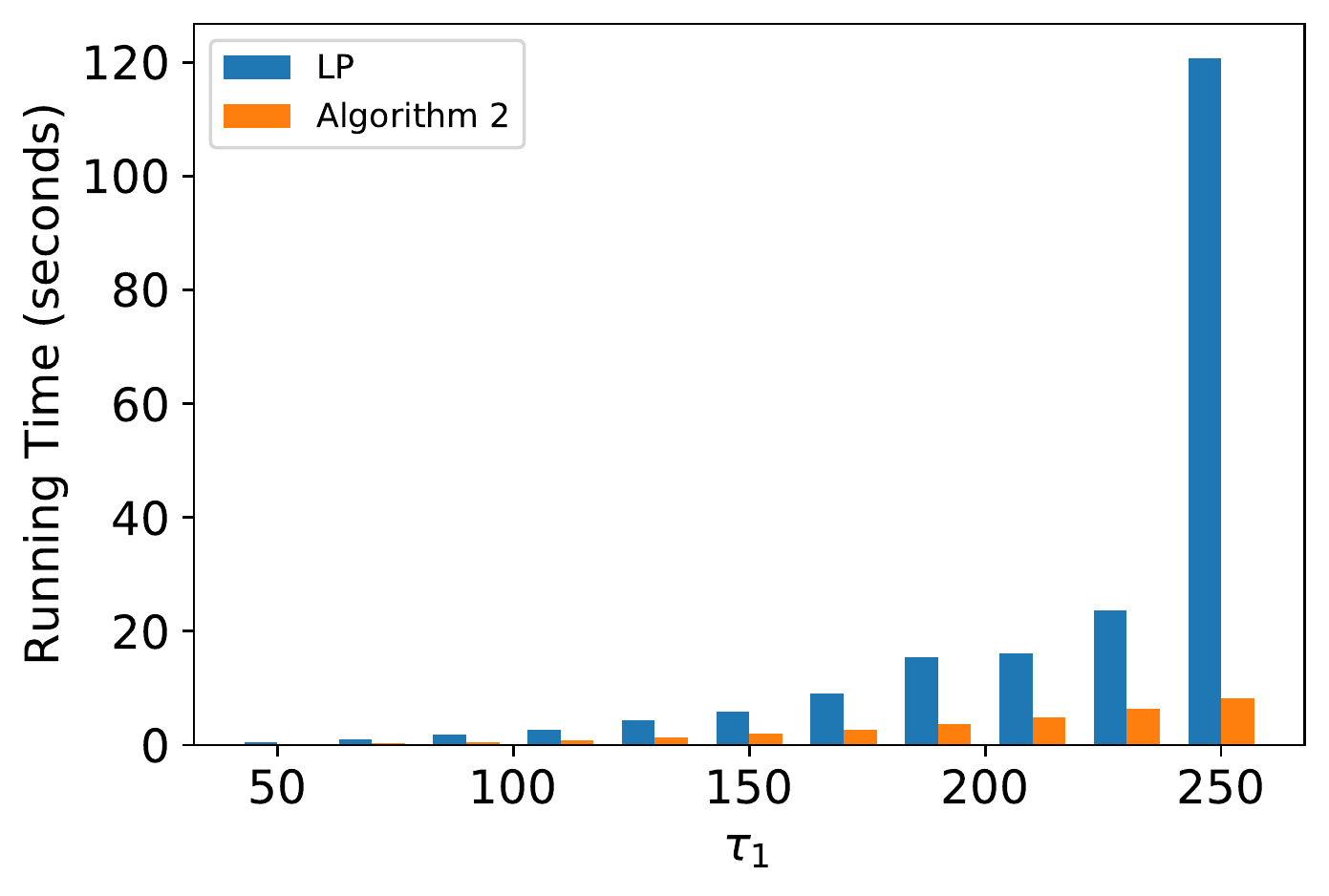}
   \caption{A comparison of the running times of the LP from Proposition~\ref{prop:uncertainty-window-LP} solved using Gurobi Optimizer version 9.1.2 build v9.1.2rc0 (mac64) and Algorithm~\ref{alg:opt-target-sequence} run on Python 3.7.6 without the use of any advanced libraries. The minimum runtime from 10 runs was selected for GUROBI and the maximum runtime from 10 runs was selected for Algortihm~\ref{alg:opt-target-sequence}. Both algorithms were limited to a single thread to ensure parity of computational resources. Here, $\tau_2 = 3 \cdot \tau_1$  and $B = 1.5 \cdot \tau_1$ for all values of $\tau_1$. For $\tau_1 \geq 300$, GUROBI did not terminate with a solution even after 10 min, while Algorithm~\ref{alg:opt-target-sequence} consistently did so under 10 seconds.}
   \label{fig:runtime}
 \end{figure}

Observe that there can be at most $\tau_2^2$ updates of the target sequence $\vec\lambda$ (as given in \eqref{eqn:opt-seq-update}) during the run of Algorithm~\ref{alg:opt-target-sequence}. One can maintain and iteratively update $\sum_{s=1}^T \lambda_{s,j}$ after the completion of each iteration of the inner and outer \textbf{For} loops to perform the update in constant time. Therefore, the runtime complexity of Algorithm~\ref{alg:opt-target-sequence} is $O(m \cdot \tau_2^2)$, which is faster than any known general-purpose LP solver applied to the LP in Proposition~\ref{prop:uncertainty-window-LP} or Proposition~\ref{prop:advice-LP}. We also empirically observed this difference in running times between the LP of Proposition~\ref{prop:uncertainty-window-LP} and Algorithm~\ref{alg:opt-target-sequence} (see Figure~\ref{fig:runtime}).

%

\section{Conclusion}

 We develop and analyze a generalized version of dual descent which can incorporate variable target consumption sequences (Algorithm~\ref{alg:dual-descent}), thereby reducing the complicated problem of finding an algorithm for online resource allocation under horizon uncertainty to the much simpler (and convex) problem of optimizing the target sequence. We then demonstrate the power of this reduction by showing that, with the optimal target sequence, Algorithm~\ref{alg:dual-descent} achieves a near-optimal competitive ratio when only upper and lower bounds on the horizon are known. We also provide a way to smoothly interpolate between the previously-studied known-horizon setting and the uncertainty-window setting through the Algorithms-with-Predictions framework, thereby providing a robust approach to online allocation which allows the decision-maker to tailor the degree to robustness to their requirements. Our algorithms have the added advantage of simplicity and speed because they do not require the decision-maker to solve any large linear programs.

We leave open the problem of closing the gap between our lower and upper bounds on the competitive ratio by accounting for the $e \cdot \ln\ln(\tau_2/\tau_1)$ discrepancy. Although this gap is not large asymptotically, closing it will likely result in a deeper understanding of the problem. It would also be interesting to explore whether algorithms which operate in the primal space can similarly benefit from employing a variable target consumption sequence. Finally, when both the distribution of requests and the distribution of the horizon are known in advance, it is worth studying if it is possible to achieve a constant/logarithmic regret against an appropriately defined benchmark (see for example \citealt{bumpensanti2020re, vera2019bayesian} for similar results when the horizon is known).


\singlespace

\bibliographystyle{plainnat}
\bibliography{refs}

\appendix

\newpage
\pagenumbering{arabic}\renewcommand{\thepage}{ec \arabic{page}}

\renewcommand{\theequation}{\thesection-\arabic{equation}}

\section{Proofs for Section~\ref{sec:lower-bound}}\label{appendix:lower-bound}

We begin by formally stating and proving weak duality

\begin{proposition}\label{prop:weak-duality}
	For every $\mu \in \R_+$, $T \geq 1$ and $\vec\gamma \in \S^T$, we have $\opt(T, \vec\gamma) \leq D(\mu|T, \vec\gamma)$.
\end{proposition}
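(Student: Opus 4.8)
The plan is to invoke the standard Lagrangian weak-duality argument: evaluate the dual objective at an arbitrary primal-feasible point and exploit that the term contributed by the resource slack is nonnegative. First I would fix $\mu \in \R_+^m$, $T \geq 1$, and $\vec\gamma \in \S^T$, and let $x \in \prod_t \X_t$ be \emph{any} point satisfying the resource constraint $\sum_{t=1}^T b_t(x_t) \leq B$. Since this inequality holds componentwise, the slack vector $B - \sum_{t=1}^T b_t(x_t) \in \R_+^m$ is nonnegative, and because $\mu \geq 0$ componentwise as well, the inner product obeys $\mu^\top\bigl( B - \sum_{t=1}^T b_t(x_t) \bigr) \geq 0$.

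Next I would chain two inequalities. The nonnegativity just established gives
\begin{align*}
\sum_{t=1}^T f_t(x_t) \leq \sum_{t=1}^T f_t(x_t) + \mu^\top\left( B - \sum_{t=1}^T b_t(x_t) \right).
\end{align*}
The right-hand side is the value of the (unconstrained) Lagrangian at the particular point $x$, and since $x \in \prod_t \X_t$, it is bounded above by the supremum of the Lagrangian over all of $\prod_t \X_t$, which is exactly $D(\mu|T,\vec\gamma)$ by definition. Hence $\sum_{t=1}^T f_t(x_t) \leq D(\mu|T,\vec\gamma)$ for every primal-feasible $x$. Taking the supremum of the left-hand side over all feasible $x$ then yields $\opt(T,\vec\gamma) \leq D(\mu|T,\vec\gamma)$, which is the claim.

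I do not expect any genuine obstacle here, as this is textbook weak duality. The only point worth a little care is that the supremum defining $D$ ranges over the entire product set $\prod_t \X_t$ \emph{without} the resource constraint, so every primal-feasible point remains admissible in that supremum; this is precisely what lets the penalized objective dominate $\opt$. Working with an arbitrary feasible $x$ and taking the supremum only at the end---rather than assuming the primal maximum is attained---also sidesteps any need for continuity of the $f_t$ or compactness beyond what the model already assumes, so the bound holds under the paper's very mild regularity conditions.
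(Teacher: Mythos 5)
Your proof is correct and follows exactly the same route as the paper's: take an arbitrary primal-feasible $x$, use $\mu \geq 0$ and the nonnegative slack to lower-bound the Lagrangian by $\sum_{t=1}^T f_t(x_t)$, dominate the Lagrangian by the supremum defining $D(\mu|T,\vec\gamma)$, and conclude by taking the supremum over feasible $x$. No gaps, and nothing to add.
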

\begin{proof}
    Consider any $x \in \prod_t \X_t$ such that $\sum_{t=1}^T b_t(x_t) \leq B$. Then, for $\mu \geq 0$, we have
    \begin{align*}
        D(\mu | T, \vec\gamma) \geq \left\{\sum_{t=1}^T f_t(x_t) \right\} + \mu^\top \left( B - \sum_{t=1}^T b_t(x_t) \right) \geq \sum_{t=1}^T f_t(x_t)
    \end{align*}
     Since $x \in \prod_t \X_t$ satisfied $\sum_{t=1}^T b_t(x_t) \leq B$ was otherwise arbitrary, we have shown $\opt(T\vec\gamma) \leq D(\mu| T, \vec\gamma)$.
\end{proof}

\subsection{Proof of Lemma~\ref{lemma:dual-prop}}
\begin{proof}[Proof of Lemma~\ref{lemma:dual-prop}]
    The convexity of $\D(\cdot| \lambda, \PP)$ follows from part (a) and  the fact that the dual objective $D(\mu|T,\vec\gamma)$ is always convex since it is a supremum of a collection of linear functions.
    \begin{itemize}
        \item[(a)] Shown in equation~\ref{eq:dual-decomp}.
        \item[(b)] For $a \in [0,1]$, we have
        \begin{align*}
            \D(\mu | a \cdot \lambda, \PP) - a \cdot\D(\mu|\lambda, \PP) = (1 - a) \E_{\gamma \sim \PP}\left[f^*(\mu) \right] \geq 0
        \end{align*}
        where we have used the fact that $f^*(\mu) \geq 0$ for all $\gamma = (f,b, \X) \in \S)$, which holds because $0 \in \X$.
        \item[(c)] For $\lambda \leq \kappa$, we have
        \begin{align*}
            \D(\mu | \kappa, \PP) - \D(\mu|\lambda, \PP) = (\kappa - \lambda)^\top \mu \geq 0
        \end{align*}
        where the inequality follows from the fact that $\kappa - \lambda,\ \mu \geq 0$.\qedhere
    \end{itemize}
\end{proof}

\subsection{Proof of Theorem~\ref{thm:regret}}

\begin{proof}[Proof of Theorem~\ref{thm:regret}]
	Fix an arbitrary $T \geq 1$. To simplify notation, define
	\begin{align*}
	    \frac{\lambda_t}{\rho_T} \coloneqq \min_{1 \leq j \leq m} \frac{\lambda_{t,j}}{\rho_{T,j}}
	\end{align*}
	We split the proof into four steps: (1) The first step involves lower bounding the performance of our algorithm in terms of single-period duals and the complementary slackness term; (2) The second step involves bounding the complementary slackness term using standard regret analysis of mirror descent; (3) The third step involves bounding the optimal from above using the single-period dual; (4) The final step puts it all it all together. Our proof significantly generalizes the proof of \citet{balseiro2020best}, who established this result for the special case of a constant target consumption rate $\lambda_t = \rho_T$ for all $t \leq T$. The main technical contribution of the current proof is establishing a general performance guarantee for dual-mirror descent for \emph{all} target consumption sequences, which will prove critical in getting an asymptotically-near-optimal competitive ratio for our model. This involves establishing a novel target-rate-dependent lower bound on the algorithm's reward (Step~1), a novel target-rate-dependent upper bound on the optimal reward (Step~3), and a new way to put these bounds together (Step~4). 
	
	\paragraph{Step 1: Lower bound on algorithm's reward.} Consider the filtration $\mathcal{F} = \{\sigma(\xi_t)\}_t$, where $\xi_t = \{\gamma_1, \dots, \gamma_t\}$ is the set of all requests seen till time $t$ and $\sigma(\xi_t)$ is the sigma algebra generated by it. Note that Algorithm~\ref{alg:dual-descent} only collects rewards when there are enough resources left. Let $\tA$ be first time less than $T$ for which there exists a resource $j$ such that $\sum_{t=1}^{\tA} b_{t,j}(x_t) + \bar{b} \ge B_j$. Here, $\tA = T$ if this inequality is never satisfied. Observe that $\tA$ is a stopping time w.r.t. $\mathcal{F}$ and it is defined so that we cannot violate the resource constraints before $\tA$. In particular, $x_t = \tilde{x}_t$ for all $t \leq \tA$. Therefore, we get
		\begin{align*}
			f_t(x_t) = f^*(\mu_t) + \mu_t^\top b_t(x_t)
		\end{align*}
		Observe that $\mu_t$ is measurable w.r.t. $\sigma(\xi_{t-1})$ and $\gamma_t$ is independent of $\sigma(\xi_{t-1})$, which allows us to take conditional expectation w.r.t. $\sigma(\xi_{t-1})$ to write
		\begin{align}\label{eq:bound-one-period}
			\mathbb E\left[ f_t(x_t)  | \sigma(\xi_{t-1}) \right] \nonumber
			&=\E_{\gamma_t \sim \PP}\left[f^*(\mu_t)\right] + \mu_t^{\top} \lambda_t + \mu_t^\top \left( \E\left[ b_t(x_t)| \sigma(\xi_{t-1}) \right] - \lambda_t \right)    \nonumber\\
			&=   \tilde D(\mu_t| \lambda_t, \PP) - \mathbb E\left[ \mu_t^\top \left( \lambda_t - b_t(x_t) \right) | \sigma(\xi_{t-1}) \right]
		\end{align}
		where the second equality follows the definition of the single-period dual function.
		
		Define $Z_t = \sum_{s=1}^t \mu_s^\top \left(\lambda_s - b_{s} (x_s)\right) - \mathbb E\left[ \mu_s^\top \left(\lambda_s - b_{s} (x_s) \right) | \sigma(\xi_{s-1}) \right]$. Then, $\{Z_t\}_t$ is a martingale w.r.t. the filtration $\mathcal{F}$ because $Z_t \in \sigma(\xi_t)$ and $\E[Z_{t+1} | \sigma(\xi_t)] = Z_t$. As $\tA$ is a bounded stopping time w.r.t. $\mathcal{F}$, the Optional Stopping Theorem yields $\E\left[Z_{\tA}\right] = 0$. Therefore,
		\begin{align*}
			\E\left[\sum_{t=1}^{{\tA}} \mu_t^\top \left(\lambda_t - b_t(x_t)\right) \right]
			= \E\left[\sum_{t=1}^{{\tA}} \mathbb E\left[ \mu_t^\top \left(\lambda_t - b_t(x_t)\right) | \sigma(\xi_{t-1}) \right] \right]\,.
		\end{align*}
		A similar argument yields
		\begin{align*}
			\E\left[\sum_{t=1}^{{\tA}} f_t(x_t) \right]
			= \E\left[\sum_{t=1}^{{\tA}} \mathbb E\left[ f_t(x_t) | \sigma(\xi_{t-1}) \right] \right]\,.
		\end{align*}
		Hence, summing over \eqref{eq:bound-one-period} and taking expectations, we get
		\begin{align}\label{eq:alg-lower-bound}
			\mathbb E\left[ \sum_{t=1}^{\tA} f_t(x_t)  \right]
			&=
			\mathbb E\left[ \sum_{t=1}^{\tA} \tilde D(\mu_t| \lambda_t, \PP) \right ]  - \mathbb E\left[ \sum_{t=1}^{\tA} \mu_t^\top \left(  \lambda_t - b_t(x_t) \right) \right] \nonumber\\
			&\geq
			\mathbb E\left[ \sum_{t=1}^{\tA} \tilde D\left(\mu_t| \min\{\lambda_t/\rho_T, 1 \}\cdot \rho_T, \PP \right) \right ]  - \mathbb E\left[ \sum_{t=1}^{\tA} \mu_t^\top \left(  \lambda_t - b_t(x_t) \right) \right] \nonumber\\
			&\geq
			\mathbb E\left[ \sum_{t=1}^{\tA} \min\{\lambda_t/\rho_T, 1 \}\cdot \tilde D\left(\mu_t|  \rho_T, \PP \right) \right ]  - \mathbb E\left[ \sum_{t=1}^{\tA} \mu_t^\top \left(  \lambda_t - b_t(x_t) \right) \right] \nonumber\\
			&=
			\mathbb E\left[ \sum_{t=1}^{\tA}\min\{\lambda_t/\rho_T, 1 \} \cdot \sum_{t=1}^{\tA} \frac{\min\{\lambda_t/\rho_T, 1 \}}{\sum_{s=1}^{\tA}\min\{\lambda_s/\rho_T, 1 \}} \cdot \tilde D\left(\mu_t|  \rho_T, \PP \right) \right ]  - \mathbb E\left[ \sum_{t=1}^{\tA} \mu_t^\top \left(  \lambda_t - b_t(x_t) \right) \right] \nonumber\\
			&\ge
			\mathbb E\left[ \sum_{t=1}^{\tA}\min\{\lambda_t/\rho_T, 1 \} \cdot \tilde D\left(\bar\mu_{\tA}|  \rho_T, \PP \right) \right ] - \mathbb E\left[ \sum_{t=1}^{\tA} \mu_t^\top \left( \lambda_t - b_t(x_t) \right) \right]\,
		\end{align}
	where
	\begin{align*}
		\bar\mu_{\tA} = \sum_{t=1}^{\tA} \frac{\min\{\lambda_t/\rho_T, 1 \} \cdot \mu_t}{\sum_{s=1}^{\tA}\min\{\lambda_s/\rho_T, 1 \}}.
	\end{align*}
	The first inequality follows from part (c) of Lemma~\ref{lemma:dual-prop}, the second inequality follows from part (b) of Lemma~\ref{lemma:dual-prop} and the third inequality follows from the convexity of the single-period dual function (Lemma~\ref{lemma:dual-prop}).
	
	\paragraph{Step 2: Complementary slackness.} Define $w_t(\mu) \coloneqq \mu^\top(\lambda_t - b_t(x_t))$. Then, Algorithm~\ref{alg:dual-descent} can be seen as running online mirror descent on the choice of the dual variables with linear losses $\{w_t(\cdot)\}_t$. The gradients of these loss functions are given by $\nabla w_t(\mu) = \lambda_t - b_t(x_t)$, which satisfy $\|\nabla w_t(\mu)\|_\infty \leq \|b_t(x_t)\|_\infty + \|\lambda(t)\|_\infty \leq \bar b + \bar \lambda$. Therefore, Proposition 5 of \citet{balseiro2020best} implies that for all $\mu \in \R_+^m$:
	\begin{align}\label{eq:regret-omd}
		\sum_{t=1}^{\tA} w_t(\mu_t) - w_t(\mu) \le E(\tA, \mu) \le E(T,\mu)\,,
	\end{align}
	where $E(t,\mu) = \frac 1 {2\sigma} (\bar{b} + \bar\lambda)^2 \eta \cdot t + \frac{1}{\eta} V_h(\mu,\mu_1)$ is the regret bound of the online mirror descent algorithm after $t$ iterations, and the second inequality follows because $\tA \le T$ and the error term $E(t,\mu)$ is increasing in $t$.
	
	\paragraph{Step 3: Upper bound on the optimal reward.} For every $\tA \in [1, T]$, we have
		\begin{align}\label{eq:opt-upper-bound}
			&\E_{\vec\gamma \sim \D^T} \left[ c(\vec\lambda, T) \cdot \opt(T, \vec\gamma) \right] \nonumber\\
			= &  \frac{\sum_{s=1}^{T}\min\{\lambda_s/\rho_T, 1 \}}{T} \cdot \E_{\vec \gamma \sim \PP^T} \left[ \opt(T, \vec\gamma) \right] &\left( \text{Defn. of } c(\vec\lambda, T) \right) \nonumber\\
			= &  \frac{\sum_{s=1}^{\tA}\min\{\lambda_s/\rho_T, 1 \}}{T} \cdot \E_{\vec \gamma \sim \PP^T} \left[ \opt(T, \vec\gamma) \right] + \frac{\sum_{s=\tA + 1}^{T}\min\{\lambda_s/\rho_T, 1 \}}{T} \cdot \E_{\vec \gamma \sim \PP^T} \left[ \opt(T, \vec\gamma) \right] \nonumber\\
			\leq &  \sum_{s=1}^{\tA}\min\{\lambda_s/\rho_T, 1 \} \cdot \frac{\E_{\vec \gamma \sim \PP^T} [ \opt(T, \vec\gamma) ]}{T} + \sum_{s=\tA + 1}^{T}\min\{\lambda_s/\rho_T, 1 \} \cdot \bar f &\left(\opt(T) \leq T \cdot \bar f \right) \nonumber\\
			\leq &   \sum_{s=1}^{\tA}\min\{\lambda_s/\rho_T, 1 \} \cdot \frac{\E_{\vec \gamma \sim \D^T} \left[ D(\bar\mu_{\tA}|T, \vec\gamma) \right]}{T}  +  \sum_{s=\tA + 1}^{T}\min\{\lambda_s/\rho_T, 1 \} \cdot \bar f  &\left(\text{weak duality} \right)  \nonumber\\
			= &  \sum_{t=1}^{\tA}\min\{\lambda_t/\rho_T, 1 \} \cdot \tilde D\left(\bar\mu_{\tA}|  \rho_T, \PP \right)  +  \sum_{s=\tA + 1}^{T}\min\{\lambda_s/\rho_T, 1 \} \cdot \bar f
		\end{align}
	where the last equality follows from part (a) of Lemma~\ref{lemma:dual-prop}.

	\paragraph{Step 4: Putting it all together.} Combining the results from steps 1-3 yields:
	\begin{align*}
		&\E_{\vec\gamma \sim \PP^T} \left[ c(\vec\lambda, T) \cdot \opt(T, \vec\gamma) - R(A|T, \vec\gamma)\right]\\
		\leq\ & \E_{\vec\gamma \sim \PP^T} \left[ c(\vec\lambda, T) \cdot \opt(T, \vec\gamma) - \sum_{t=1}^\tA f_t(x_t) \right]\\
		\leq\ &  \E_{\vec\gamma \sim \PP^T} \left[ c(\vec\lambda, T) \cdot \opt(T, \vec\gamma) - \sum_{t=1}^{\tA}\min\{\lambda_t/\rho_T, 1 \} \cdot \tilde D\left(\bar\mu_{\tA}|  \rho_T, \PP \right) + \sum_{t=1}^{\tA} \mu_t^\top \left( \lambda_t - b_t(x_t) \right)\right] & (\text{Equation \ref{eq:alg-lower-bound}})\\
		\leq\ &  \E_{\vec\gamma \sim \PP^T} \left[ \sum_{s=\tA + 1}^{T}\min\{\lambda_s/\rho_T, 1 \} \cdot \bar f + \sum_{t=1}^{\tA} w_t(\mu_t)\right] & (\text{Equation \ref{eq:opt-upper-bound}})\\
		\leq\ &  \underbrace{\E_{\vec\gamma \sim \PP^T} \left[ \sum_{s=\tA + 1}^{T}\min\{\lambda_s/\rho_T, 1 \} \cdot \bar f + \sum_{t=1}^{\tA} w_t(\mu) + E(T, \mu)\right]}_{\clubsuit} & (\text{Equation \ref{eq:regret-omd}})
	\end{align*}
	for all $\mu \in \R_+^m$. All that remains to complete the proof is choosing the right $\mu$. If $\tA = T$ (no resource was completely depleted), set $\mu = 0$. If $\tA < T$, then there exists a resource $j \in [m]$ that nearly got depleted, i.e., $\sum_{t=1}^{\tA} b_{t,j}(x_t) + \bar{b} \ge B_j$. Moreover, recall that the definition of a target consumption sequence implies $\sum_{t=1}^{T} \lambda_{t,j} \leq B_j$. Thus, $\sum_{t=1}^{\tA} b_{t,j}(x_t)) \geq - \bar b +  \sum_{t=1}^{T} \lambda_{t,j}$. Therefore, setting $\mu = (\bar f/ \underline \rho_T) e_j$, where $e_j \in \R_m$ is the $j$-th unit vector, yields:
	\begin{align*}
		\sum_{t=1}^\tA w_t(\mu) &= \sum_{t=1}^\tA \frac{\bar f}{\underline \rho_T} e_j^\top (\lambda_t - b_t(x_t))\\
		&=  \frac{\bar f}{\underline \rho_T} \cdot \left( \sum_{t=1}^\tA \lambda_{t,j} - \sum_{t=1}^\tA b_{t,j}(x_t) \right)\\
		&\leq \frac{\bar f}{\underline \rho_T}\cdot \left(\bar b - \left\{\sum_{t=1}^{T} \lambda_{t,j} - \sum_{t=1}^\tA \lambda_{t,j} \right\} \right)\\
		&= \frac{\bar f \bar b}{\underline \rho_T} - \bar f \cdot \sum_{t = \tA + 1}^T\frac{\lambda_{t,j}}{\underline \rho_T}\\
		&\leq \frac{\bar f \bar b}{\underline \rho_T} - \bar f \cdot \sum_{t = \tA + 1}^T \min\{\lambda_t/\rho_T, 1\}
	\end{align*}
Here we use that $\frac{\lambda_{t,j}}{\underline \rho_T} \ge \min\{\lambda_t/\rho_T, 1\}$. This follows because $\min_j a_j / \min_j b_j \ge \min_j a_j/b_j$.
	Finally, if we put everything together, we get
	\begin{align*}
		\clubsuit \leq \frac{\bar f \bar b}{\underline \rho_T} + E(T, \mu) \le \frac{\bar f \bar b}{\underline \rho_T} + \frac 1 {2\sigma} (\bar{b} + \bar\lambda)^2 \eta \cdot T + \frac{1}{\eta} V_h(\mu,\mu_1) \ ,	
	\end{align*}
	where we have used the definition of $E(T, \mu)$. The theorem follows from observing that all of our choices of $\mu$ in the above discussion lie in the set $\{0, (\bar f/\underline \rho_T) e_1,\ldots, (\bar f/\underline \rho_T) e_m\}$.
\end{proof}

\subsection{Proof of Proposition~\ref{prop:master}}

\begin{proof}[Proof of Proposition~\ref{prop:master}]
	Setting $\eta = \sqrt{C'_3/\{C_2 \tau_2\}}$ in Theorem~\ref{thm:regret} yields
	\begin{align*}
		\E_{\vec\gamma \sim \PP^T} \left[ c(\vec\lambda, T) \cdot \opt(T, \vec\gamma) - R(A|T, \vec\gamma)\right] &\le C^{(T)}_1 + C_2 T \sqrt{\frac{C_3'}{C_2 \tau_2}} + C^{(T)}_3 \sqrt{\frac{C_2 \tau_2}{C_3'}}\\
		&\leq C'_1 + \sqrt{C_2 C'_3 \tau_2} + \sqrt{C_2 C'_3 \tau_2} \,.
	\end{align*}
	Dividing both sides by $\E[\opt(T,\vec\gamma)]$ and using $\E[\opt(T, \vec\gamma)] \geq \kappa \cdot T$, we get
	\begin{align*}
		& \frac{\E_{\vec\gamma \sim \PP^T} \left[ c(\vec\lambda, T) \cdot \opt(T, \vec\gamma) - R(A|T, \vec\gamma)\right]}{\E_{\vec\gamma \sim \PP^T}[\opt(T,\vec\gamma)]} \leq \frac{C'_1 + 2\cdot \sqrt{C_2 C'_3 \tau_2}}{\E_{\vec\gamma \sim \PP^T}[\opt(T,\vec\gamma)]}\\
		\iff & c(\vec\lambda, T) - \frac{\E_{\vec\gamma \sim \PP^T} \left[R(A|T, \vec\gamma)\right]}{\E_{\vec\gamma \sim \PP^T}[\opt(T,\vec\gamma)]} \leq \frac{C'_1}{\kappa T} + \frac{2\cdot \sqrt{C_2 C'_3 \tau_2}}{\kappa T} \,.
	\end{align*}
	Therefore, rearranging terms and using $T \geq \tau_1$ we get
	\begin{align*}
		\frac{\E_{\vec\gamma \sim \PP^T} \left[R(A|T, \vec\gamma)\right]}{\E_{\vec\gamma \sim \PP^T}[\opt(T,\vec\gamma)]} &\geq c(\vec\lambda, T) - \left( \frac{C'_1}{\kappa T} + 2\cdot\frac{\sqrt{C_2 C'_3 \tau_2}}{\kappa T} \right)\\
		&\geq c(\vec\lambda, T) - \left( \frac{C'_1}{\kappa T} + 2 \cdot\frac{\sqrt{(\tau_2/\tau_1) C_2 C'_3 \tau_1}}{\kappa T} \right)\\
		&\geq c(\vec\lambda, T) - \left( \frac{C'_1}{\kappa \tau_1} + 2 \cdot\frac{\sqrt{(\tau_2/\tau_1) C_2 C'_3 \tau_1}}{\kappa \tau_1} \right)\\
		&\geq c(\vec\lambda, T) - \left( \frac{C'_1}{\kappa \tau_1} + 2 \cdot\frac{\sqrt{(\tau_2/\tau_1) C_2 C'_3}}{\kappa \sqrt{\tau_1}} \right)
	\end{align*}
	as required.
\end{proof}

\section{Proofs and Extensions for Section~\ref{sec:upper-bound}}

To prove Theorem~\ref{thm:upper-bound}, it suffices to prove the stronger statement which holds for online algorithms with prior knowledge of $(r, \PP_r)$ before time $t=1$. Consequently, we assume that online algorithms have this prior knowledge in the remainder of this section. Any algorithm without this knowledge can only do worse.

\subsection{Proof of Lemma~\ref{lemma:r-opt}}

\begin{proof}[Proof of Lemma~\ref{lemma:r-opt}]
	Fix $T \in [\tau_1, \tau_2]$. As the resource consumption function is given by $I(\cdot)$, we get that
	\begin{align*}
		\opt(T,r) = \max_x \sum_{t=1}^T f_r(x_t) \quad \text{subject to } \sum_{t=1}^T x_t \leq B	
	\end{align*}
	Let $x$ be a feasible solution of the above optimization problem. Then,
	\begin{align}\label{eqn:hind-opt}
		\sum_{t=1}^T f_r(x_t) = T \cdot \frac{\sum_{t=1}^T f_r(x_t)}{T} \leq T \cdot f_r\left( \frac{\sum_{t=1}^T x_t}{T} \right) \leq T \cdot f_r\left( \frac{B}{T} \right)
	\end{align}
	where the first inequality follows from the concavity of $f_r$ and the second inequality follows from the resource constraint $\sum_{t=1}^T x_t \leq B$. Hence, we get that $x_t^* = B/T$ for all $t \leq T$ is an optimal solution to the above optimization problem and as a consequence, $\opt(T,r) = T \cdot f_r(B/T) = B^r \cdot T^{1-r}$. Moreover, we have that $x_t^* = B/T$ is the unique optimal solution because $f_r$ is strictly concave and increasing for $r \in (0,1)$, and therefore (i) The first inequality in \eqref{eqn:hind-opt} is strict whenever $x_t \neq x_s$ for some $s,t \in [T]$; (ii) $f_r(\sum_{t=1}^T x_t/T) < f(B/T)$ whenever $\sum_{t=1}^T x_t < B$.
\end{proof}

\subsection{Proof of Lemma~\ref{lemma:alter-char}}

\begin{proof}[Proof of Lemma~\ref{lemma:alter-char}]
	We begin by noting that our use of $\max$ instead of $\sup$ is justified in the right-hand side of the equality in Lemma~\ref{lemma:alter-char} because $f_r^{-1}$ is continuous for all $r \in (0,1)$.  Now, fix $r \in (0,1)$ and $1 \leq \tau_1 \leq \tau_2$. Let
	\begin{align*}
		c \in \argmax \left\{ c'\ \biggr\lvert\ \tau_1 \cdot f_r^{-1} \left(c' \cdot \frac{\opt(\tau_1, r)}{\tau_1} \right) + \sum_{t= \tau_1 + 1}^{\tau_2} f_r^{-1} \left( c' \cdot \Delta \opt(t, r) \right) \leq B  \right\}\ ,	
	\end{align*}
 	and define
 	\begin{align}\label{eq:x-t}
 		x_t \coloneqq \begin{cases}
 			f_r^{-1} \left(c \cdot \frac{\opt(\tau_1, r)}{\tau_1} \right) &\text{if } t \leq \tau_1\\
 			f_r^{-1} \left( c \cdot \Delta \opt(t, r) \right) &\text{if } \tau_1 < t \leq \tau_2
 		\end{cases}	
 	\end{align}
	Then, by definition of $c$, we have $\sum_{t=1}^{\tau_2} I(x_t) = \sum_{t=1}^{\tau_2} x_t \leq B$. Moreover, observe that $\opt(\tau_1, r)/\tau_1 = (B/\tau_1)^r$ and
	\begin{align*}
		\Delta \opt(t,r) = \opt(t,r) - \opt(t-1,r) = B^r \cdot \left( t^{1-r} - (t-1)^{1-r} \right) \leq B^r \cdot \frac{1-r}{(t-1)^r}
	\end{align*}
	for all $t \geq \tau_1 +1$. To see why the second-last inequality holds, note that the Intermediate Value Theorem applied to the function $t \mapsto t^{1-r}$ between the points $t$ and $t-1$ yields the existence of an $s \in [t-1,t]$ such that $t^{1-r} - (t - 1)^{1-r} = (1 -r)/s^r$, which implies $t^{1-r} - (t - 1)^{1-r} \leq (1 -r)/(t-1)^r$.

As a consequence, we get
\begin{align}
 		x_t \leq \begin{cases}
 			c^{1/r} \{B/\tau_1\} &\text{if } t \leq \tau_1\\
 			c^{1/r} \{B/(t-1)\} &\text{if } \tau_1 < t \leq \tau_2
 		\end{cases}	
\end{align}
Combining the above inequalities with the definition of $c$ yields $c \leq 1$. Hence, we have $x_t \in \X$ for all $t \in [\tau_2]$.

Consider the algorithm that selects action $x_t$ at time $t$. Then, in $T \in [\tau_1, \tau_2]$ time steps, it receives a reward of
	\begin{align*}
		\sum_{t=1}^T f_r(x_t)
		 &= \sum_{t=1}^{\tau_1} f_r\left( f_r^{-1} \left(c \cdot \frac{\opt(\tau_1, r)}{\tau_1} \right) \right) + \sum_{t=\tau_1 +1}^T f_r\left( f_r^{-1} \left( c \cdot \Delta \opt(t, r) \right) \right)\\
		 &= c\cdot \opt(\tau_1,r) + \sum_{t = \tau_1 + 1}^T c \cdot \Delta \opt(t,r)\\
		 &= c \cdot \opt(T,r)\ .
	\end{align*}
	Therefore, we have shown that
	\begin{align*}
		\sup_A\min_{T \in [\tau_1, \tau_2]} \frac{R(A|T,r)}{\opt(T,r)} \geq 	\max \left\{ c \in [0,1]\ \biggr\lvert\ \tau_1 \cdot f_r^{-1} \left(c \cdot \frac{\opt(\tau_1, r)}{\tau_1} \right) + \sum_{t= \tau_1 + 1}^{\tau_2} f_r^{-1} \left( c \cdot \Delta \opt(t, r) \right) \leq B  \right\}
	\end{align*}

	Next, we prove the above inequality in the opposite direction. Consider an online algorithm $A$ such that
	\begin{align*}
		\frac{R(A|T,r)}{\opt(T,r)} \geq c \quad \forall\ T \in [\tau_1, \tau_2] ,
	\end{align*}
	for some constant $c > 0$. Let $x(A)_t$ represent the action taken by $A$ at time $t$. Since the action of an online algorithm cannot depend on future information, $x(A)_t$ represents the action taken by algorithm $A$ for all horizons $T \geq t$. Let $\{x(\tilde A)_t\}_t$ represent the sequence obtained by sorting $\{x(A)_t\}_t$ in decreasing order, and let $\tilde A$ represent the algorithm that takes action $x(\tilde A)_t$ at time $t$. Then, we have
	\begin{align*}
		\frac{\sum_{t=1}^T f_r(x(\tilde A)_t)}{\opt(T,r)} \geq \frac{\sum_{t=1}^T f_r(x(A)_t)}{\opt(T,r)} \geq c \quad \forall\ T \in [\tau_1, \tau_2]\ ,	
	\end{align*}
	which allows us to assume that $\{x(A)_t\}_t$ is sorted in decreasing order without loss of generality.

	Since $\sum_{t=1}^{\tau_2} x(A)_t \leq B$, to complete the proof it suffices to show that
	\begin{align*}
		\sum_{t=1}^{\tau_2} x_t = \tau_1 \cdot f_r^{-1} \left(c \cdot \frac{\opt(\tau_1, r)}{\tau_1} \right) + \sum_{t= \tau_1 + 1}^{\tau_2} f_r^{-1} \left( c \cdot \Delta \opt(t, r) \right) \leq \sum_{t=1}^{\tau_2} x(A)_t	
	\end{align*}
	where the equality follows from the definition of $x_t$ \eqref{eq:x-t}. We will prove this via induction by inductively proving the following statement for all $T \in [\tau_1, \tau_2]$:
	\begin{align*}
		\sum_{t=1}^{T} x_t = \tau_1 \cdot f_r^{-1} \left(c \cdot \frac{\opt(\tau_1, r)}{\tau_1} \right) + \sum_{t= \tau_1 + 1}^{T} f_r^{-1} \left( c \cdot \Delta \opt(t, r) \right) \leq \sum_{t=1}^{T} x(A)_t	
	\end{align*}
	
 	To do so, we will maintain variables $\{w(T)_t\}_{t\leq T}$ that we initialize to be 0 and update inductively. At a high level, they capture a water-filling procedure. Suppose there is a container corresponding to each time step $t$ with a capacity of $x(A)_t$. We assume that these containers can be connected to each other so that water always goes to the lowest level, which corresponds to the highest marginal reward since $f_r$ is concave.  Moreover, we will assume that container $T$ becomes available at time $T$ and is connected to containers $t < T$ at that point. Finally, we also freeze the newly-added water at the end of each time step to inductively use the properties of the water level from the previous time step. We would like to caution the reader that this water-filling interpretation is just a tool that guided our intuition, and the mathematical quantities defined below may not match it exactly.
 	
 	\paragraph{Base Case $T= \tau_1$:} Let $\{w(\tau_1)_t\}_{t=1}^{\tau_1}$ be a decreasing sequence that satisfies the following properties:
 	\begin{itemize}
 		\item[I.] $\sum_{t=1}^{\tau_1} f_r(w(\tau_1)_t) = c \cdot \opt(\tau_1,r)$.
 		\item[II.] $w(\tau_1)_t \leq x(A)_t$ for all $t \leq \tau_1$.
 		\item[III'.] $w(\tau_1)_t < w(\tau_1)_1 \implies w(\tau_1)_t = x(A)_t$.
 	\end{itemize}
 	Such a sequence is guaranteed to exist because $\{x(A)_t\}_{t=1}^{\tau_1}$ satisfies properties (II - III') trivially, and (I) can be satisfied as $\sum_{t=1}^{\tau_1} f_r(x(A)_t) \geq c \cdot \opt(\tau_1,r)$ and $f_r$ is a continuous increasing function. If $\{w(\tau_1)_t\}_{t=1}^{\tau_1}$ is a constant sequence, then property (I) implies
 	\begin{align*}
 		w(\tau_1)_t = f_r^{-1} \left(c \cdot \frac{\opt(\tau_1, r)}{\tau_1} \right) = x_{\tau_1} \quad \forall\ t \leq \tau_1
 	\end{align*}
	Suppose $\{w(\tau_1)_t\}_{t=1}^{\tau_1}$ is not a constant sequence. Then, the strict concavity of $f_r$ implies that
	\begin{align*}
		f_r \left( \frac{\sum_{t=1}^{\tau_1} w(\tau_1)_t}{\tau_1} \right) > \frac{\sum_{t=1}^{\tau_1} f_r(w(\tau_1)_t)}{\tau_1} = \frac{c \cdot \opt(\tau_1,r))}{\tau_1},	
	\end{align*}
	which implies
	\begin{align}\label{eqn:upper-bound-base-case}
		w(\tau_1)_1 \geq \frac{\sum_{t=1}^{\tau_1} w(\tau_1)_t}{\tau_1} > 	f_r^{-1} \left(c \cdot \frac{\opt(\tau_1, r)}{\tau_1} \right) = x_{\tau_1}.
	\end{align}
	Therefore, we have established:
	\begin{itemize}
		\item[III.] $w(\tau_1)_t < x_t \implies w(\tau_1)_t = x(A)_t$. 
		\item[IV.] $\sum_{t=1}^{\tau_1} w(\tau_1)_t \geq \sum_{t=1}^{\tau_1} x_t = \tau_1 \cdot x_{\tau_1}$. 
	\end{itemize}
	where (III) follows follows trivially when $\{w(\tau_1)_t\}_{t=1}^{\tau_1}$ is a constant sequence and follows from (III') and $w(\tau_1)_1 > x_t = x_{\tau_1}$ otherwise, and (IV) also follows trivially when $\{w(\tau_1)_t\}_{t=1}^{\tau_1}$ is a constant sequence and follows from \eqref{eqn:upper-bound-base-case} otherwise.
	
	\paragraph{Induction Hypothesis $\tau_1 \leq T < \tau_2$:} Suppose there exists a decreasing sequence $\{w(T)_t\}_{t=1}^{T}$ that satisfies the following properties:
 	\begin{itemize}
 		\item[I.] $\sum_{t=1}^T f_r\left( w(T)_t \right) = c \cdot \opt(T,r)$.
 		\item[II.] $w(T)_t \leq x(A)_t$ for all $t \leq T$.
 		\item[III.] $w(T)_t < x_t \implies w(T)_t = x(A)_t$.
 		\item[IV.] $\sum_{t=1}^T w(T)_t \geq \sum_{t=1}^T x_t$.
 	\end{itemize} 
 	
 	\paragraph{Induction Step $T+1$:} If $x(A)_{T+1} \geq x_{T+1}$, then set $w(T+1)_{T+1} = x_{T+1}$ and $w(T+1)_t = w(T)_t$ for all $t \leq T$. In this case, it is easy to see that conditions (I-IV) hold for $\{w(T+1)_t\}_t$. Next, assume $x(A)_{T+1} < x_{T+1}$. In this case, set $w(T+1)_{T+1} = x(A)_{T+1}$. Moreover, let $\{w(T+1)_t\}_{t=1}^{T}$ be a decreasing sequence that satisfies the following properties:
 	\begin{itemize}
 		\item[I.] $\sum_{t=1}^T f_r(w(T+1)_t) = c \cdot \opt(T,r) + f_r(x_{T+1}) - f_r(x(A)_{T+1})$.
 		\item[II.] $w(T+1)_t \leq x(A)_t$ for all $t \leq T$.
 		\item[III'.] $w(T)_t \leq w(T+1)_t$ for all $t \leq T$.
 	\end{itemize}
 	Such a sequence is guaranteed to exist because $\{x(A)_t\}_{t=1}^T$ satisfies property (II) trivially, (III') as a consequence of the inductive hypothesis, and (I) can be satisfied because $f_r$ is a continuous increasing function and
 	\begin{align*}
 		\sum_{t=1}^{T+1} f_r(x(A)_t) \geq c \cdot \opt(T+1, r) &\iff 	\sum_{t=1}^{T+1} f_r(x(A)_t) \geq c \cdot \opt(T, r) + c \cdot \Delta \opt(T+1,r)\\
 		&\iff \sum_{t=1}^{T} f_r(x(A)_t) \geq c \cdot \opt(T, r) + f_r(x_{T+1}) - f_r(x(A)_{T+1}).
 	\end{align*}
	Observe that (III') and $w(T+1)_{T+1} = x(A)_{T+1}$ implies
	\begin{itemize}
		\item[III.] $w(T+1)_t < x_t \implies w(T+1)_t = x(A)_t$
	\end{itemize}

	Now, only (IV) remains. First, note that the Intermediate Value Theorem applied to $t \mapsto t^{1-r}$ implies
	\begin{align*}
		\Delta \opt(T+1 , r) = 	B^r \cdot [(T+1)^{1-r} - T^{1-r}] \leq B^r \cdot \frac{1-r}{T^{1-r}} \leq B^r \cdot [T^{1-r} - (T-1)^{1-r}] = \Delta \opt(T , r),
	\end{align*}
	and as a consequence, we get $x_{T+1} \leq x_T$. This further implies
	\begin{align*}
		\left[ \sum_{t=1}^T w(T+1)_t - \sum_{t=1}^T w(T)_t \right] f_r'(x_{T+1}) &\geq \left[ \sum_{t=1}^T w(T+1)_t - \sum_{t=1}^T w(T)_t \right] f_r'(x_{T})\\
		 &\geq \sum_{t=1}^T\left[ w(T+1)_t - w(T)_t \right] f_r'(w(T)_t)\\
		&\geq \sum_{t=1}^T\left[ f_r(w(T+1)_t) - f_r(w(T)_t) \right]\\
		&= f(x_{T+1}) - f(x(A)_{T+1})\\
		&\geq \left[ x_{T+1} - x(A)_{T+1} \right] f_r'(x_{T+1})  
	\end{align*}
	where the first inequality follows from the concavity of $f_r$ and the fact that $x_{T+1} \leq x_T$; the second inequality follows from concavity of $f_r$ and the observation that the induction hypothesis and (III') imply $w(T)_t = w(T+1)_t = x(A)_t$ whenever $w(T)_t < x_T \leq x_t$, i.e., $w(T+1)_t - w(T)_t > 0$ implies $w(T)_t \geq x_T$; the third and the fourth inequalities follow from the Intermediate Value Theorem applied to $f_r$; and the equality follows from (I). Therefore, we get (IV) by using the inductive hypothesis and $w(T+1)_{T+1} = x(A)_{T+1}$:
	\begin{align*}
		\sum_{t=1}^{T+1} w(T+1) \geq \sum_{t=1}^T w(T)_t + x_{T+1} - x(A)_{T+1} + w(T+1)_{T+1} \geq \sum_{t=1}^{T+1} x_t	
	\end{align*}
	This concludes the induction step and thereby the proof, since (II) and (IV) together imply $\sum_{t=1}^T x_t \leq \sum_{t=1}^T x(A)_t$.
\end{proof}


\subsection{Proof of Theorem~\ref{thm:upper-bound}}

\begin{proof}[Proof of Theorem~\ref{thm:upper-bound}]
	Combining Lemma~\ref{lemma:alter-char} and Lemma~\ref{lemma:r-opt} yields
	\begin{align}\label{eq-simplify-start}
		\tau_1 \cdot \left(c^* \cdot \frac{B^r \tau_1^{1-r}}{\tau_1} \right)^{1/r} + \sum_{t= \tau_1 + 1}^{\tau_2} \left( c^* \cdot [B^r t^{1-r} - B^r (t-1)^{1-r}] \right)^{1/r} \leq B\ ,	
	\end{align}
	where $c^* = \sup_A\min_{T \in [\tau_1, \tau_2]} R(A|T,r)/\opt(T,r)$. First, note that the Intermediate Value Theorem applied to $t \mapsto t^{1-r}$ implies
	\begin{align*}
		t^{1-r} - (t-1)^{1-r} \geq \frac{1-r}{t^r} \quad \forall\ T \geq \tau_1+1,
	\end{align*}
	which allows us to derive the following inequality from \eqref{eq-simplify-start}:
	\begin{align*}
		&(c^*)^{1/r} + \sum_{t = \tau_1 + 1}^{\tau_2} \left(c^* \cdot \frac{1-r}{t^r} \right)^{1/r} \leq 1\\
		\iff & (c^*)^{1/r} \leq \frac{1}{1 + (1 - r)^{1/r} \sum_{t = \tau_1 +1}^{\tau_2} 1/t }
	\end{align*}
	Using $\sum_{t=\tau_1 +1}^{\tau_2} 1/t \geq \ln(\tau_2/(\tau_1+1)) = \ln(\tau_2/\tau_1) + \ln(\tau_1/ (\tau_1+1))$ and $(1 - r)^{1/r} < 1$, we get the first half of Theorem~\ref{thm:upper-bound}:
	\begin{align*}
	    c^* \leq \frac{1}{\left(1 + (1 - r)^{1/r} \cdot \ln(\tau_2/\tau_1) + (1 - r)^{1/r} \ln\left( \frac{\tau_1}{\tau_1 + 1} \right)  \right)^r} \leq \frac{1}{\left(1 + (1 - r)^{1/r} \cdot \ln(\tau_2/\tau_1) + \ln\left( \frac{\tau_1}{\tau_1 + 1} \right)  \right)^r}.
	\end{align*}
	
	To get the second half, note that $1 + \ln(\tau_1/(\tau_1 + 1)) \geq 0$, which allows us to write:
	\begin{align*}
		c^* \leq \frac{1}{(1 -r) \ln(\tau_2/\tau_1)^r}
	\end{align*}
	
	Finally, define $g: (0,1) \to \R$ as $g(r) = (1 - r) \ln(\tau_2/\tau_1)^r$. Then,
	\begin{align*}
		&\ln(g(r)) = \ln(1 - r) + r \cdot \ln\ln(\tau_2/\tau_1)\\
		\implies & \frac{g'(r)}{g(r)} = - \frac{1}{1 - r} + \ln\ln(\tau_2/\tau_1)\\
		\implies & g'(r) =  -\ln(\tau_2/\tau_1)^r + \ln\ln(\tau_2/\tau_1) g(r)\\
		\implies & g''(r) = - \ln\ln(\tau_2/\tau_1) \ln(\tau_2/\tau_1)^r + \ln\ln(\tau_2/\tau_1) f(r) = - \ln\ln(\tau_2/\tau_1) \cdot r \cdot \ln(\tau_2/\tau_1)^r
	\end{align*}

Hence, for $\tau_2/\tau_1 > e^e$, $g$ is concave and is maximized at $r = 1 - (1 / \ln\ln(\tau_2/\tau_1))$. Plugging in $r = 1 - (1 / \ln\ln(\tau_2/\tau_1))$ yields
\begin{align*}
	c^* \leq \frac{\ln\ln(\tau_2/\tau_1)}{\ln(\tau_2/\tau_1)^{1 - \frac{1}{\ln\ln(\tau_2/\tau_1)}}} = \frac{e \cdot \ln\ln(\tau_2/\tau_1)}{\ln(\tau_2/\tau_1)}
\end{align*}
which completes the proof.
\end{proof}

\subsection{Randomized Upper Bound with Linear Rewards and Consumptions}\label{appendix:randomized-linear-upper-bound}

The upper bound of Theorem~\ref{thm:upper-bound} can be extended to the popular special case of linear rewards and linear consumption. Fix $r \in (0,1)$ and $B = \tau_1$. We show that there exists a request distribution, with linear rewards and random linear resource consumption functions, that behaves like $(f_r, I, \X)$ in expectation. Define a consumption rate CDF $H$ as
\begin{align*}
  H(s) = \begin{cases}
   0 & \text{if } s \leq 0\\
   s^{\frac{r}{1-r}} & \text{if } 0 \leq s\leq 1\\
   1 & \text{if } s \geq 1
  \end{cases}\,.
\end{align*}

Consider the following request distribution: the reward of every request is $r^r$, the linear resource-consumption rate $s$ is drawn randomly from $H$, and the action set is $[0,1]$ (to represent the fraction of the request accepted), i.e., for request $(r^r,s,[0,1])$ and action $y \in [0,1]$, the reward is $r^r \cdot y$ and the amount of resource consumed is $s \cdot y$. It is relatively straightforward to see that the optimal action at each time-step is to set a threshold $x$ and accept a request $(r^r,s,[0,1])$ (set $y = 1$) if and only if the consumption rate $s$ is less than or equal to the threshold $x$. For such a threshold $x$, the expected cost is given by
\begin{align*}
  \E_{s \sim H}[s \cdot \mathds{1}(s \leq x)] = x H(x) - \int_0^x H(s) ds = x^{\frac{1}{1-r}}  - (1-r) \cdot x^{\frac{1}{1-r}} = r \cdot x^{\frac{1}{1-r}}
\end{align*}
and the expected reward is given by $r^r \cdot H(x) = r^r \cdot x^{\frac{r}{1-r}}$. Therefore, the expected reward is equal to the expected cost raised to the power $r$, and consequently this request distribution behaves like $(f_r, I, \X)$ in expectation.

\subsection{Upper Bound with Distributional Knowledge of Horizon}\label{appendix:dist-upper-bound}

In this appendix, we show that the upper bound of Theorem~\ref{thm:upper-bound} holds even when the horizon $T$ drawn from a distribution $\T$ supported on $[\tau_1, \tau_2]$ and this distribution is known to the decision maker. This is because the reward functions $f_r(\cdot)$ are concave and $I(\cdot)$ is linear, which makes the problem of maximizing the competitive ratio a convex problem that satisfies strong duality. As we note in the following theorem, the dual variables give rise to a distribution $\T_r$ of the horizon which leads to the same expected performance ratio as the competitive ratio.

\begin{proposition}\label{prop:dist-upper-bound}
  For every $r \in (0,1)$, there exists a distribution $\T_r$ of the horizon $T$ such that
  \begin{align*}
    \sup_A \E_{T \sim \T_r} \left[ \frac{R(A|T,r)}{\opt(T,r)} \right] = \sup_A \min_{T \in [\tau_1, \tau_2]} \frac{R(A|T,r)}{\opt(T,r)} \,,
  \end{align*}
  where $\sup_A$ denotes the supremum over all online algorithms $A$ with the knowledge of the horizon distribution $\T_r$ and the request distribution $\PP_r$.
\end{proposition}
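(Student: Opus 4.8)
The plan is to recognize the claimed identity as the strong-duality (saddle-point) statement for the convex program that computes the competitive ratio, and to read off $\T_r$ directly from the optimal dual variables. Since the request $(f_r, I, \X)$ is deterministic and known in advance, and randomization is suboptimal (as argued just before Theorem~\ref{thm:upper-bound}), every online algorithm is described by a fixed feasible action sequence $x = (x_1, \dots, x_{\tau_2})$ lying in the compact convex set $\mathcal{C} \coloneqq \{x : \sum_{t=1}^{\tau_2} x_t \leq B,\ 0 \leq x_t \leq \max\{1, B/\tau_1\}\}$, so $\sup_A = \sup_{x \in \mathcal{C}}$ in both sides of the proposition, and the performance ratio at horizon $T$ is $\phi(x,T) \coloneqq \sum_{t=1}^T f_r(x_t)/\opt(T,r)$. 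Because $f_r$ is concave and $\opt(T,r) > 0$, the map $x \mapsto \phi(x,T)$ is concave for each $T$, while $\T \mapsto \E_{T \sim \T}[\phi(x,T)]$ is linear. A linear functional over the simplex of distributions on the finite set $\{\tau_1, \dots, \tau_2\}$ is minimized at a point mass, so $\min_{\T} \E_{T \sim \T}[\phi(x,T)] = \min_{T \in [\tau_1,\tau_2]} \phi(x,T)$; hence the right-hand side of the proposition is $\sup_{x \in \mathcal{C}} \min_{T} \phi(x,T)$, and the left-hand side (for the $\T_r$ to be constructed) is $\sup_{x \in \mathcal{C}} \E_{T \sim \T_r}[\phi(x,T)]$.

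\textbf{Setting up the convex program.} First I would write the competitive ratio as the optimal value $c^*$ of
\[
  \max_{c \in \R,\ x \in \mathcal{C}}\ c \quad\text{s.t.}\quad c \cdot \opt(T,r) - \sum_{t=1}^T f_r(x_t) \leq 0 \quad \forall\, T \in [\tau_1, \tau_2].
\]
Each constraint function is convex in $(c,x)$ (affine in $c$ minus a concave function of $x$) and $\mathcal{C}$ is convex, so this is a convex program with a linear objective. Slater's condition holds: take $x_t = B/(2\tau_2) > 0$ for all $t$, which is strictly budget-feasible and lies in the box since $B/(2\tau_2) \leq B/\tau_1 \leq \max\{1, B/\tau_1\}$, and then any sufficiently small $c > 0$ makes all finitely many performance constraints strict. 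Strong duality therefore holds.

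\textbf{Extracting the distribution from the dual.} Next I would form the partial Lagrangian that dualizes only the performance constraints, keeping $x \in \mathcal{C}$ as a hard constraint, with multipliers $\alpha_T \geq 0$:
\[
  d(\alpha) = \max_{c \in \R,\ x \in \mathcal{C}}\ \Big\{ c\big(1 - \textstyle\sum_T \alpha_T \opt(T,r)\big) + \sum_T \alpha_T \sum_{t=1}^T f_r(x_t) \Big\}.
\]
Finiteness over $c \in \R$ forces $\sum_T \alpha_T \opt(T,r) = 1$, so defining $\T_r(T) \coloneqq \alpha_T \opt(T,r)$ yields a probability distribution on $\{\tau_1, \dots, \tau_2\}$ with $\alpha_T = \T_r(T)/\opt(T,r)$. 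Substituting collapses the dual function to $d(\alpha) = \max_{x \in \mathcal{C}} \sum_T \T_r(T)\, \phi(x,T) = \sup_{x \in \mathcal{C}} \E_{T \sim \T_r}[\phi(x,T)]$. Letting $\alpha^*$ (equivalently $\T_r$) be an optimal dual solution, strong duality gives $\sup_{x} \E_{T \sim \T_r}[\phi(x,T)] = d(\alpha^*) = c^* = \sup_{x} \min_{T} \phi(x,T)$, which is precisely the proposition.

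\textbf{Main obstacle.} The delicate point is the bookkeeping that confirms the partial dual function is \emph{literally} the expected-ratio objective (the normalization $\T_r(T) = \alpha_T \opt(T,r)$ is exactly what converts the linear-in-$c$ term into a valid distribution), together with a clean verification of strong duality via Slater. I expect this to be routine but care-demanding. As a conceptual backbone and a cross-check, one can instead invoke Sion's minimax theorem applied to $(x,\T) \mapsto \E_{T \sim \T}[\phi(x,T)]$ on $\mathcal{C} \times \Delta$, which is compact-convex and concave-in-$x$, linear-in-$\T$; this yields the same saddle point and produces the optimal adversary distribution $\T_r$ directly. Finally, I would remark that the algorithm's knowledge of $\T_r$ and $\PP_r$ does not enlarge its strategy set beyond $\mathcal{C}$, since actions are committed before the horizon $T$ is realized, so the reduction $\sup_A = \sup_{x \in \mathcal{C}}$ is valid in the distributional setting as well.
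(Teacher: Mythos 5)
Your proof is correct and follows essentially the same route as the paper's: both formulate the max--min competitive ratio as a convex program over feasible action sequences, verify Slater's condition, dualize the performance constraints, and observe that boundedness in the free scalar variable forces the dual multipliers to normalize into a probability distribution, which is then taken as $\T_r$. The only cosmetic difference is that you scale the constraints by $\opt(T,r)$, so your distribution is the renormalization $\T_r(T) = \alpha_T \opt(T,r)$, whereas the paper keeps the constraints in ratio form so that its multipliers $p_T$ sum to one directly.
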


\begin{proof}[Proof of Proposition~\ref{prop:dist-upper-bound}]
	Fix $r \in (0,1)$. We begin by showing that for all $r \in (0,1)$, we have
	\begin{align}\label{eqn:dist-upper-bound-1}
		\sup_A \min_{T \in [\tau_1, \tau_2]} \frac{R(A|T,r)}{\opt(T,r)} \quad = \quad \max_{z, x_t} \quad &z \nonumber\\
		\text{s.t.} \quad &z \leq \frac{\sum_{t=1}^T f_r(x_t)}{\opt(T,r)} &\forall T \in [\tau_1, \tau_2]\\
		&\sum_{t=1}^{\tau_2} x_t \leq B \nonumber\\
		&x_t \in \X &\forall t \in [\tau_2] \nonumber
	\end{align}
	The RHS is a convex program because $f_r$ is concave. Let $A$ be any online algorithm. Let $x(A)_t$ denote the action taken by $A$ at time $t$. Then, $x_t = x(A)_t$ and
	\begin{align*}
		z = \min_{T \in [\tau_1, \tau_2]} 	\frac{\sum_{t=1}^T f_r(x(A)_t)}{\opt(T,r)} = \min_{T \in [\tau_1, \tau_2]}  \frac{R(A|T,r)}{\opt(T,r)}\,
	\end{align*}
	is a feasible solution of the convex program, thereby establishing the `$\leq$' direction. The other direction is equally straightforward: Any feasible solution $(x_t, z)$ of the convex program naturally gives rise to an algorithm which takes action $x_t$ at time $t$ and achieves the desired competitive ratio.
	
	Note that $x_t = 1/2$ and $z = 0$ is a feasible solution of the convex program that satisfies
	\begin{align*}
		z < \frac{\sum_{t=1}^T f_r(x_t)}{\opt(T,r)} \quad \forall T \in [\tau_1, \tau_2]
	\end{align*}
	Therefore, by Slater's condition (see \citealt{bertsekas1998nonlinear}), we get that strong duality holds and there exists an optimal dual multiplier $\{p_T\}_{T=\tau_1}^{\tau_2}$ associated with the constraints in \eqref{eqn:dist-upper-bound-1} such that $p_T \geq 0$ for all $T \in [\tau_1, \tau_2]$ and
	\begin{align*}
		\sup_A \min_{T \in [\tau_1, \tau_2]} \frac{R(A|T,r)}{\opt(T,r)} \quad = \quad \max \quad & z + \sum_{T=\tau_1}^{\tau_2} p_T \left( \frac{\sum_{t=1}^T f_r(x_t)}{\opt(T,r)} - z\right) \\
		\text{s.t.} \quad &\sum_{t=1}^{\tau_2} x_t \leq B\\
		&x_t \in \X &\forall t \in [\tau_2]\\
		= \quad \max \quad & z\left( 1 - \sum_{T=\tau_1}^{\tau_2} p_T \right) + \sum_{T=\tau_1}^{\tau_2} p_T \cdot\frac{\sum_{t=1}^T f_r(x_t)}{\opt(T,r)} \\
		\text{s.t.} \quad &\sum_{t=1}^{\tau_2} x_t \leq B\\
		&x_t \in \X &\forall t \in [\tau_2]
	\end{align*}
	Observe that, since $z$ is an unrestricted variable, we need $\sum_{T=\tau_1}^{\tau_2} p_T =1$ to ensure that the restated (Lagrangian) optimization problem is bounded, which is necessary because the LHS is bounded. Hence, we get
	\begin{align*}
		\sup_A \min_{T \in [\tau_1, \tau_2]} \frac{R(A|T,r)}{\opt(T,r)} \quad
		= \quad \max \quad & \sum_{T=\tau_1}^{\tau_2} p_T \cdot\frac{\sum_{t=1}^T f_r(x_t)}{\opt(T,r)} \\
		\text{s.t.} \quad &\sum_{t=1}^{\tau_2} x_t \leq B\\
		&x_t \in \X &\forall t \in [\tau_2]
	\end{align*}
	Let $\T_r$ be the distribution which picks horizon $T$ with probability $p_T$ for all $T \in [\tau_1, \tau_2]$. Then, once again, we can use the equivalence between feasible solutions of the convex program and the actions of an online algorithm ($x_t = x(A)_t$) to get
	\begin{align*}
		\sup_A \min_{T \in [\tau_1, \tau_2]} \frac{R(A|T,r)}{\opt(T,r)} \quad
		= \quad \max \quad & \sum_{T=\tau_1}^{\tau_2} p_T \cdot\frac{\sum_{t=1}^T f_r(x_t)}{\opt(T,r)} \quad = \quad \sup_A \E_{T \sim \T_r} \left[ \frac{R(A|T,r)}{\opt(T,r)} \right]  \\
		\text{s.t.} \quad &\sum_{t=1}^{\tau_2} x_t \leq B\\
		&x_t \in \X \quad \forall t \in [\tau_2]
	\end{align*}
	as required.
\end{proof}

Combining Proposition~\ref{prop:dist-upper-bound} and Theorem~\ref{thm:upper-bound} immediately yields the following corollary.

\begin{corollary}\label{cor:dist-upper-bound}
    For all $r \in (0,1)$ and $1 \leq \tau_1 \leq \tau_2$, there exists a horizon distribution $\T_r$ such that every online algorithm $A$ with knowledge of $\T_r$ satisfies
    \begin{align*}
        \E_{T \sim \T_r} \left[ \frac{R(A|T,r)}{\opt(T,r)} \right] \leq \frac{1}{\left(1 + (1 - r)^{1/r} \cdot \ln(\tau_2/\tau_1) + \ln\left( \frac{\tau_1}{\tau_1 + 1} \right) \right)^r}\,.
    \end{align*}
  In particular, for $r = 1 - \{1/\ln\ln(\tau_2/\tau_1)\}$, $\tau_2/\tau_1 > e^e$ and $\tau_1 \geq 1$, we have
  \begin{align*}
    \E_{T \sim \T_r} \left[ \frac{R(A|T,r)}{\opt(T,r)} \right] \leq  \frac{e\cdot \ln\ln(\tau_2/\tau_1)}{\ln(\tau_2/\tau_1)}\,.
  \end{align*}
  The above bounds hold even for online algorithms that have prior knowledge of $\PP_r$ before time $t=1$.
\end{corollary}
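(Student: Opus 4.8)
The plan is to recognize that the numerical bound asserted for the \emph{expected} ratio $\E_{T\sim\T_r}[R(A|T,r)/\opt(T,r)]$ is precisely the bound that Theorem~\ref{thm:upper-bound} already establishes for the worst-case quantity $c^\ast \coloneqq \sup_A \min_{T\in[\tau_1,\tau_2]} R(A|T,r)/\opt(T,r)$. So the whole task reduces to exhibiting a horizon distribution $\T_r$ for which $\sup_A \E_{T\sim\T_r}[R(A|T,r)/\opt(T,r)] = c^\ast$, which is exactly the content of Proposition~\ref{prop:dist-upper-bound}. Once that equality is in hand, the corollary is immediate: the left-hand side is the \emph{largest} expected ratio achievable by any online algorithm, so every individual $A$ with knowledge of $\T_r$ satisfies $\E_{T\sim\T_r}[R(A|T,r)/\opt(T,r)] \le c^\ast$, and Theorem~\ref{thm:upper-bound} bounds $c^\ast$ by the stated expression. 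Thus the substantive step is the minimax/strong-duality construction of $\T_r$, which I sketch below.

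First I would reduce online algorithms to action sequences. Because $\PP_r$ deterministically serves the identical request $(f_r,I,\X)$ at every step and the only source of uncertainty is the horizon $T$ — which cannot influence actions taken before it is reached — a non-anticipating algorithm's action at time $t$ is a fixed number $x_t$, giving $R(A|T,r)=\sum_{t=1}^T f_r(x_t)$; randomization only lowers the reward by strict concavity of $f_r$ (as argued before Theorem~\ref{thm:upper-bound}), so I may restrict to deterministic $\{x_t\}$. Consequently $c^\ast$ equals the value of the program that maximizes $z$ subject to $z \le \frac{1}{\opt(T,r)}\sum_{t=1}^T f_r(x_t)$ for all $T\in[\tau_1,\tau_2]$, together with $\sum_{t=1}^{\tau_2} x_t \le B$ and $x_t\in\X$. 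This is a convex program, since each constraint function is concave in $\{x_t\}$ ($f_r$ concave) and linear in $z$, and feasible solutions are in bijection with online algorithms.

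Next I would invoke strong duality. The point $x_t=\tfrac12$, $z=0$ is strictly feasible for every horizon constraint, so Slater's condition holds and there exist nonnegative multipliers $\{p_T\}_{T=\tau_1}^{\tau_2}$ on the horizon constraints with no duality gap. Forming the Lagrangian and maximizing over the \emph{unrestricted} variable $z$ forces $\sum_{T} p_T = 1$ (otherwise the inner maximum is $+\infty$, contradicting finiteness of $c^\ast$); hence $\{p_T\}$ defines a probability distribution, which I take to be $\T_r$. The residual dualized program is $\max_{\{x_t\}}\ \sum_{T} p_T\, \frac{1}{\opt(T,r)}\sum_{t=1}^{T} f_r(x_t)$ over the same budget and action constraints, and by the identical algorithm$\leftrightarrow$action-sequence correspondence this equals $\sup_A \E_{T\sim\T_r}[R(A|T,r)/\opt(T,r)]$. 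Chaining the equalities yields $\sup_A \E_{T\sim\T_r}[R(A|T,r)/\opt(T,r)] = c^\ast$, i.e.\ Proposition~\ref{prop:dist-upper-bound}. Combining with Theorem~\ref{thm:upper-bound} proves the first display, and substituting $r = 1-1/\ln\ln(\tau_2/\tau_1)$ with $\tau_2/\tau_1 > e^e$ gives the second exactly as in the theorem.

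The hard part will be the duality bookkeeping that converts the minimax over (algorithms, horizons) into a clean convex program and back: I must justify the two-way correspondence between feasible $\{x_t\}$ and online algorithms, confirm that the optimal multipliers are nonnegative and normalize to a probability distribution via the $\sum_T p_T=1$ boundedness argument, and verify finiteness/attainment so that Slater's strong-duality conclusion applies. Everything downstream is the transfer of an already-proven numerical bound and is routine; the genuinely new work relative to Theorem~\ref{thm:upper-bound} is only the construction of $\T_r$ from the dual optimal solution.
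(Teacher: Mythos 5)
Your proposal is correct and takes essentially the same route as the paper: there, the corollary is obtained by combining Proposition~\ref{prop:dist-upper-bound} with Theorem~\ref{thm:upper-bound}, and your sketch of Proposition~\ref{prop:dist-upper-bound} (reduction of online algorithms to deterministic action sequences, the convex program with the ratio constraints, Slater's point $x_t = \tfrac{1}{2}$, $z = 0$, and forcing $\sum_T p_T = 1$ via the unrestricted variable $z$, with $\T_r$ read off from the optimal multipliers) reproduces the paper's argument step for step. Nothing further is needed.
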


\section{Proof of Proposition~\ref{prop:uncertainty-window-LP}}

\begin{proof}
	Consider a target consumption sequence $\lambda$. Set
	\begin{align*}
		y_{T,t} = \min \left\{ \min_{1 \leq j \leq m} \frac{\lambda_{t,j}}{\rho_{T,j}}, 1 \right\} \quad \text{ and } \quad z = \min_{T \in [\tau_1, \tau_2]} \frac{1}{T} \cdot \sum_{t=1}^T \min \left\{ \min_{1 \leq j \leq m} \frac{\lambda_{t,j}}{\rho_{T,j}}, 1 \right\}\,.
	\end{align*}
	Then, $(\lambda, z, y)$ is a feasible solution of the LP with objective value $\min_{T \in [\tau_1, \tau_2]} c(\vec\lambda, T)$. Hence, we get LHS  $\leq$ RHS. To prove LHS $\geq$ RHS, consider any feasible solution $(\lambda, z, y)$. Then, $\lambda$ is a target consumption sequence and
	\begin{align*}
		z \leq \min_{T \in [\tau_1, \tau_2]} \frac{1}{T} \cdot \sum_{t=1}^T y_{T,t} \leq \min_{T \in [\tau_1, \tau_2]} \frac{1}{T} \cdot \sum_{t=1}^T \min \left\{ \min_{1 \leq j \leq m} \frac{\lambda_{t,j}}{\rho_{T,j}}, 1 \right\}	 = \min_{T \in [\tau_1, \tau_2]} c(\vec\lambda, T)
	\end{align*}
	where the first inequality follows constraints
	\begin{align*}
		&z \leq \frac{1}{T} \sum_{t=1}^T y_{T,t} & \forall T \in [\tau_1, \tau_2] \,,
	\end{align*}
	 and the second inequality follows from constraints
	 \begin{align*}
	 	&y_{T,t} \leq \frac{\lambda_{t,j}}{\rho_{T,j}} &\forall j \in [m], T \in [\tau_1, \tau_2], t \in [T]\\
		&y_{T,t} \leq 1 &\forall T \in [\tau_1, \tau_2], t \in [T] \,.
	 \end{align*}
	 Therefore, we have LHS $\geq$ RHS, as required.
\end{proof}

\section{Proofs of Section~\ref{sec:advice}}

\subsection{Proof of Proposition~\ref{prop:advice-LP}}

\begin{proof}
	Consider a target consumption sequence $\lambda$ such that $\min_{T \in [\tau_1, \tau_2]} c(\vec\lambda, T) \geq \gamma$. Set
	\begin{align*}
		y_{T,t} = \min \left\{ \min_{1 \leq j \leq m} \frac{\lambda_{t,j}}{\rho_{T,j}}, 1 \right\}\,.
	\end{align*}
	Then, $(\lambda, y)$ is a feasible solution of the LP with objective value
	\begin{align*}
		\frac{1}{T_P} \sum_{t=1}^{T_P} y_{T_P, t} = \frac{1}{T_P} \sum_{t=1}^{T_P} \min \left\{ \min_{1 \leq j \leq m} \frac{\lambda_{t,j}}{\rho_{T_P,j}}, 1 \right\} = c(\vec\lambda, T_P)\,.
	\end{align*}
	Hence, we get LHS  $\leq$ RHS. 
	
	To prove LHS $\geq$ RHS, consider any feasible solution $(\lambda, y)$. Then, $\lambda$ is a target consumption sequence and we have
	\begin{align*}
		y_{T,t} \leq \min \left\{ \min_{1 \leq j \leq m} \frac{\lambda_{t,j}}{\rho_{T,j}}, 1 \right\}\,,
	\end{align*}
	for all $T \in [\tau_1, \tau_2], t \in [T]$, where the inequality follows from constraints
	 \begin{align*}
	 	&y_{T,t} \leq \frac{\lambda_{t,j}}{\rho_{T,j}} &\forall j \in [m], T \in [\tau_1, \tau_2], t \in [T]\\
		&y_{T,t} \leq 1 &\forall T \in [\tau_1, \tau_2], t \in [T]\,.
	 \end{align*}
	 Therefore, we get
	 \begin{align*}
	 	c(\vec\lambda, T_P) = \frac{1}{T_P} \sum_{t=1}^{T_P} \min \left\{ \min_{1 \leq j \leq m} \frac{\lambda_{t,j}}{\rho_{T_P,j}}, 1 \right\} = c(\vec\lambda, T_P) = \frac{1}{T_P} \sum_{t=1}^{T_P} y_{T_P, t}
	 \end{align*}
	 and
	 \begin{align*}
	 	c(\vec\lambda, T) = \frac{1}{T} \sum_{t=1}^{T} \min \left\{ \min_{1 \leq j \leq m} \frac{\lambda_{t,j}}{\rho_{T,j}}, 1 \right\} = c(\vec\lambda, T_P) \geq \frac{1}{T} \sum_{t=1}^{T} y_{T, t} \geq \gamma \,.
	 \end{align*}
	 Consequently, we have $\min_{T \in [\tau_1, \tau_2]} c(\vec\lambda, T) \geq \gamma$ and the objective of the LP is at most $c(\vec\lambda, T_P)$. Hence, LHS $\geq$ RHS as required.
\end{proof}

\subsection{Proof of Proposition~\ref{prop:simple-advice-seq}}

\begin{proof}[Proof of Proposition~\ref{prop:simple-advice-seq}]
	It is straightforward to see that $\vec\lambda$ satisfies the budget constraint:
\begin{align*}
  \sum_{t=1}^T \lambda_t &= \sum_{t=1}^{\tau_1} \frac{\alpha}{1 + \ln(\tau_2/\tau_1)} \cdot \frac{B}{\tau_1} + \sum_{t=\tau_1 + 1}^{\tau_2} \frac{\alpha}{1 + \ln(\tau_2/\tau_1)} \cdot \frac{B}{t} + \sum_{t=1}^{T_P} (1 - \alpha) \cdot \frac{B}{T_P}\\
  &= \frac{\alpha B}{1 + \ln(\tau_2/\tau_1)} \cdot \left(\frac{\tau_1}{\tau_1} + \sum_{t=\tau_1 + 1}^{\tau_2} \frac{1}{t} \right) + (1 - \alpha) \cdot B\\
  &\leq \frac{\alpha B}{1 + \ln(\tau_2/\tau_1)} \cdot (1 + \ln(\tau_2/\tau_1)) + (1 - \alpha) \cdot B\\
  &= B\,.
\end{align*}

Moreover, note that for any $T \in [\tau_1, \tau_2]$, we have
\begin{align*}
  c(\vec\lambda, T) = \frac{1}{T}\sum_{t=1}^T \min \left\{ \min_{1 \leq j \leq m} \frac{\lambda_{t,j}}{\rho_{T,j}}, 1 \right\} \geq  \frac{1}{T}\sum_{t=1}^T \frac{\alpha}{1 + \ln(\tau_2/\tau_1)} = \frac{\alpha}{1 + \ln(\tau_2/\tau_1)} \,,
\end{align*}
and
\begin{align*}
  c(\vec\lambda, T_P) = \frac{1}{T_P}\sum_{t=1}^{T_P} \min \left\{ \min_{1 \leq j \leq m} \frac{\lambda_{t,j}}{\rho_{T_P,j}}, 1 \right\} \geq \frac{1}{T_P}\sum_{t=1}^{T_P} 1 - \alpha + \frac{\alpha}{1 + \ln(\tau_2/\tau_1)} = 1 - \alpha + \frac{\alpha}{1 + \ln(\tau_2/\tau_1)} \,,
\end{align*}
where we have used the fact that $\rho_t \geq \rho_T$ for all $t \leq T$. Hence, we have shown that Algorithm~\ref{alg:dual-descent} with target sequence $\vec\lambda$ is $(\gamma - \epsilon)$-competitive, where $\gamma = \alpha \cdot (1 + \ln(\tau_2/\tau_1))^{-1}$, and $(1 - \alpha)$-consistent on prediction $T_P$. Since $\vec\lambda$ is one possible choice of the target consumption sequence, the proposition holds.
\end{proof}

\section{Proofs for Section~\ref{sec:faster-alg}}\label{appendix:faster-alg}

\subsection{Proof of Theorem~\ref{thm:faster-alg}}

\begin{proof}[Proof of Theorem~\ref{thm:faster-alg}]
	To simplify exposition, we define
	\begin{align*}
		a_T = \begin{cases}
			\beta &\text{if } T = T_P\\
			\gamma &\text{if } T \neq T_P
		\end{cases}
	\end{align*}
	Hence,
	\begin{align*}
		c(\vec\lambda', T_P) \geq \beta \quad \text{ and  } \min_{T \in [\tau_1, \tau_2]} c(\vec\lambda', T) \geq \gamma \quad \iff c(\vec\lambda', T) \geq a_T \quad \forall\ T \in [\tau_1, \tau_2]\,.
	\end{align*}
	
	We start by proving some important properties of Algorithm~\ref{alg:opt-target-sequence}. First, we show that
	\begin{align*}
	\frac{\lambda_{t,j}}{\rho_{T,j}} = \frac{\lambda_{t,k}}{\rho_{T,k}} \qquad \forall \quad j,k \in [m], t \in [T]
	\end{align*} 
	throughout the run of the algorithm. We do so via induction on each update of $\vec\lambda$ (see \eqref{eqn:opt-seq-update}). Initially, $\vec\lambda = 0$ so the statement holds trivially. Suppose it holds before the update. Observe that
	\begin{align*}
		\min\left\{\rho_{T,j} - \lambda_{t,j}, a_T \cdot B_j - \sum_{s=1}^T \lambda_{s,j} \right\} = \rho_{T,j} \min\left\{1 - \frac{\lambda_{t,j}}{\rho_{T,j}}, a_T \cdot T - \sum_{s=1}^T \frac{\lambda_{s,j}}{\rho_{T,j}} \right\} \,.
	\end{align*}
	Let $\vec\lambda'$ be the sequence after the update in \eqref{eqn:opt-seq-update}. Then, we have
	\begin{align*}
		\frac{\lambda'_{t,j}}{\rho_{T,j}} = \frac{\lambda_{t,j}}{\rho_{T,j}} +  \min\left\{1 - \frac{\lambda_{t,j}}{\rho_{T,j}}, a_T \cdot T - \sum_{s=1}^T \frac{\lambda_{s,j}}{\rho_{T,j}} \right\} \,.
	\end{align*}
	Since the RHS is the same for all $j \in [m]$ by the induction hypothesis, the statement holds after the update, thereby completing the induction step. Therefore, we have
	\begin{align}\label{eqn:inter-1-opt-seq}
		c(\vec\lambda^*, T) =  \frac{1}{T}\sum_{t=1}^T \min \left\{ \min_{1 \leq j \leq m} \frac{\lambda_{t,j}^*}{\rho_{T,j}}, 1 \right\} = \min_{1 \leq j \leq m} \frac{1}{T}\sum_{t=1}^T \min \left\{\frac{\lambda_{t,j}^*}{\rho_{T,j}}, 1 \right\} = \min_{1 \leq j \leq m} \frac{1}{B_j}\sum_{t=1}^T \min \left\{\lambda_{t,j}^*, \rho_{T,j} \right\}\,.
	\end{align}

	We prove an intermediate lemma that will prove useful later
	\begin{lemma}\label{lemma:inter-faster-alg}
		For each outer \textbf{For} loop counter $T$, the inner \textbf{For} loop always maintains $\lambda_{t,j} \leq \rho_{T,j}$ and one of the following holds at its termination:
		\begin{itemize}
		\item $\sum_{s=1}^T\lambda_{s,j} = a_T \cdot B_j$.
		\item $\sum_{s=1}^T\lambda_{s,j} \geq a_T \cdot B_j$ held before the first iteration and $\lambda_{t,j}$ was not modified during any of the iterations of the inner \textbf{For} loop.
	\end{itemize}
	\end{lemma}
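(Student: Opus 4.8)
The plan is to track two quantities through the inner \textbf{For} loop for a fixed value of the outer counter $T$: the per-slot value $\lambda_{t,j}$ and the running total $\sigma := \sum_{s=1}^{T} \lambda_{s,j}$. I would first dispatch the cap claim $\lambda_{t,j}\le\rho_{T,j}$ by a two-level induction. Across outer iterations the caps only relax: since $\rho_{T,j}=B_j/T$ is decreasing in $T$ and the outer loop runs $T=\tau_2,\tau_2-1,\dots,\tau_1$, we have $\rho_{T+1,j}\le\rho_{T,j}$, so the bound $\lambda_{t,j}\le\rho_{T+1,j}$ inherited from the previous outer iteration (or $\lambda_{t,j}=0$ in the base case $T=\tau_2$) already gives $\lambda_{t,j}\le\rho_{T,j}$ at the start of the inner loop. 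Within the inner loop the update \eqref{eqn:opt-seq-update} adds at most $(\rho_{T,j}-\lambda_{t,j})^+$ to $\lambda_{t,j}$, so starting from $\lambda_{t,j}\le\rho_{T,j}$ the value never exceeds $\rho_{T,j}$; this proves the first assertion.

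Next I would analyze $\sigma$. The crucial structural feature of the update is that the increment to $\lambda_{t,j}$, and hence to $\sigma$, is $\min\{\rho_{T,j}-\lambda_{t,j},\, a_T B_j - \sigma\}^+$, which is nonnegative, so $\sigma$ is nondecreasing, and is capped by $a_T B_j-\sigma$, so $\sigma$ never overshoots $a_T B_j$ from below. In particular, the moment $\sigma$ reaches $a_T B_j$ the term $a_T B_j-\sigma$ turns nonpositive, the positive part vanishes, and every remaining update in the inner loop is zero. This yields a dichotomy governed by the value $\sigma_0$ of $\sigma$ at the start of the inner loop. The case $\sigma_0\ge a_T B_j$ is immediate: here $a_T B_j-\sigma\le a_T B_j-\sigma_0\le 0$ at every step by monotonicity of $\sigma$, so every increment is zero, $\lambda_{t,j}$ is left untouched, and $\sigma$ stays at $\sigma_0\ge a_T B_j$, which is exactly the second alternative.

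The substantive case is $\sigma_0<a_T B_j$, where I must show $\sigma$ ends at \emph{exactly} $a_T B_j$, and I expect this to be the main obstacle. I would argue by contradiction, assuming $\sigma<a_T B_j$ still held at termination. Then $a_T B_j-\sigma>0$ throughout the loop, so at each index $t$ the binding term had to be the cap term $\rho_{T,j}-\lambda_{t,j}$ rather than $a_T B_j-\sigma$ (otherwise $\sigma$ would have been driven up to $a_T B_j$), forcing $\lambda_{t,j}=\rho_{T,j}$ after slot $t$ is processed, for every $t\le T$. Summing gives $\sigma=\sum_{t=1}^{T}\rho_{T,j}=T\cdot(B_j/T)=B_j$. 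But $a_T\in\{\beta,\gamma\}\subseteq[0,1]$, so $a_T B_j\le B_j=\sigma$, contradicting $\sigma<a_T B_j$. Hence $\sigma=a_T B_j$ at termination, the first alternative, completing the case analysis and the lemma. The only real content beyond bookkeeping is this capacity argument: it is the observation that the total available budget $T\rho_{T,j}=B_j$ always dominates the target $a_T B_j$ that guarantees the target is reachable whenever it has not yet been met.
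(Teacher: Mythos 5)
Your proof is correct and follows essentially the same route as the paper's: the dichotomy at each inner-loop update between the cap term $\rho_{T,j}-\lambda_{t,j}$ binding (forcing $\lambda_{t,j}=\rho_{T,j}$) and the target term binding (forcing $\sum_{s=1}^T\lambda_{s,j}=a_T\cdot B_j$), followed by the capacity contradiction $T\cdot\rho_{T,j}=B_j\ge a_T\cdot B_j$ when the target is assumed unmet at termination. Your write-up is somewhat more explicit than the paper's (the outer-loop induction for the cap claim and the monotonicity/no-overshoot bookkeeping for $\sigma$ are spelled out rather than left implicit), but the decomposition and the key argument are the same.
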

	
	\begin{proof}
		This is because, for each resource $j \in [m]$, exactly one of the following cases holds after each iteration of the inner \textbf{For} loop in which $\lambda_{t,j}$ was modified:
	\begin{itemize}
		\item $\lambda_{t,j} = \rho_{T,j}$ and $\sum_{s=1}^T\lambda_{s,j} < a_T \cdot B_j$
		\item $\sum_{s=1}^T\lambda_{s,j} = a_T \cdot B_j$
	\end{itemize}
	Now, suppose $\sum_{s=1}^T\lambda_{s,j} < a_T \cdot B_j$ at termination of the inner \textbf{For} loop. Then, $\lambda_{t,j} = \rho_{T,j}$ for all $t \in [T]$ and $a_T \cdot B_j - \sum_{s=1}^T\lambda_{s,j} \leq B_j - T \cdot \rho_{T,j} \leq 0$, which contradicts $\sum_{s=1}^T\lambda_{s,j} < a_T \cdot B_j$. Hence, the lemma holds.
	\end{proof}
	In both cases, at termination we have
	\begin{align}\label{eqn:inter-2-opt-seq}
		\sum_{s=1}^T\lambda_{s,j} \geq a_T \cdot B_j \quad \forall j \in [m] \,.
	\end{align}

	As we only ever increase $\vec\lambda$ in \eqref{eqn:opt-seq-update}, we get
	\begin{align*}
		\sum_{t=1}^{T} \min\{\lambda^*_{t,j},\rho_{T,j}\} \geq a_T \cdot B_j  \quad \forall\ T \in [\tau_1, \tau_2]
	\end{align*}
	for all $j \in [m]$. Therefore, $c(\vec\lambda^*, T) \geq a_T$ for all $T \in [\tau_1,\tau_2]$ by \eqref{eqn:inter-1-opt-seq}. Part (1) of the theorem follows as a direct consequence and we focus on part (2) in the remainder

	We are now ready to prove the theorem. We begin with the ``only if" direction. Suppose $\sum_{t=1}^{\tau_2} \lambda^*_t \leq B$. Since we have shown that $c(\vec\lambda^*, T) \geq a_T$ for all $T \in [\tau_1,\tau_2]$, this makes $\vec\lambda' = \vec\lambda^*$ the required target consumption sequence.
	
	For the other direction, assume that there exists a target consumption sequence $\vec\lambda^o$ (with $\sum_{t=1}^{\tau_2} \lambda^o_t \leq B$) which satisfies $c(\vec\lambda^o, T)\geq a_T$ for all $T \in [\tau_1,\tau_2]$. Let $\vec\lambda'$ be the target consumption sequence which minimizes $\sum_{k=1}^m \sum_{t=1}^T t \cdot \lambda^o_{t,j}$ among all such sequences, i.e.,
	\begin{align*}
		\vec\lambda' \in \argmin_{\vec\lambda^o} \quad  &\sum_{k=1}^m \sum_{t=1}^T t \cdot \lambda^o_{t,j}\\
		\text{s.t.} \quad &c(\vec\lambda^o, T) \geq a_T &\forall\ T \in [\tau_1,\tau_2]\\
		&\sum_{t=1}^{\tau_2} \lambda^o_t \leq B
	\end{align*}
	
	By \eqref{eqn:comp-ratio-restate}, we get
	\begin{align*}
		\sum_{t=1}^{T} \min\{\lambda'_{t,j},\rho_{T,j}\} \geq a_T \cdot B_j \quad \forall\ T \in [\tau_1, \tau_2], j\in [m] \,.
	\end{align*}
	To prove $\sum_{t=1}^{\tau_2} \lambda^*_t \leq B$, it suffices to show that $ \lambda^*_{t,j} \leq \lambda'_{t,j}$ for all $t\in [t],j \in [m]$. For contradiction, suppose the latter does not hold. In what follows, we will use $\vec\lambda^{(T^*)}$ to denote the value of $\vec\lambda$ at the end of the $T$-th iteration of the outer \textbf{For} loop.
	
	Let $T = T^*$ and $t = t^*$ be the outer and inner \textbf{For} loop counters respectively for the update \eqref{eqn:opt-seq-update} at the end of which $\lambda_{t^*,k} > \lambda'_{t^*,k}$ for some resource $k \in [m]$ for the first time during the run of Algorithm~\ref{alg:opt-target-sequence}. Since $c(\vec\lambda', T^*) \geq a_T$, \eqref{eqn:comp-ratio-restate} implies that $\sum_{t=1}^{T^*} \min \{\lambda_{t,k}', \rho_{T,k} \} \geq a_T \cdot B_k$. Since $\lambda_{t^*, k}$ had to be modified to get $\lambda_{t^*,k} > \lambda'_{t^*,k}$ for the first time, Lemma~\ref{lemma:inter-faster-alg} implies that the inner \textbf{For} loop will terminate with $\sum_{t=1}^{T^*} \min \left\{\lambda^{(T^*)}_{t,k}, \rho_{T,k} \right\} = a_T \cdot B_k$. Therefore, there must exist a $t^* < s \leq T^*$ such that $\lambda'_{s,k} > \lambda^{(T^*)}_{s,k}$ after the $T^*$-th iteration of the outer \textbf{For} loop.

	Now, pick $\nu < \min\left\{ \lambda'_{s,k} -\lambda^{(T^*)}_{s,k}, \lambda^{(T^*)}_{t^*,k} - \lambda_{t^*, k}'\right\}$ and define a new target consumption sequence $\vec\lambda''$ which is exactly the same as $\vec\lambda'$ except $\lambda''_{t^*,k} = \lambda'_{t^*,k} + \nu$ and $\lambda''_{s,k} = \lambda'_{s,k} - \nu$. Since $t^* <s$, we get
	\begin{align*}
		\sum_{j=1}^m \sum_{t=1}^T t \cdot \lambda''_{t,j} < \sum_{j=1}^m \sum_{t=1}^T t \cdot \lambda'_{t,j} \,.
	\end{align*}
	If we can show that $c(\vec\lambda'',T) \geq a_T$ for all $T \in [\tau_1,\tau_2]$, we will contradict the minimality of $\vec\lambda'$. To see this, consider the following cases
	\begin{itemize}
		\item $T > T^*$: $\lambda'' \geq \lambda^{(T)}$ by definition of $T^*,t^*$ and $\nu$. Hence, \eqref{eqn:inter-2-opt-seq} implies
		\begin{align*}
			\sum_{t=1}^{T} \min\{\lambda''_{t,j},\rho_{T,j}\} \geq \sum_{t=1}^{T} \min\{\lambda^{(T)}_{t,j},\rho_{T,j}\} \geq a_T \cdot B_j \quad \forall\  j\in [m]\,,
		\end{align*}
		and consequently $c(\vec\lambda'', T) \geq a_T$.

		\item $T \leq T^*$: Observe that $\lambda''_{t^*,k} \leq \lambda^{(T)}_{t^*,k} \leq \rho_{T^*,k} \leq \rho_{T,k}$. Recall that $\lambda''_{t^*,k} = \lambda'_{t^*,k} + \nu$ and $\lambda''_{s,k} = \lambda'_{s,k} - \nu$ where $t^* < s$, and $\lambda''_{t,j} = \lambda'_{t,j}$ otherwise. Therefore,
		\begin{align*}
			\sum_{t=1}^{T} \min\{\lambda''_{t,j},\rho_{T,j}\} \geq \sum_{t=1}^{T} \min\{\lambda'_{t,j},\rho_{T,j}\} \geq a_T \cdot B_j \quad \forall\  j\in [m]\,,
		\end{align*}
		and consequently $c(\vec\lambda'', T) \geq a_T$.
	\end{itemize} 
	Thus we have established the required contradiction, thereby completing the proof.
\end{proof}

\subsection{Binary Search Procedure}

We explain how to use Algorithm~\ref{alg:opt-target-sequence} to find an $\varepsilon$-approximate solution to the LP in Proposition~\ref{prop:advice-LP}. A similar procedure can be used to compute an $\varepsilon$-approximate solution to the LP in Proposition~\ref{prop:uncertainty-window-LP}.

Consider a required level of competitiveness $\gamma \geq 0$. Then, we can run binary search to find the highest consistency that can be achieved by any target consumption sequence which is $\gamma$ competitive as follows:
\begin{itemize}
	\item Initialize $\ell = 0$ and $u = 1$
	\item Set $\beta = (u + \ell)/2$. Run Algorithm~\ref{alg:opt-target-sequence}. If it returns TRUE, set $\ell = \beta$, otherwise set $u = \beta$. Repeat this step till $u - \ell \leq \varepsilon$. 
\end{itemize}

Let $\vec\lambda'$ be the optimal solution of the LP in Proposition~\ref{prop:advice-LP}, i.e.,
\begin{align*}
	\vec\lambda' \in \argmax_{\vec\lambda \geq 0} c(\vec\lambda, T_P) \quad \text{ s.t. } \min_{T \in [\tau_1, \tau_2]} c(\vec\lambda, T) \geq \gamma \text{ and } \sum_{t=1}^{\tau_2} \lambda_t \leq B \,.
\end{align*}

Then, part (2) of Theorem~\ref{thm:faster-alg} implies that Algorithm~\ref{alg:opt-target-sequence} returns TRUE if and only if $\beta \leq c(\vec\lambda', T_P)$. Consequently, $\ell \leq c(\vec\lambda', T_P) \leq u$ at all times during the run of the binary search procedure, which further implies that $\ell \geq c(\vec\lambda', T_P) - \varepsilon$ at termination. Let $\vec\lambda^*$ be the sequence computed by Algorithm~\ref{alg:opt-target-sequence} when given $\beta = \ell$. Then, Theorem~\ref{thm:faster-alg} implies that 
\begin{align*}
	c(\vec\lambda^*, T_P) \geq c(\vec\lambda^*, T_P) - \varepsilon, \quad \min_{T \in [\tau_1, \tau_2]} c(\vec\lambda^*, T) \geq \gamma \quad \text{ and } \sum_{t=1}^{\tau_2} \lambda^*_t \leq B \,,
\end{align*}
as required.

\end{document}